\newtheorem{theorem}{Theorem}
\newtheorem{lemma}{Lemma}[section]
\newtheorem{proposition}[lemma]{Proposition}
\newtheorem{remark}[lemma]{Remark}
\newtheorem{corollary}[lemma]{Corollary}
\newtheorem{condition}{Condition}
\def\square{\hbox{\vrule\vbox{\hrule\phantom{o}\hrule}\vrule}}
\newcommand{\be}{\begin{equation}}
\newcommand{\ee}{\end{equation}}
\newcommand{\ben}{\begin{equation*}}
\newcommand{\een}{\end{equation*}}
\newcommand{\til}[1]{\widetilde{#1}}
\numberwithin{equation}{section}
\newcommand{\N}{\mathbb{N}}
\newcommand{\Z}{\mathbb{Z}}
\newcommand{\R}{\mathbb{R}}
\newcommand{\C}{\mathbb{C}}
\newcommand{\cH}{{\mathcal H}}
\newcommand{\e}{\varepsilon}
\newcommand{\dl}{\delta}
\newcommand{\pphi}{\varphi}
\newcommand{\re}{{\rm Re}\hskip 1pt }
\newcommand{\im}{{\rm Im}\hskip 1pt }
\newcommand{\ord}{{\mathcal O}}
\newcommand{\ope}[1]{{\operatorname{#1}}}
\newcommand{\mc}[1]{{\mathcal{#1}}}
\newcommand{\qtext}[1]{\quad\text{#1 }\ }
\newcommand{\out}{{\sharp}}
\newcommand{\inc}{{\flat}}
\newcommand{\dir}{{\bullet}}
\newcommand{\supp}{{\ope{supp}\,}}
\numberwithin{equation}{section}
\begin{document}

\makeatletter
\@namedef{subjclassname@2020}{%
  2020 Mathematics Subject Classification}
\makeatother

\subjclass[2020]{81U24 (Primary) 47A70, 47D06, 81U26 (Secondary).}
\keywords{resonant tunneling effect; resonance expansion; scattering theory;  quantum walk.}
%\thanks{${}^*$ corresponding author}

\title[Resonant tunneling for quantum walks]{Resonant tunneling effect for quantum walks on directed graphs}
\author[Kenta Higuchi]{Kenta Higuchi}
\address[Kenta Higuchi]{Faculty of Education, Gifu University/ 
1-1 Yanagido, Gifu, 501-1193, Japan}
\email{higuchi.kenta.b7@f.gifu-u.ac.jp}

%%%%%%%%%%%%%%%%%%%%%%%%%%%%%%%%%%%%%%%%%%%%

\maketitle

\begin{abstract}
Quantum walks that depend smoothly on a small parameter $\varepsilon\ge0$ are considered on directed graphs.
The asymptotic behavior of the scattering matrix of the quantum walk as $\varepsilon\to+0$ is investigated.
It is shown that, in this limit, the scattering matrix does not converge to that for $\varepsilon=0$ at points in the essential spectrum (the unit circle) that are asymptotically approached by a quantum resonance.
Furthermore, a phenomenon resembling and extending the resonant tunneling effect is observed by analyzing this discrepancy through resonant states.
\end{abstract}
%{\it Keywords:} resonance; resonance expansion; quantum walk.
%\vskip 0.5cm
%{\it 2020 Mathematics Subject Classification:} 47A40; 47A70; 47D06; 47N50; 60F05; 81U24.

%===============================		Introduction		==============================================
\section{Introduction}
\subsection{A question concerning the resonant tunneling effect}
The resonant tunneling effect is one of the hallmark phenomena in quantum mechanics. 
It is a phenomenon in which the probability of a quantum particle with resonant energy (or frequency) tunneling through two potential barriers becomes greater than the probability of tunneling through a single potential barrier under some symmetry condition. 
Furthermore, the probability for double barrier attains almost one despite a vanishingly small probability for single barrier.
This phenomenon arises from the constructive interference of the quantum wave functions in the region between the two barriers.

Historically, the resonant tunneling effect was first suggested by a continuation of WKB solutions of the one-dimensional Schr\"odinger equation \cite{Bo}. Then it was experimentally demonstrated in \cite{CET}(see also \cite{Io,TE}), and has since been widely exploited in advanced electronic devices. 
The resonant tunneling effect in the double barrier setting is considered also for the Schr\"odinger operators on a two-dimensional unbounded domain in the context of waveguides \cite{BNPS} and the discrete-time quantum walk \cite{MMOS}.

%In this manuscript, we consider the following problem:
The purpose of this manuscript is to answer the following question:\\
{\underline{\textbf{Question}} What are the underlying factors responsible for the resonant tunneling effect?}\\
It is known for the double barrier problem of one-dimensional Schr\"odinger operators that the resonant tunneling energy is close to a resonance.
Conversely, it is also known that the existence of resonance near the real line is not sufficient for observing the resonant tunneling effect.
In the double barrier problem, there exists a resonance exponentially (with respect to the semiclassical parameter derived from the Planck constant) close to each Dirichlet eigenvalue of the well between two potential barriers.
The transmission probability attains its local maxima near each Dirichlet eigenvalue.
However, the local maxima is still exponentially small when the ``size\footnote{The size of each potential barrier is expressed by the Agmon distance.}" of two barriers are different.
Therefore, a symmetry is required for the resonant tunneling effect in the double barrier problem.

A similar fact is also known for the double barrier problem for the quantum walks on the line \cite{MMOS, KKMS}.
We will review and improve these results in Section~\ref{sec:line}.
We will see that a symmetry for two barriers is required to observe the resonant tunneling effect also for this setting.
With these examples, one may naturally be led to the hypothesis: 
%\smallskip
%\\\underline{Hypothesis}: 
\textit{Some kind of symmetry of the equation is necessary to observe the resonant tunneling effect.}
However, we will also see that, in the triple barrier problem, the resonant tunneling effect can be observed \textbf{even if the barriers are not put symmetrically}.
This negotiates the above hypothesis.
In this manuscript, we show an \textbf{equivalent condition for observing the resonant tunneling effect} for a class of quantum walks (Corollary~\ref{cor:res-tun}).
The condition is not on a symmetry of the equation but on that of the \textbf{resonant state}.
Moreover, this applies to a larger class of quantum walks on directed graphs.

\subsection{Brief introduction of the setting and results}
We study quantum walks on directed graphs. 
The precise definition will be given later in Section~\ref{Sec:setting-results}. We provide a brief explanation here.
For a directed graph $(V,A)$, a quantum walk is defined by a unitary operator on the Hilbert space $\cH = \ell^2(A, \mathbb{C})$.
We impose on the unitary operator $U$ that, for any $\psi \in \cH$, the value of $U\psi$ at a directed edge $a \in A$ depends only on the values of $\psi$ at directed edges whose terminal vertex coincides with the initial vertex of $a$ (due to this condition, the directed graph $(V,A)$ must be \textit{balanced} in the sense of Subsection~\ref{sec:def-qw}). 
We assume that the directed graph $(V,A)$ is obtained by adding at most finitely many vertices and directed edges as a perturbation to a ``free graph." 
Here, a ``free graph" consists of $N\in \mathbb{N} \setminus \{0\}$ copies of the directed graph $(\mathbb{Z}, \{(j, j+1);\, j \in \mathbb{Z} \})$. 
For each copy of $(\mathbb{Z}, \{(j, j+1);\, j \in \mathbb{Z} \})$ of the free graph, the segment extending from $-\infty$ to the first intersection with the perturbed part is called the \textbf{incoming tail}, and the segment from the last intersection with the perturbed part to $+\infty$ is called the \textbf{outgoing tail}.

The unitary operator $U(\varepsilon)$ defining the quantum walk is assumed to \textbf{depend smoothly on $\varepsilon$} (Condition~\ref{cond:pert}), and furthermore, for $\varepsilon = 0$, the action of $U(0)$ on the free graph and the perturbed part are assumed to be completely independent \eqref{eq:free-motion-e=0}. 
That is, any function supported on the free graph remains supported only on the free graph after the application of $U(0)$, and vice versa.
Reflecting this dynamics, the spectrum of $U(0)$ restricted onto the free graph consists only of absolutely continuous spectrum $\mathbb{S}^1$ while that onto the perturbed part consists of discrete eigenvalues of modulus one.
We assume that each one of thes eigenvalues is simple (Condition~\ref{cond:pert}).
The dynamics also implies that the scattering matrix $\Sigma(0)=\Sigma(0,z)$ for $U(0)$ is diagonal \eqref{eq:scatt-e=0}.

This study focuses on investigating the \textbf{discrepancy} between this diagonal scattering matrix $\Sigma(0)$ and the scattering matrix $\Sigma(\e)=\Sigma(\e,z)$ $(z\in\mathbb{S}^1=\{z\in\C;\,|z|=1\})$ for $U(\varepsilon)$ in the limit $\varepsilon \to +0$:
\ben
\Sigma(\e,z)-\Sigma(0,z).
\een
Here, the scattering matrix $\Sigma(\e,z)$ is an $N \times N$ unitary matrix, which maps the vector of probability amplitudes of a generalized eigenfunction $\pphi\in\ell^\infty(A;\C)$ with $(U(\e)-z)\pphi=0$ on the incoming tails into that on the outgoing tails.
Note that, $z$ denotes the oscillation of each generalized eigenfunction in the time-evolution: $U(\e)^t\pphi=z^t\pphi=e^{it(\ope{arg} z)}\pphi$ $(t\in\Z)$.
Since the operator $U(\varepsilon)$ depends smoothly on the parameter $\varepsilon$, it may seem natural to expect that the scattering matrix $\Sigma(\e)$ for $U(\varepsilon)$ asymptotically approaches that of $U(0)$ (i.e. $\Sigma(0)$) as $\varepsilon \to +0$. 
However, it turns out that there exist exceptional oscillations $z$, those near \textbf{resonances} inside the unit circle $\mathbb{S}^1$, where such asymptotics fail. 

A resonance is a complex number $\lambda\in\C$ such that there exists a solution $\pphi$ (which is referred to as a \textbf{resonant state}) to $(U(\e)-\lambda)\pphi=0$ vanishing on the incoming tails (this definition includes the eigenvalues on the unit circle in the resonances).
In this setting, there exists a resonance of $U(\e)$ in $\ord(\e)$-neighborhood of each eigenvalue of $U(0)$ (see \cite{Kato} and Proposition~\ref{prop:matrix-reduction}).

It is also found that the discrepancy $\Sigma(\e,z)-\Sigma(0,z)$ arising in this case can be described in terms of the closest resonance to $z$ and associated (incoming and outgoing) resonant states  (Theorem~\ref{thm:non-resonant}). 
Furthermore, the operator norm of $\Sigma(\e,z)-\Sigma(0,z)$ is found to be almost $2$ at the closest oscillation $z\in\mathbb{S}^1$ to each resonance \eqref{eq:esti-M-below}.
Note that the value $2$ is the maximum as the operator norm of the difference between two unitary operators.

As a corollary of the results described above, it has also been revealed that a phenomenon %analogous to 
which can be regarded as \textbf{a generalization of the resonant tunneling effect} occurs at oscillations close to resonances (Corollary~\ref{cor:res-tun}). 
For $\varepsilon = 0$, a wave incident on an incoming tail is observed with the same amplitude in the corresponding outgoing tail since $\Sigma(0)$ is diagonal. 
However, in the limit $\varepsilon \to +0$, this relationship breaks down significantly near each resonance inside the unit circle under a symmetry condition on the resonant state. 
More specifically, the scattered wave observed in the outgoing tails corresponding to the incoming tails where the incident wave is supported \textbf{asymptotically vanishes}, while a \textbf{definite amount of scattered wave appears} in the outgoing tails corresponding to the incoming tails where the incident wave is absent. 
In particular, when $N=2$, the scattering matrix is asymptotic to an anti-diagonal matrix.
This is nothing but the resonant tunneling effect.
% the phenomenon that we regard as \textbf{a generalization of the resonant tunneling effect}.

\subsection{Related works on quantum walks}
The main subjects, the scattering matrix and resonances, of this study are mathematical terms of scattering theory.
The analysis of quantum walks using scattering theory originates from the work of Feldman and Hillery \cite{FeHi1}\footnote{There is an earlier study employing scattering theory for continuous-time quantum walks on trees also exists \cite{FaGu}
There is also a study of an abstract scattering theory for a class of unitary operators including certain class of quantum walks \cite{KaKu}}.
Their setting is similar to that in our study, that is, they introduced a quantum walk on balanced directed graphs with tails.
They pointed out motivations for using a scattering theoretical approach in the context of quantum algorithms.
Since then, extensive research has been conducted on scattering theory for quantum walks in different settings (e.g., \cite{MMOS,RST,Su}).
%We remark that there are some studies of scattering theory for broader classes of unitary operators \cite{KaKu,Ti}.
%Since then, scattering theory for quantum walks has been studied in various settings by different researchers.

Several studies have focused on the analysis of scattering matrices, including \cite{HiSe,KKMS,Mo,MMOS}. 
%As mentioned earlier, 
For quantum walks on the line, the phenomenon of resonant tunneling in the double barrier case is investigated in \cite{MMOS}. 
In \cite{KKMS}, it was demonstrated that the $(j,k)$-entry of the $2\times2$ scattering matrix of a quantum walk on the line can be described by summing the probability amplitudes corresponding to every path incoming from $(-1)^k\infty$ and outgoing to $(-1)^{j-1}\infty$. 
The resonant tunneling effect in the double barrier case was revisited there. 
%Remark that our setting includes the double barrier setting as a particular example (see the case with $N=2$ of the example discussed in Subsection~\ref{sec:model-cycle}).
They however did not mention resonances of quantum walks.%mention any connection between the resonant tunneling effect and resonances.

Behavior of the scattering matrix for quantum walks on undirected graphs with tails, especially the Grover walk, has been studied intensively in recent years \cite{HSS,HiSe}.
The unitary operator of the Grover walk is automatically determined from the undirected graph on which it acts,  and the results reflect geometric features of the graph.
Remark that our setting includes the Grover walks as a particular case.
Furthermore, the ``comfortability" of quantum walks, which is the magnitude of the generalized eigenfunctions on the perturbed part, has also been studied for Grover walks \cite{AKMS,HSS}. %,HiSe2}.
This quantity is also related to the resonances (see Subsection~\ref{sec:comf}).

On the other hand, despite the significant role that  resonances play in scattering theory (see, e.g., the textbook \cite{DyZw} and the survey \cite{Zw17}), they have been introduced to quantum walks  relatively recently \cite{HMS}. 
In that study, a one-dimensional discrete-time quantum walk with a finite rank perturbation is considered.
Resonances were defined as poles of the analytically continued resolvent operator, and a resonance expansion of the time evolution was established. 
Then in \cite{HiMo}, the complex translation method was employed to deform the essential spectrum of the operator, allowing resonances to be analyzed as eigenvalues of the deformed operator. 
This approach was used to demonstrate how eigenvalues of an unperturbed quantum walk transform into resonances in a semi-bound state model of a two-dimensional quantum walk.

\subsection{Scattering theory and the resonant tunneling effect in other context}
Mathematically, the resonant tunneling effect can be formulated in terms of the scattering matrix and resonances.  
Both of them are key topics in the scattering theory and have been extensively studied in various settings.
Furthermore, the resonant tunneling effect is clear when a resonance approaches the essential spectrum.
It motivates us to consider asymptotic problem.
Here, we introduce some works on the scattering theory and asymptotic analysis in the contexts of Schr\"odinger operators and quantum graphs.

\subsubsection{Semiclassical Schr\"odinger operators}
%\noindent\underline{1. Schr\"odinger operators}
Semiclassical analysis is one of the primary approaches for investigating Schr\"odinger operators. 
In this framework, the behavior of the operator is studied in the semiclassical limit $h\to+0$ by taking the Planck constant (or its variant obtained by dividing by the square root of the mass) as a small positive parameter $h$. 
As stated in Bohr's correspondence principle, classical physical quantities emerge in this limit.

The tunneling effect is a typical phenomenon of quantum mechanics, in which a particle penetrates a potential barrier. 
The transmission probability is computed via the Schr\"odinger equation. 
It is known to decay exponentially as $h\to+0$. 
However, in one-dimensional problems with two potential barriers, quantum resonances appear near the essential spectrum (the real axis) reflecting metastable states between the barriers. 
This is a type of shape resonance, and their influence gives rise to the resonant tunneling effect \cite{Bo}.

Nakamura \cite{Na1,Na2} analyzed the asymptotic behavior of the scattering matrix in the semiclassical limit for a multidimensional model of shape resonances.
He compared it with the scattering matrix of a modified model\footnote{He eliminated the metastable states by putting the Dirichlet condition on a sphere included in classically forbidden region.} from which the shape resonances had been removed. 
According to his study, the difference between the two scattering matrices is exponentially small at energies (points on the real axis) where no resonance is present \cite{Na1}, while near a resonance, the difference is primarily governed by the resonance and associated incoming and outgoing resonant states \cite{Na2}. 
At this point, Theorem~\ref{thm:non-resonant} in our manuscript shares similarities with these results, although the term ``resonant tunneling effect" is not explicitly used in its statement.

%These results clearly demonstrate how the asymptotic behavior of the scattering matrix reflects the presence of resonances near the real axis.

%\noindent\underline{2. Matrix Schr\"odinger operators}
\subsubsection{Semiclassical matrix Schr\"odinger operators}
Schr\"odinger operators with Hermitian matrix-valued potentials, known as matrix Schr\"odinger operators, have attracted considerable attention in quantum chemistry. 
The analysis of matrix Schr\"odinger operators requires the simultaneous treatment of multiple classical dynamical systems, where each one corresponds to one of the eigenvalues of the matrix potential.
In \cite{AFH1}, a directed graph was introduced to study the asymptotic distribution of resonances for one-dimensional matrix Schr\"odinger operators: vertices represent the intersections of classical trajectories from different Hamiltonian systems, and directed edges correspond to each connected component of the classical trajectories divided by these intersections. 
Similar approaches have also been applied in investigations of eigenvalue splittings \cite{AsFu} and scattering matrices \cite{ABA}.

Furthermore, a model exhibiting resonant transmission and reflection near quantum resonances was reported by the author \cite{Hi}. 
This manuscript begins by studying quantum walks, as a first step toward explaining and generalizing the model, 
%To explain and generalize this model, the author has chosen to begin by addressing quantum walks in this manuscript.
since quantum walks provide a simpler structure. 
A discrete model on a directed graph, analogous to those used in the studies mentioned above, is then employed to describe and extend the phenomenon.

%\noindent
%\underline{3. Quantum graphs}
\subsubsection{Quantum graphs}
A well-established correspondence exists between quantum walks on undirected graphs and quantum graphs \cite{HKSS}. 
Scattering theory for quantum graphs has been studied earlier than that for quantum walks (see e.g., the book \cite{BeKu}). 
In the scattering theory of quantum walks, one considers infinite sets of edges and vertices, referred to as  ``tails," whereas in quantum graphs, the analogous structure consists of semi-infinite edges known as ``leads." 
Furthermore, the phenomenon in which embedded eigenvalues transition into quantum resonances under perturbations has also been investigated \cite{ExLi1}. 
Although we could not find studies specifically on resonant tunneling, some researches have been conducted on the relationship between quantum resonances and the poles of the scattering matrix, as well as on the asymptotic distribution of resonances (see \cite{Lip} and references therein).

%\noindent
%\underline{4. Waveguides}
\subsubsection{Waveguides}
Waveguides have attracted attention as physical systems for studying scattering theory, and several mathematical models have been proposed, including the following:  

\noindent
1. \textit{Schr\"odinger operators with the Dirichlet boundary condition imposed on the side walls of the waveguide.} 
In this case, by the separation of variables into the transversal and longitudinal directions, the analysis of the operator is reduced to the one-dimensional problem with respect to the longitudinal direction.
Since the Dirichlet eigenvalues in the transversal direction increase as the cross-section narrows, narrow parts behave like potential barriers  (see the textbook \cite{BNPS} and references therein).
The resonant tunneling effect is investigated for this model. 
They regard the minimum cross-sectional area at the junction between the infinite and bounded regions as a parameter $\varepsilon$. 
When $\varepsilon = 0$, the system is completely disconnected, and the scattering matrix becomes a diagonal matrix.  
However, in the limit as $\varepsilon \to +0$, certain energy levels emerge where the diagonal components of the scattering matrix converge to zero due to the resonant tunneling effect.  

\noindent
2. \textit{Schr\"odinger operators on quantum graphs that incorporate the curvature of waveguide.}
This model arises in the limit where the cross-sectional area of the waveguide in the above model %with the Dirichlet boundary condition) 
approaches zero (see \cite{DuEx}, \cite[Section 7.5]{BeKu}).  

\noindent
3. \textit{Schr\"odinger operators on manifolds with asymptotically cylindrical ends.} 
In this setting, to define the scattering matrix is already difficult.
There are many studies related to this problem (e.g., \cite{Ch,ChDa,IKL}).

\subsection{Method and construction of this manuscript}
The setting and main results are introduced in Section~\ref{Sec:setting-results}. 
In Subsection~\ref{sec:def-qw}, we make precise the definition of quantum walks on balanced directed graphs with tails.
In Subsection~\ref{sec:def-res-scatt} we introduce the resonances and the scattering matrix with some basic properties which will be proven in Subsection~\ref{sec:reduce-to-matrix}.
The main theorem and its corollary are stated in Subsection~\ref{sec:results}.
Theorem~\ref{thm:non-resonant} states that the discrepancy $\Sigma(\e,z)-\Sigma(0,z)$ is of finite rank and written in terms of resonances and associated outgoing and incoming resonant states. 
It also shows that this discrepancy is small of $\ord(\e)$ when $z$ is not an eigenvalue of the non-perturbed operator $U(0)$, while its operator norm attains almost two near each resonance inside the unit circle.
Corollary~\ref{cor:res-tun} gives the condition on the resonant state to occur a generalized version of the resonant tunneling effect.
We also provide a result (Theorem~\ref{thm:comfortability}) on a relation between the comfortability and resonances in Subsection~\ref{sec:comf}.

Section~\ref{sec:line} is devoted to simple examples of quantum walks on the line.
Computations based on the transfer matrices are omitted there, and written in Appendix~\ref{app:on-the-line}.
The double barrier problem is revisited in Subsection~\ref{sec:DB}.
We see that a slight change of the symmetry of two barriers may completely break the resonant tunneling effect.
The triple barrier problem is considered in Subsection~\ref{sec:TB}.
We give an example where the resonant tunneling effect takes place with three non-symmetric  barriers.

In Section~\ref{sec:examples}, two concrete examples of resonant tunneling are given.
One given in Subsection~\ref{sec:ex-Sch} is motivated by the study of matrix Schr\"odinger operators \cite{AFH1}, the other one given in Subsection~\ref{sec:model-cycle} is very simple but many types of the ``generalized" resonant tunneling effect stated in Corollary~\ref{cor:res-tun} take place on this model.
The resonant tunneling effect for each model is given by the formulas \eqref{eq:res-tun-ms}, \eqref{eq:res-tun-c1}, \eqref{eq:res-tun-c2}, \eqref{eq:res-tun-c3}, and \eqref{eq:res-tun-c4}, and is illustrated in Figures~\ref{Fig-res-tun-ms} and \ref{Fig-res-tun-cycle}.

In Section~\ref{sec:S-matrix-res}, an expression of the scattering matrix in terms of resonant states is given as Theorem~\ref{thm:res-exp}.
This expression plays an important role in the proof of Theorem~\ref{thm:non-resonant}.
In Subsection~\ref{sec:reduce-to-matrix}, we show some basic properties of resonances and the scattering matrix in terms of a finite rank, non-normal  operator concerning the perturbed part of the graph.
The proof given in Subsection~\ref{sec:prf-res-exp} of Theorem~\ref{thm:non-resonant} relies on the general (and elementary) theory of matrices.

Section~\ref{sec:prf-main} is devoted to the proof of the main results stated in Subsections~\ref{sec:results} and \ref{sec:comf}.
Theorem~\ref{thm:non-resonant} is proven by using the perturbation theory for matrices (see e.g., \cite{Kato}).
We also used the reduced resolvent operator which corresponds to the resolvent operator restricted onto the absolutely continuous subspace of $U(\e)$.
Corollary~\ref{cor:res-tun} follows essentially from the condition of the equality of the triangular inequality.
Remember that the operator norm of the discrepancy $\Sigma(\e,z)-\Sigma(0,z)$ reaches almost 2, which is the maximum as a difference between two unitary operators.
Theorem~\ref{thm:comfortability} is shown in a similar argument as that for Theorem~\ref{thm:non-resonant}.

\section{Setting and main results}\label{Sec:setting-results}
In this section, we state our main results, Theorem~\ref{thm:non-resonant} and Corollary~\ref{cor:res-tun}.
To provide a precise statement, we first define a quantum walk on balanced directed graphs with tails in Subsection~\ref{sec:def-qw}.
We then introduce the definition and basic properties of resonances and the scattering matrix of such a quantum walk in Subsection~\ref{sec:def-res-scatt}.
We state our main results on the asymptotic behavior of the scattering matrix in Subsection~\ref{sec:results}.
The dependence of the quantum walk $U(\e)$ on the small parameter $\e\ge0$ and a condition on the free quantum walk $U(0)$ are also made precise in this subsection.
A result concerning the comfortability is stated in Subsection~\ref{sec:comf}.

\begin{comment}
\subsection*{Notations}
We here collect our notations.
\begin{itemize}
\item $\C^B$ denotes the set of $\C$-valued functions defined on a set $B$.
Note that $\C^B=\ell^2(B)$ holds when $B$ is a finite set.
\item $\chi(B)$ and $\chi(B)^*$ for a finite set $B$ denote the projection onto $\C^B$ acting on $\C^{B'}$ for some $B'\supset B$ and its adjoint, that is, for $\psi\in\C^{B'}$ and $\pphi\in\C^B$, 
\ben
\chi(B)\psi(b)=\psi(b)\quad(b\in B),\qquad
\chi(B)^*\pphi(b)=\left\{
\begin{aligned}
&\pphi(b)&&(b\in B)\\
&0&&(b\in B'\setminus B).
\end{aligned}\right.
%\qquad
%(\psi\in\ell^2({B'}),\ \pphi\in\ell^2(B)).
\een
\item $\mathbbm{1}_B=\chi(B)^*\chi(B)$ denotes (the multiplication operator of) the characteristic function of $B$:
\ben
\mathbbm{1}_B(b)=\left\{
\begin{aligned}
&1&&(b\in B)\\
&0&&(b\notin B).
\end{aligned}\right.
\een
\end{itemize}
\end{comment}

\subsection{Definition of quantum walks on balanced directed graphs with tails}\label{sec:def-qw}
\begin{figure}
\centering
\includegraphics[bb=0 0 266 179, width=7cm]{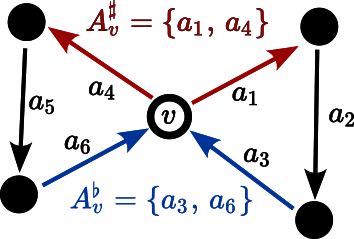}
\caption{An example of a balanced directed graph}
\label{Fig1}
\end{figure}
A directed graph is a pair of two sets $V$ and $A$ called \textit{vertex set} and \textit{arc set} equipped with the two maps $A\ni a\mapsto a^-\in V$ and $A\ni a\mapsto a^+\in V$.
For each arc $a\in A$, the vertices $a^-$ and $a^+$ are called \textit{origin} and \textit{terminus} of $a$.
To define a unitary time-evolution, we assume that the directed graph $(V,A)$ is \textit{balanced} (also called \textit{Eulerian}), that is, for each $v\in V$, the set $A_v^\inc:=\{a\in A;\,a^+=v\}$ of incoming arcs to $v$ and the set $A_v^\out:=\{a\in A;\,a^-=v\}$ of outgoing arcs from $v$ have the same (finite) number of elements (see Figure~\ref{Fig1}). 
We call this number \textit{degree} of $v$, and denote it by $\deg v$.
Then the arc set $A$ is decomposed into a disjoint union in two ways:
\be\label{eq:arc-decomp}
A=\bigsqcup_{v\in V}A_v^\inc=\bigsqcup_{v\in V}A_v^\out.
\ee

The time-evolution operator $U$ of quantum walk is defined on $\C^A$,
whose restriction onto the Hilbert space $\cH=\ell^2(A)\subset \C^A$ is unitary.
%It %(or its restriction onto $\ell^2(A)$) 
%is unitary on the Hilbert space $\cH=\ell^2(A)\subset \C^A$.
In the same way as the decomposition \eqref{eq:arc-decomp} of the arc set, the function space $\C^A$ is decomposed as 
\ben
\C^A=\bigoplus_{v\in V}\C^{A_v^\inc}=\bigoplus_{v\in V}\C^{A_v^\out}.
\een
%Here and after, let $\C^B$ denote the set of $\C$-valued functions defined on a set $B$.
The operator $U$ is defined for a family of isometric operators $\{U_v:\C^{A_v^\inc}\to \C^{A_v^\out};\,v\in V\}$ by
\ben
U=\sum_{v\in V}\chi(A_v^\out)^*U_v\chi(A_v^\inc),
\een
where $\chi(B):\C^A\to\C^{B}$ for $B\subset A$ is the restriction operator onto $B$, and its adjoint $\chi(B)^*$ is the extension by zeros:
\ben
\chi(B)\psi(a)=\psi(a)\quad(a\in B),\qquad
\chi(B)^*u(a)=\left\{
\begin{aligned}
&u(a)&&a\in B\\
&0&&\text{otherwise},
\end{aligned}\right.\quad(\psi\in\C^A,\  u\in\C^{B}).
\een
Remark that $\chi(B)\chi(B)^*:\C^B\to \C^B$ is the identity operator, and that $\chi(B)^*\chi(B):\C^A\to\C^A$ is the multiplication operator of the characteristic function $\mathbbm{1}_B:A\to\{0,1\}$ of the set $B$.
By definition, $U\psi(a)$ for each $a\in A$ is given by
\ben
U\psi(a)=\sum_{b\in A_v^\inc}(U_v)_{a,b}\psi(b),
\een
where we put $v=a^-$, and $(U_v)_{a,b}=(U_v\dl_b,\dl_a)$ is the $(a,b)$-entry of the $\deg v\times \deg v$-representation matrix  of $U_v$ with the bases $\{\dl_b;\,b\in A_v^\dir\}$ of $\C^{A_v^\dir}$ $(\dir=\inc,\out)$.
Here and after, $\dl_b$ denotes the characteristic function of $\{b\}$.
Note that the value at each arc of $U\psi$ depends only on the values of $\psi$ at a finite number of arcs.
Thus, the definition of $U$ is also valid for functions on $A$ which does not belong to $\cH$.

\begin{figure}
\centering
\includegraphics[bb=0 0 665 458, width=8cm]{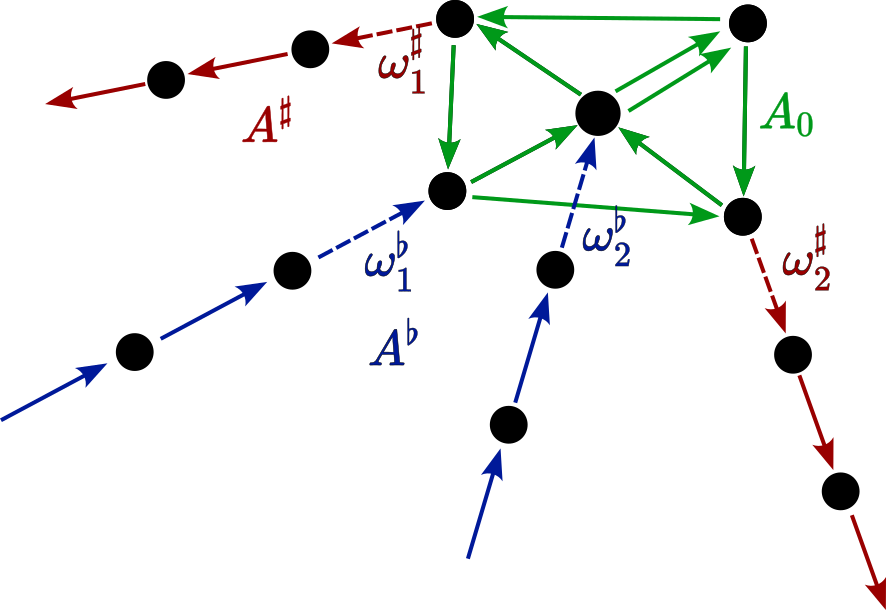}
\caption{An example of a balanced directed graph with tails}
\label{Fig2}
\end{figure}
The scattering theory is developed for operators with absolutely continuous spectrum.
Accordingly, we consider quantum walks on infinite graphs as follows.
A balanced directed graph $(V,A)$ is assumed to be decomposed into three parts:
\ben
V=V_0\cup V^\inc\cup V^\out,\quad
A=A_0\sqcup A^\inc\sqcup A^\out.
\een
Each one of $(V_0,A_0)$, $(V^\inc,A^\inc)$, $(V^\out,A^\out)$ forms a graph.
The graph $(V_0,A_0)$ is finite while the others are composed of $N\in\N\setminus\{0\}$-copies of simple, semi-infinite graphs called \textit{tails}.
More precisely, one has the decomposition
\begin{align*}
(V^\dir, A^\dir)=\bigcup_{n=1}^N(V_n^\dir, A_n^\dir)\quad(\dir=\inc,\out),
\end{align*}
where for each $n$, $(V_n^\inc,A_n^\inc)$ is called an \textit{incoming tail} while $(V_n^\out,A_n^\out)$ is called an \textit{outgoing tail}.
The vertex set and the arc set of each tail are given by % $(V_k^\inc,A_k^\inc)$\textit{outgoing tail}
\ben
V_n^\dir=\{v_{n,j}^\dir;\,j\in\N\},\quad
A_n^\inc=\{a_{n,j}^\inc;\,j\in\N\},\quad
A_n^\out=\{a_{n,j}^\out;\,j\in\N\setminus\{0\}\},
\een
with $a_{n,j}^\inc=(v_{n,j+1}^\inc,v_{n,j}^\inc)$ and $a_{n,j}^\out=(v_{n,j-1}^\out,v_{n,j}^\out)$ (in the notation $a=(a^-,a^+)$).
For each $(n,\dir)\in\{1,2,\ldots,N\}\times\{\inc,\out\}$, the vertex $v_{n,0}^\dir$ is identified with a vertex of $V_0$, and is the unique intersection: $V_n^\dir\cap V_0=\{v_{n,0}^\dir\}$.
Put $\omega_n^\inc:=a_{n,0}^\inc$, $\omega_n^\out:=a_{n,1}^\out$, and
\be
\Omega^\dir:=\{\omega_n^\dir;\,n=1,2,\ldots,N\}\quad(\dir=\inc,\out).
%\Omega^\inc:=\{a_{n,0}^\inc;\,n=1,2,\ldots,N\},\quad
%\Omega^\out:=\{a_{n,1}^\out;\,n=1,2,\ldots,N\}.
\ee

The dynamics induced by $U$ on the tails are trivial, that is, $\deg(v_{n,j}^\dir)=1$ for each $(n,j,\dir)\in\{1,2,\ldots,N\}\times (\N\setminus\{0\})\times\{\inc,\out\}$ implies that $U_{v_{n,j}^\dir}$ is just a composition of the multiplication by a constant $c_{n,j}^\dir$ of modulus one and the translation:
\ben
U\psi(a^\inc_{n,j})=c_{n,j}^\inc\psi(a^\inc_{n,j+1}),\qquad
U\psi(a^\out_{n,j})=c_{n,j}^\out\psi(a^\out_{n,j-1}).
\een
Note that, by the action of $U$, a state on incoming tails approaches $A_0$, whereas one on outgoing tails moves away from $A_0$.
This is why we call them incoming and outgoing.
%This is due to $\deg x_{\omega,j}=1$.
% for each $j\ge1$ and $\omega\in\Omega^\inc\cup\Omega^\out$,
%Note that the constants $(U_{x_{\omega,j+1}})_{a^\inc_{n,j},a^\inc_{n,j+1}}$ and $(U_{x_{\omega,j-1}})_{a^\out_{n,j},a^\out_{n,j-1}}$ have modulus one.
%Based on this dynamics, we call each tail $(X_\omega,\Xi_\omega^\inc)$ \textit{incoming tail} (resp. $(X_\omega,\Xi_\omega^\out)$ \textit{outgoing tail}) if $\omega\in\Omega^\inc$ (resp. $\omega\in\Omega^\out$).
For simplicity, throughout this manuscript, we assume $c_{n,j}^\inc=1$ for every $v\in V\setminus V_0$. 
It implies
\be\label{eq:dyn-on-tails}
U\psi(a^\inc_{n,j})=\psi(a^\inc_{n,j+1}),\qquad U\psi(a^\out_{n,j})=\psi(a^\out_{n,j-1}).
\ee
%\begin{condition*}%\label{Cond:tail}
%Assume $U_v=1$ for each $v\in V\setminus V_0$.
%\end{condition*} 

\begin{remark}\label{rem:make_tails}
We can make such a balanced directed graph $(V,A)$ by modifying a given finite balanced directed graph $(V_1,A_1)$ (see Figure~\ref{Fig-MT}).
Let us take a subset $\dl V_1\subset V_1$. 
We make $\dl V_1$ ``the set of vertices at infinity."
Put $V_0=V_1\setminus\dl V_1$, $A_0=A_1\setminus(\Omega^\inc\cup\Omega^\out)$ with $\Omega^\inc:=\bigcup_{v\in\dl V_1}(A_1)_v^\inc$, $\Omega^\out:=\bigcup_{v\in\dl V_1}(A_1)_v^\out$.
Then the graph $(V,A)$ is given by the union of $(V_0,A_0)$ and incoming and outgoing tails associated with each element of $\Omega^\inc$ and $\Omega^\out$.
\end{remark}
\begin{figure}
\centering
\includegraphics[bb=0 0 991 311, width=14cm]{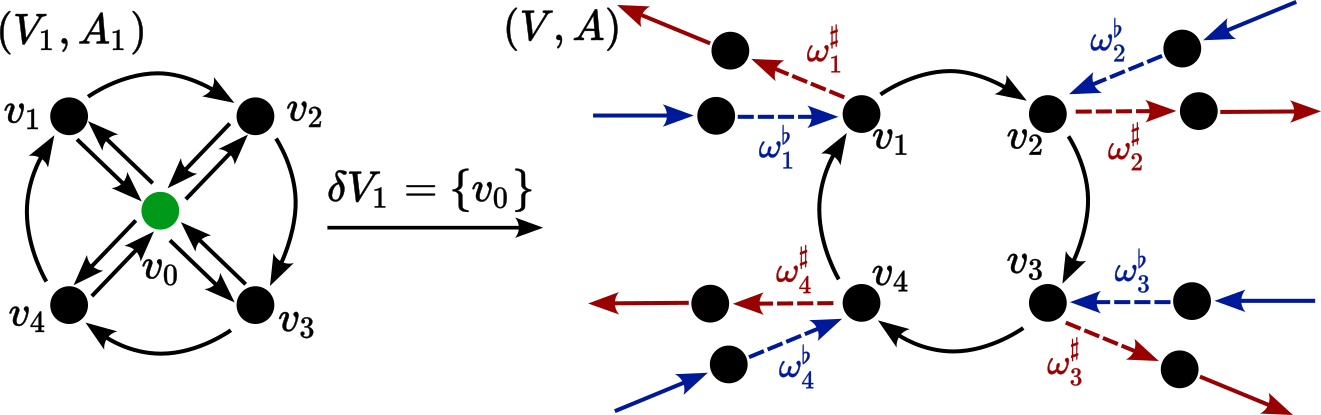}
\caption{Construction of $(V,A)$ from a finite graph $(V_1,A_1)$ as in Remark~\ref{rem:make_tails}}
\label{Fig-MT}
\end{figure}

\subsection{Resonances and scattering matrix}\label{sec:def-res-scatt}
We here introduce the scattering matrix and the resonances in a simple way.
\footnote{In the formal way, one introduces a ``free system," that is, another unitary operator on a Hilbert space.
The scattering matrix is defined from the scattering operator through a spectral representation of the free unitary operator. 
The scattering operator is given by the ``relative" time evolution of $U$ compared with that of the free unitary operator.
The free system for our setting should be defined on the graph consists of %$\#\Omega^\inc=\#\Omega^\out$
$N$-copies of infinite lines $(\Z,\{(j,j+1);\,j\in\Z\})$.
The negative and the positive half of each infinite lines may identified with an incoming and an outgoing tail.
The abstract scattering theory for unitary operator with two-Hilbert spaces is studied in \cite{Ti}.
Resonances are defined as the poles of the meromorphic continuation of the cut-off resolvent $\mathbbm{1}_B(U-z)^{-1}\mathbbm{1}_B$ with $B\subset A$ large enough beyond the essential spectrum $\mathbb{S}^1$ of $U$ \cite{HMS}.
%Here, $\mathbbm{1}_B$ stands for the characteristic function of a subset $B\subset A$.
}
We say that a function $\pphi\in\C^A$ is \textit{outgoing} if the intersection of the support $\ope{supp}\,\pphi=\{a\in A;\,\pphi(a)\neq0\}$ and the incoming tails $A^\inc$ is finite.
We say that $z\in\C$ is a \textit{resonance} of $U$ if there exists a non-trivial outgoing function $\varphi\in\C^A$ such that $(U-z)\varphi=0$. 
(It is also referred to as \textit{outgoing resonance} to clearly distinguish the difference with incoming resonance defined later.)
We call such an outgoing function a \textit{resonant state} associated with the resonance $z$.
We regard $z=0$ as a resonance of $U$ even the above definition does not apply to it.
The modulus of each resonance is smaller or equal to one.
Let $\ope{Res}(U)$ be the set of resonances of $U$.
The \textit{multiplicity} $m=m(z)$ of each non-zero resonance $z\in\ope{Res}(U)\setminus\{0\}$ is the dimension of the set of outgoing functions $\pphi$ such that $(U-z)^k\pphi=0$ holds for some $k\in\N$.
The multiplicity $m$ is bounded by $\#A_0<\infty$.
These properties of resonances will be shown as Proposition~\ref{prop:matrix-reduction}.

We also introduce incoming resonances.
A function $\varphi\in\C^A$ is said to be \textit{incoming} if $\ope{supp}\,\varphi\cap A^\out$ is a finite set.
We say that $z\in\C$ is an \textit{incoming resonance} of $U$ if there exists a non-trivial incoming function $\varphi\in\C^A$ such that $(U-z)\varphi=0$.
Such an incoming function $\pphi$ is called \textit{incoming resonant state} associated with the incoming resonance $z$.
Let $\ope{Res}^\inc(U)$ denote the set of incoming resonances.
A complex number $z\in\C\setminus\{0\}$ is an outgoing resonance if and only if $\bar{z}^{-1}$ is an incoming resonance.
The multiplicity of each resonance $z\in\ope{Res}(U)$ coincides with that of the incoming resonance $\bar{z}^{-1}$.
For each resonant state $\pphi$ associated with a resonance $z\in\ope{Res}(U)$ of multiplicity one, there uniquely exists an incoming resonant state $\pphi^\circledast$ associated with $\bar{z}^{-1}\in\ope{Res}^\inc(U)$ such that 
\be\label{eq:normalize-res-st}
(\pphi,\pphi^\circledast)=\sum_{a\in A}\pphi(a)\overline{\pphi^\circledast(a)}
=\sum_{a\in A_0}\pphi(a)\overline{\pphi^\circledast(a)}=1.
\ee
In fact, the support of each resonant state (resp. incoming resonant state) does not intersect with $A^\inc$ (resp. $A^\out$).
The following properties concerning incoming resonances are shown as Proposition~\ref{prop:matrix-reduction2}.
Moreover, it is shown as Lemma~\ref{eq:sym-out-inc-res-st} that they satisfy for any $n\in\{1,2,\ldots,N\}$,
\be\label{eq:dist-res-state}
\frac{|\pphi_{\lambda,\e}(\omega_n^\out)|}{\|\chi(\Omega^\out)\pphi_{\lambda,\e}\|}
=
\frac{|\pphi_{\lambda,\e}^\circledast(\omega_n^\inc)|}{\|\chi(\Omega^\inc)\pphi_{\lambda,\e}^\circledast\|}+\ord(\e).
\ee

We define the scattering matrix by using the generalized eigenfunctions.
For each $z\in\mathbb{S}^1=\{z\in\C;\,|z|=1\}$ and $\alpha^\inc\in\C^{\Omega^\inc}$, there exists a function $\varphi\in\C^A$ such that 
\be\label{eq:gen-ef}
(U-z)\varphi=0,\quad\text{and}\quad\chi(\Omega^\inc)\varphi=\alpha^\inc.
\ee 
It defines a function $\alpha^\out:=z\chi(\Omega^\out)\varphi\in\C^{\Omega^\out}$.
The function $\varphi$ is called a \textit{generalized eigenfunction}.
The function $\pphi$ is not unique if $z$ is an eigenvalue of $U$, however, $\alpha^\out$ is always unique for $z\in\mathbb{S}^1$ (Proposition~\ref{prop:gen-ef}).
Then the scattering matrix $\Sigma(z):\C^{\Omega^\inc}\to\C^{\Omega^\out}$ is defined as the isometric linear map $\alpha^\inc\mapsto\alpha^\out$.

\subsection{Asymptotic behavior of the scattering matrix in the semi-classical limit}\label{sec:results}
Let $\{U(\e);\,0\le \e\le \e_0\}$ $(\e_0>0)$ be a family of unitary time-evolution operators of a quantum walk on $\cH$.
Each operator $U(\e)$ is induced by local isometries $U_v(\e):\C^{A_v^\inc}\to\C^{A_v^\out}$.
We assume that the dynamics induced by $U(0)$ is deterministic from each incoming tails to an outgoing tail, that is, $\ope{supp}(U(0)^k\dl_a)$  is a singleton for any $a\in A^\inc\cup A^\out$ and $k\in\Z$.
Then one can label the arcs $\Omega^\inc=\{\omega_1^\inc,\omega_2^\inc,\ldots,\omega_N^\inc\}$ and $\Omega^\out=\{\omega_1^\inc,\omega_2^\inc,\ldots,\omega_N^\inc\}$ so that one has
\be\label{eq:free-motion-e=0}
U^{k_n}\dl_{\omega_n^\inc}=c_n\dl_{\omega_n^\out}\qquad(n=1,2,\ldots,N)
\ee
for some $k_n\in\N\setminus\{0\}$ and $c_n\in\mathbb{S}^1$.
It follows that the scattering matrix $\Sigma(0,z)$ of $U(0)$ is given by
\be\label{eq:scatt-e=0}
\Sigma(0,z)=z\,\ope{diag}(z^{-k_1}c_1,z^{-k_2}c_2,\ldots,z^{-k_N}c_N).
\ee

Remark that each eigenvalue of $U(\e)$ is semi-simple, that is, the algebraic and geometric multiplicities of each eigenvalue coincide with each other.
This is a consequence that $U(\e)$ is unitary, in particular normal.
We assume a stronger condition.
\begin{condition}\label{cond:pert}
Every eigenvalue of $U(0)$ is simple.
The operator-valued function $U_v(\e)$ for each $v\in V_0$ is right-differentiable at $\e=0$.
\end{condition}
Under this condition, for each $\lambda\in\ope{Res}(U(0))\cap\mathbb{S}^1$ and for $\e>0$ small enough (this means for any $\e\in(0,\e_1]$ with some $\e_1>0$), there exists $\lambda_\e=\lambda_\e(\lambda)\in\ope{Res}(U(\e))$ such that $|\lambda_\e-\lambda|=\ord(\e)$ (see \cite{Kato} and Proposition~\ref{prop:matrix-reduction}).
The algebraic multiplicity of $\lambda_\e$ is one for $\e>0$ small enough, and we can take a resonant state $\pphi_{\lambda,\e}$ associated with $\lambda_\e$ such that $\|\chi(A_0)(\pphi_{\lambda,\e}-\pphi_{\lambda,0})\|=\ord(\e)$.
We define the (rank at most one) operator-valued function $M_{\lambda,\e}=M_{\lambda,\e}(z):\C^{\Omega^\inc}\to\C^{\Omega^\out}$ by 
%\ben
%M_{\lambda,\e}(z)=\left(\frac{\lambda_\e^2\overline{\pphi^\circledast_{\lambda,\e}(\omega)}\pphi_{\lambda,\e}(\omega^\out)}{z-\lambda_\e}\right)_{(\omega^\out,\omega)\in\Omega^\out\times\Omega^\inc},
%\een
%or equivalently by
\be\label{eq:def-M_lam}
M_{\lambda,\e}(z)\alpha^\inc=
%\left\{
%\begin{aligned}
%&
\frac{\lambda_\e^2}{z-\lambda_\e}(\alpha^\inc,\chi(\Omega^\inc)\pphi^\circledast_{\lambda,\e})_{\C^{\Omega^\inc}}\chi(\Omega^\out)\pphi_{\lambda,\e}
%&&(|\lambda_\e|<1)\\
%&0&&(|\lambda_\e|=1)
%\end{aligned}\right.
\qquad(\alpha^\inc\in\C^{\Omega^\inc}).
\ee
Remark that $z=\lambda_\e$ is a singular point of the function $z$ when $\lambda_\e\in\mathbb{S}^1$.
It is removable by defining $M_{\lambda,\e}(\lambda_\e)=0$ since $\chi(\Omega^\inc)\pphi_{\lambda,\e}^\circledast=0$ holds in that case (see Lemma~\ref{lem:on-tails} and Proposition~\ref{prop:matrix-reduction} (\ref{enu:red3})).

\begin{theorem}\label{thm:non-resonant}
Under Condition~\ref{cond:pert}, one has
\be\label{eq:discrepancy-thm}
\Sigma(\e,z)-\Sigma(0,z)=\sum_{\lambda\in\ope{Res}(U(0))\cap\mathbb{S}^1}M_{\lambda,\e}(z)+\ord_{\C^{\Omega^\inc}\to\C^{\Omega^\out}}(\e).
\ee
Moreover, for any fixed $z\in\mathbb{S}^1\setminus\ope{Res}(U(0))$ independent of $\e$, $M_{\lambda,\e}(z)=\ord(\e)$ holds for every $\lambda$, and
\ben
\left\|\Sigma(\e,z)-\Sigma(0,z)\right\|_{\C^{\Omega^\inc}\to\C^{\Omega^\out}}=\ord(\e).
\een
One also has an estimate from below: %$\lambda_0\in\mathbb{S}^1\cap\ope{Res}(U(0))$, one has
\ben
\|M_{\lambda,\e}(z)\|_{\C^{\Omega^\inc}\to\C^{\Omega^\out}}%=\sum_{(\omega^\out,\omega)\in\Omega^\out\times\Omega^\inc}\left|\frac{\lambda_j(\e)^2\overline{\psi_{j,\e}(\omega)}\pphi_{j,\e}(\omega^\out)}{z-\lambda_j(\e)}\right|^2
=\left|\frac{1-|\lambda_\e|^2}{z-\lambda_\e}\right|\|\chi(A_0)\pphi_{\lambda,\e}\|\|\chi(A_0)\pphi^\circledast_{\lambda,\e}\|
\ge
\left|\frac{1-|\lambda_\e|^2}{z-\lambda_\e}\right|.
%(1-|\lambda_j(\e)|)^2\ge1-C\e,
\een
This implies that there exists $C>0$ such that
\be\label{eq:esti-M-below}
\left\|M_{\lambda,\e}\left(\frac{\lambda_\e}{|\lambda_\e|}\right)\right\|_{\C^{\Omega^\inc}\to\C^{\Omega^\out}}
\ge1+|\lambda_\e|\ge2-C\e,
\ee
provided that $|\lambda_\e|<1$.
\end{theorem}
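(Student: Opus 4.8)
My plan is to derive the expansion \eqref{eq:discrepancy-thm} from a resonant-state expression for the scattering matrix and then to reduce every quantitative claim to the rank-one structure of $M_{\lambda,\e}$ together with an elementary flux (norm-conservation) identity for resonant states. For \eqref{eq:discrepancy-thm} I would start from the resonant-state expression of $\Sigma(\e,z)$ (the content of Theorem~\ref{thm:res-exp}) and feed into it the perturbation theory of the finite-rank operator attached to $A_0$ (Proposition~\ref{prop:matrix-reduction} and \cite{Kato}). Since $\Sigma(0,z)$ is the diagonal matrix \eqref{eq:scatt-e=0}, the difference $\Sigma(\e,z)-\Sigma(0,z)$ collects the contributions of the resonances $\lambda_\e$ that bifurcate, at distance $\ord(\e)$, from the eigenvalues $\lambda\in\ope{Res}(U(0))\cap\mathbb{S}^1$; each such contribution is the rank-one term \eqref{eq:def-M_lam}, while the regular part of the expansion and the resonances staying at distance $\ord(1)$ from $\mathbb{S}^1$ are absorbed into the $\ord(\e)$ remainder. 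I expect the main obstacle to be exactly this bookkeeping: showing, uniformly in $z$, that the reduced resolvent away from the bifurcating poles contributes only $\ord(\e)$ in operator norm.

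Granting \eqref{eq:discrepancy-thm}, the quantitative statements reduce to computing $\|M_{\lambda,\e}(z)\|$. Since by \eqref{eq:def-M_lam} the map $M_{\lambda,\e}(z)$ is the rank-one operator $\alpha^\inc\mapsto\frac{\lambda_\e^2}{z-\lambda_\e}(\alpha^\inc,\chi(\Omega^\inc)\pphi^\circledast_{\lambda,\e})\chi(\Omega^\out)\pphi_{\lambda,\e}$, and the operator norm of a rank-one map is the product of the scalar modulus and the two vector norms, it equals $|\lambda_\e^2/(z-\lambda_\e)|\,\|\chi(\Omega^\out)\pphi_{\lambda,\e}\|\,\|\chi(\Omega^\inc)\pphi^\circledast_{\lambda,\e}\|$. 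The crux is then to trade the tail norms for the $A_0$ norms, for which I would prove the two flux identities
\[
(1-|\lambda_\e|^2)\|\chi(A_0)\pphi_{\lambda,\e}\|^2=|\lambda_\e|^2\|\chi(\Omega^\out)\pphi_{\lambda,\e}\|^2,\qquad
(1-|\lambda_\e|^2)\|\chi(A_0)\pphi^\circledast_{\lambda,\e}\|^2=|\lambda_\e|^2\|\chi(\Omega^\inc)\pphi^\circledast_{\lambda,\e}\|^2.
\]
Multiplying them and taking square roots yields $|\lambda_\e|^2\|\chi(\Omega^\out)\pphi_{\lambda,\e}\|\,\|\chi(\Omega^\inc)\pphi^\circledast_{\lambda,\e}\|=(1-|\lambda_\e|^2)\|\chi(A_0)\pphi_{\lambda,\e}\|\,\|\chi(A_0)\pphi^\circledast_{\lambda,\e}\|$, which is exactly the claimed closed form $\|M_{\lambda,\e}(z)\|=|(1-|\lambda_\e|^2)/(z-\lambda_\e)|\,\|\chi(A_0)\pphi_{\lambda,\e}\|\,\|\chi(A_0)\pphi^\circledast_{\lambda,\e}\|$.

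To prove the first flux identity I would truncate the resonant state to $A_0\cup\Omega^\out$ (recall $\pphi_{\lambda,\e}$ vanishes on $A^\inc$), calling the truncation $\psi$. By the trivial tail dynamics \eqref{eq:dyn-on-tails}, $(U(\e)-\lambda_\e)\psi$ is supported only at the second outgoing arcs and takes the value $\pphi_{\lambda,\e}(\omega_n^\out)$ there; in particular $\lambda_\e\psi$ and $(U(\e)-\lambda_\e)\psi$ have disjoint supports. Hence $\|U(\e)\psi\|^2=\|\psi\|^2$ (unitarity) reads $\|\psi\|^2=|\lambda_\e|^2\|\psi\|^2+\|\chi(\Omega^\out)\pphi_{\lambda,\e}\|^2$, and with $\|\psi\|^2=\|\chi(A_0)\pphi_{\lambda,\e}\|^2+\|\chi(\Omega^\out)\pphi_{\lambda,\e}\|^2$ this rearranges to the first identity. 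The second is the same argument for the incoming resonant state, using $U(\e)\pphi^\circledast_{\lambda,\e}=\bar\lambda_\e^{-1}\pphi^\circledast_{\lambda,\e}$ and the truncation $\psi^\circledast$ to $A_0\cup\Omega^\inc$; here $U(\e)\psi^\circledast$ is supported on $A_0$, where it equals $\bar\lambda_\e^{-1}\pphi^\circledast_{\lambda,\e}$, so $\|U(\e)\psi^\circledast\|^2=\|\psi^\circledast\|^2$ reads $|\lambda_\e|^{-2}\|\chi(A_0)\pphi^\circledast_{\lambda,\e}\|^2=\|\chi(A_0)\pphi^\circledast_{\lambda,\e}\|^2+\|\chi(\Omega^\inc)\pphi^\circledast_{\lambda,\e}\|^2$, which is the second identity after multiplying by $|\lambda_\e|^2$.

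It remains to bound $\|\chi(A_0)\pphi_{\lambda,\e}\|\,\|\chi(A_0)\pphi^\circledast_{\lambda,\e}\|$ from below, which follows from Cauchy--Schwarz applied to the normalization \eqref{eq:normalize-res-st}: $1=|(\pphi_{\lambda,\e},\pphi^\circledast_{\lambda,\e})|\le\|\chi(A_0)\pphi_{\lambda,\e}\|\,\|\chi(A_0)\pphi^\circledast_{\lambda,\e}\|$. This yields $\|M_{\lambda,\e}(z)\|\ge|(1-|\lambda_\e|^2)/(z-\lambda_\e)|$. Specializing to $z=\lambda_\e/|\lambda_\e|$ gives $|z-\lambda_\e|=1-|\lambda_\e|$, so the right-hand side equals $1+|\lambda_\e|$; and since $\lambda\in\mathbb{S}^1$ with $|\lambda_\e-\lambda|=\ord(\e)$ forces $|\lambda_\e|\ge1-C\e$, we obtain \eqref{eq:esti-M-below}. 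Finally, the same closed form gives the $\ord(\e)$ statement: for fixed $z\in\mathbb{S}^1\setminus\ope{Res}(U(0))$ the factor $|z-\lambda_\e|$ stays bounded below, whereas $1-|\lambda_\e|^2=\ord(\e)$ and the $A_0$ norms stay bounded (Condition~\ref{cond:pert}), so each $M_{\lambda,\e}(z)=\ord(\e)$ and, with \eqref{eq:discrepancy-thm}, $\|\Sigma(\e,z)-\Sigma(0,z)\|=\ord(\e)$.
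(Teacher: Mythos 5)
Your handling of the quantitative assertions is correct and essentially follows the paper's own route: the norm of the rank-one operator $M_{\lambda,\e}(z)$ is $\left|\lambda_\e^2/(z-\lambda_\e)\right|\|\chi(\Omega^\inc)\pphi^\circledast_{\lambda,\e}\|\|\chi(\Omega^\out)\pphi_{\lambda,\e}\|$, your two flux identities are exactly \eqref{eq:widths} and \eqref{eq:esti-omega-in} (items $(\ref{enu:red4})$ of Proposition~\ref{prop:matrix-reduction} and Proposition~\ref{prop:matrix-reduction2}), the lower bound comes from Cauchy--Schwarz applied to the normalization \eqref{eq:normalize-res-st}, and the evaluation at $z=\lambda_\e/|\lambda_\e|$ gives $1+|\lambda_\e|\ge 2-C\e$, all as in the paper. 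Your self-contained derivation of the flux identities is also correct: the truncation $\psi=\mathbbm{1}_{A_0\cup\Omega^\out}\pphi_{\lambda,\e}$ indeed satisfies $(U(\e)-\lambda_\e)\psi=\sum_n\pphi_{\lambda,\e}(\omega_n^\out)\dl_{a_{n,2}^\out}$ with support disjoint from that of $\lambda_\e\psi$, so unitarity yields the identity; this is the same truncation-plus-unitarity idea the paper uses via \eqref{eq:inter}, just packaged differently.

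The genuine gap is in your first paragraph: the expansion \eqref{eq:discrepancy-thm}, which is the principal assertion of the theorem, is never actually proven. You correctly identify the strategy (express $\alpha^\out(\e)$ through the resolvent of $U(\e)|_{A_0}$, split off the poles $\lambda_\e$ bifurcating from $\ope{Res}(U(0))\cap\mathbb{S}^1$, which produce the $M_{\lambda,\e}$ terms, and absorb the rest into $\ord(\e)$), but you defer precisely the step that carries the analytic content, calling it ``bookkeeping''. It is not. First, one must justify replacing the full resolvent acting on $f=\chi(A_0)U(\e)\chi(\Omega^\inc)^*\alpha^\inc$ by the reduced resolvent $\til{R}(\e,z)+Q(\e,z)$; this uses the orthogonality of $f$ to every eigenvector of $U(\e)|_{A_0}$ with modulus-one eigenvalue (proof of Proposition~\ref{prop:gen-ef}). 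Second, one needs a bound on $\til{R}(\e,z)$ uniform in $z\in\mathbb{S}^1$ and in small $\e$; since $U(\e)|_{A_0}$ is \emph{non-normal}, distance from the spectrum does not control the resolvent norm, and the paper must invoke the Davies--Simon estimate $\|(M-z)^{-1}\|\le\cot(\pi/4n)/\ope{dist}(z,\ope{EV}(M))$, valid for $|z|\ge\|M\|$, applied to $U(\e)|_{A_0}P_\e$ (Lemma~\ref{lem:proj-estimate}). Third, one needs $\|P_\e-P_0\|=\ord(\e)$ --- this is exactly where the simplicity hypothesis of Condition~\ref{cond:pert} enters --- in order to run the resolvent identity \eqref{eq:res-eq} and conclude $\|\til{R}(\e,z)-\til{R}(0,z)\|=\ord(\e)$ uniformly on $\mathbb{S}^1$ (Proposition~\ref{prop:res-esti}). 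Without these three inputs the remainder in \eqref{eq:discrepancy-thm} is uncontrolled, and then the statement $\|\Sigma(\e,z)-\Sigma(0,z)\|=\ord(\e)$ at fixed non-resonant $z$, which you deduce from \eqref{eq:discrepancy-thm} together with $M_{\lambda,\e}(z)=\ord(\e)$, has no foundation. In short: your proposal proves the estimates on $M_{\lambda,\e}$ but assumes, rather than proves, the expansion formula that the theorem asserts.
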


For a non-empty set $J\subsetneq\{1,2,\ldots,N\}$, we denote the complement by $J^c=\{1,2,\ldots,N\}\setminus J$, and put
\ben
\Omega_J^\dir=\{\omega_n^\inc\in\Omega^\dir;\,n\in J\}\qquad(\dir=\inc,\out).
\een
We define the reflection and transmission probabilities $R$ and $T$ for $\e\ge0$, $z\in\mathbb{S}^1$ and for each normalized incoming data $\alpha^\inc\in\C^{\Omega^\inc}$ $(|\alpha^\inc|=1)$ supported in $\Omega^\inc_J$ $(\ope{supp}\,\alpha^\inc\subset\Omega^\inc_J)$ by 
\be
R=R(J,\alpha^\inc,\e,z):=\left\|\chi(\Omega_J^\out)\Sigma(\e,z)\alpha^\inc\right\|^2,\quad
T=T(J,\alpha^\inc,\e,z):=\left\|\chi(\Omega_{J^c}^\out)\Sigma(\e,z)\alpha^\inc\right\|^2.
\ee
The unitarity of the scattering matrix implies $T+R=1$.
Note that for $\e=0$, $T(J,\alpha^\inc,0,z)=0$ holds for any $J$, $\alpha^\inc$, $z$.
%We also define 
%\ben
%\alpha_J^\dir\pphi:=\frac{\mathbbm{1}_{\Omega_J^\dir}\chi(\Omega^\dir)\pphi}{\|\mathbbm{1}_{\Omega_J^\dir}\chi(\Omega^\dir)\pphi\|}\qquad(\dir=\inc,\out,\ \pphi\in\C^A\text{ with }\chi(\Omega_J^\dir)\pphi\neq0).
%\een
%Note that $\alpha_J^\dir\pphi$ is normalized, and its support is contained in $J$.
As a consequence of Theorem~\ref{thm:non-resonant}, we see a generalized version of the resonant tunneling effect.
\begin{corollary}\label{cor:res-tun}
For any fixed $z\in\mathbb{S}^1\setminus\ope{Res}(U(0))$, any non-empty subset $J\subsetneq\{1,2,\ldots,N\}$, and any normalized $\alpha^\inc\in\C^{\Omega^\inc}$ supported in $\Omega_J^\inc$, one has 
\be\label{eq:non-resonant-transm}
T(J,\alpha^\inc,\e,z)=\ord(\e^2)\quad\text{as }\e\to+0.
\ee
Fix $\lambda\in\ope{Res}(U(0))\cap\mathbb{S}^1$ such that $|\lambda_\e(\lambda)|<1$ for positive small $\e>0$.
Assume that there exists $J\subset\{1,2,\ldots,N\}$ such that 
\be\label{eq:cond-res-tun}
\|\chi(\Omega_J^\inc)\pphi_{\lambda,\e}^\circledast\|
%=\frac1{\sqrt2}\|\chi(\Omega^\inc)\pphi_{\lambda,\e}^\circledast\|
=\|\chi(\Omega_{J^c}^\inc)\pphi_{\lambda,\e}^\circledast\|+\ord(\e).
%,\quad
%\Omega_J^\inc=\{\omega_j^\inc\in\Omega^\inc;\,j\in J\}.
\ee
Then there exists a constant $c_{\e,J}$ of modulus one such that
\be\label{eq:resonant-tunneling-explicit}
\Sigma(\e,\lambda_\e/|\lambda_\e|)\alpha_J^\inc
%\pphi_{\lambda,\e}^\circledast
=c_{J,\e}
\alpha_{J^c}^\out%\pphi_{\lambda,\e}
+\ord(\sqrt{\e}),
\ee
with %a constant $c_{J,\e}$ of modulus one, and
\be\label{eq:alpha-Js}
\alpha_J^\inc=\frac{\mathbbm{1}_{\Omega_J^\inc}\chi(\Omega^\inc)\pphi_{\lambda,\e}^\circledast}{\|\mathbbm{1}_{\Omega_J^\inc}\chi(\Omega^\inc)\pphi_{\lambda,\e}^\circledast\|},\qquad
\alpha_{J^c}^\out=\frac{\mathbbm{1}_{\Omega_{J^c}^\out}\chi(\Omega^\out)\pphi_{\lambda,\e}}{\|\mathbbm{1}_{\Omega_{J^c}^\out}\chi(\Omega^\out)\pphi_{\lambda,\e}\|}.
\ee
In particular, one has
\be\label{eq:supp-disj}
%\left(\Signa(\e,z)\alpha_J^\inc,\Signa(0,z)\alpha_J^\inc\right)_{\C^{\Omega^\out}}
%\Sigma(\e,\lambda_\e/|\lambda_\e|)\alpha_J^\inc=\ord(\e),
T(J,\alpha_J^\inc,\e,\lambda_\e/|\lambda_\e|)=1+\ord(\e).
\ee
%on the support of $\Sigma(0,z)\alpha_J^\inc=\Omega_J^\out$ (this is independent of $z\in\mathbb{S}^1$).
In addition, 
%the widths of the resonant tunneling is given by 
%\ben 2(1-|\lambda_\e|)+\ord((1-|\lambda_\e|)^2),\een that is, 
there exist $\theta_-,\theta_+\in\R$ such that $\theta_\pm=\pm(1-|\lambda_\e|)+\ord((1-|\lambda_\e|)^2)$, and that
\be\label{eq:width-of-peak}
%\left\|\chi(\Omega_{J^c}^\out)\Sigma(\e,e^{i\theta}\lambda_\e/|\lambda_\e|)\alpha_J^\inc%\pphi_{\lambda,\e}^\circledast
%\right\|^2
T(J,\alpha_J^\inc,\e,e^{i\theta}\lambda_\e/|\lambda_\e|)\ge\frac12,
\ee
for $\theta_-\le\theta%-\arg\lambda_\e
\le\theta_+$.
\end{corollary}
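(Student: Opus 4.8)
The plan is to read the whole statement off Theorem~\ref{thm:non-resonant}, the diagonal form \eqref{eq:scatt-e=0} of $\Sigma(0,z)$, and the unitarity of the scattering matrices; no analytic input on the resonant states beyond Theorem~\ref{thm:non-resonant} and the hypothesis \eqref{eq:cond-res-tun} is needed. The non-resonant bound \eqref{eq:non-resonant-transm} is immediate: since $\Sigma(0,z)$ is diagonal it sends data supported in $\Omega_J^\inc$ into $\Omega_J^\out$, so $\chi(\Omega_{J^c}^\out)\Sigma(0,z)\alpha^\inc=0$, whence $\chi(\Omega_{J^c}^\out)\Sigma(\e,z)\alpha^\inc=\chi(\Omega_{J^c}^\out)\bigl(\Sigma(\e,z)-\Sigma(0,z)\bigr)\alpha^\inc$ has norm $\ord(\e)$ by Theorem~\ref{thm:non-resonant}; squaring yields $T=\ord(\e^2)$.

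For the resonant assertions put $z_\e:=\lambda_\e/|\lambda_\e|$ and abbreviate $u:=\chi(\Omega^\inc)\pphi^\circledast_{\lambda,\e}$, $v:=\chi(\Omega^\out)\pphi_{\lambda,\e}$. At $z=z_\e$ every other resonance $\mu\ne\lambda$ contributes only $\ord(\e)$ to \eqref{eq:discrepancy-thm}, because its numerator $1-|\mu_\e|^2=\ord(\e)$ while $|z_\e-\mu_\e|$ stays bounded below; thus $\Sigma(\e,z_\e)=\Sigma(0,z_\e)+M_{\lambda,\e}(z_\e)+\ord(\e)$, where by \eqref{eq:def-M_lam} the operator $M_{\lambda,\e}(z_\e)\alpha^\inc=\beta_0(\alpha^\inc,u)v$ is rank one with $\beta_0=\lambda_\e^2/(z_\e-\lambda_\e)=|\beta_0|z_\e$ and $|\beta_0|=|\lambda_\e|^2/(1-|\lambda_\e|)$. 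The delicate point is that the factors degenerate at different rates, $\|u\|,\|v\|=\ord(\sqrt\e)$ and $|\beta_0|=\ord(\e^{-1})$, so I will never split them but keep the $\ord(1)$ combination $\beta_0\|u\|v$, whose norm equals $\|M_{\lambda,\e}(z_\e)\|=2-\ord(\e)$ by \eqref{eq:esti-M-below}.

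The value of $\beta_0\|u\|v$ is then fixed by unitarity rather than by computation. Evaluating the discrepancy on the unit vector $\hat u:=u/\|u\|$ gives $\|(\Sigma(\e,z_\e)-\Sigma(0,z_\e))\hat u\|=\|M_{\lambda,\e}(z_\e)\|+\ord(\e)=2-\ord(\e)$; since both matrices are unitary, the parallelogram identity forces $\|(\Sigma(\e,z_\e)+\Sigma(0,z_\e))\hat u\|=\ord(\sqrt\e)$, that is $\beta_0\|u\|v=-2\Sigma(0,z_\e)\hat u+\ord(\sqrt\e)$. The hypothesis \eqref{eq:cond-res-tun} says the incoming data splits evenly, $\|\mathbbm{1}_{\Omega_J^\inc}u\|^2/\|u\|^2=\tfrac12+\ord(\sqrt\e)$; inserting the antipodal identity into $M_{\lambda,\e}(z_\e)\mathbbm{1}_{\Omega_J^\inc}\hat u$ and using $\mathbbm{1}_{\Omega_J^\inc}\hat u-\hat u=-\mathbbm{1}_{\Omega_{J^c}^\inc}\hat u$ gives, after normalizing by \eqref{eq:alpha-Js}, $\Sigma(\e,z_\e)\alpha_J^\inc=-\Sigma(0,z_\e)\alpha_{J^c}^\inc+\ord(\sqrt\e)$. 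Diagonality localizes the right-hand side in $\Omega_{J^c}^\out$, and restricting $\beta_0\|u\|v=-2\Sigma(0,z_\e)\hat u+\ord(\sqrt\e)$ to $\Omega_{J^c}^\out$ shows $\Sigma(0,z_\e)\alpha_{J^c}^\inc$ is a modulus-one multiple of $\alpha_{J^c}^\out$; this is \eqref{eq:resonant-tunneling-explicit}. Finally \eqref{eq:supp-disj} is a second use of unitarity: writing $\Sigma(\e,z_\e)\alpha_J^\inc=c_{J,\e}\alpha_{J^c}^\out+r$ with $\|r\|=\ord(\sqrt\e)$, the identity $\|\Sigma(\e,z_\e)\alpha_J^\inc\|=1$ forces $2\re(c_{J,\e}\alpha_{J^c}^\out,r)=-\|r\|^2=\ord(\e)$, and since $\alpha_{J^c}^\out$ lives in $\Omega_{J^c}^\out$ this is precisely the cross term in $T=\|\chi(\Omega_{J^c}^\out)\Sigma(\e,z_\e)\alpha_J^\inc\|^2$, upgrading the $\ord(\sqrt\e)$ amplitude error to $T=1+\ord(\e)$.

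For \eqref{eq:width-of-peak}, at $z=e^{i\theta}z_\e$ only the scalar $\lambda_\e^2/(z-\lambda_\e)$ in $M_{\lambda,\e}$ changes, so the transmitted amplitude scales by $|z_\e-\lambda_\e|/|z-\lambda_\e|$ and $T(e^{i\theta}z_\e)=\frac{(1-|\lambda_\e|)^2}{1-2|\lambda_\e|\cos\theta+|\lambda_\e|^2}(1+\ord(\e))$, using $\bigl|e^{i\theta}-|\lambda_\e|\bigr|^2=1-2|\lambda_\e|\cos\theta+|\lambda_\e|^2$ and $|z_\e-\lambda_\e|=1-|\lambda_\e|$. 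Imposing $T\ge\tfrac12$ reduces to $\cos\theta\ge2-\frac{1+|\lambda_\e|^2}{2|\lambda_\e|}$; writing $s=1-|\lambda_\e|$ and expanding to second order ($\cos\theta\approx1-\theta^2/2$) gives $|\theta|\le s+\ord(s^2)$, i.e. $\theta_\pm=\pm(1-|\lambda_\e|)+\ord((1-|\lambda_\e|)^2)$, the $\ord(\e)$ error in $T$ shifting the crossing only by $\ord(s^2)$ since $dT/d\theta=\ord(s^{-1})$ there. The step I expect to be hardest is the scale bookkeeping of the third paragraph: the degenerating norms $\|u\|,\|v\|\sim\sqrt{1-|\lambda_\e|}$ compete with the blow-up $|\beta_0|\sim(1-|\lambda_\e|)^{-1}$, and the argument succeeds only because $\beta_0\|u\|v$ is treated as a single object pinned down by the antipodal identity; any step that expands the factors separately loses a power of $\e$. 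The generic scaling $1-|\lambda_\e|\asymp\e$ is implicit in matching the $\ord(\e)$ of \eqref{eq:cond-res-tun} against the $\ord(\sqrt\e)$ amplitude errors.
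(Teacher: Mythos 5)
Your proof is correct in substance and rests on the same pillars as the paper's: Theorem~\ref{thm:non-resonant} plus diagonality of $\Sigma(0,z)$ for \eqref{eq:non-resonant-transm}, and, at $z=\lambda_\e/|\lambda_\e|$, the near-saturation \eqref{eq:esti-M-below} combined with unitarity to force antipodality (your parallelogram identity is precisely the paper's Lemma~\ref{lem:triangular}). Where you genuinely reorganize the argument is in identifying the outgoing direction. The paper first proves a standalone symmetry lemma (Lemma~\ref{eq:sym-out-inc-res-st}), by applying unitarity to $\beta_J^\inc=\mathbbm{1}_{\Omega_J^\inc}\til{\alpha}^\inc$ and solving the resulting quadratic relation $(1-4(\gamma_J^\inc)^2)\bigl((\gamma_J^\inc)^2-(\gamma_J^\out)^2\bigr)=\ord(\sqrt{\e})$, with a case distinction at $\gamma_J^\inc=1/2$, and then feeds this into \eqref{eq:res-tun-general-gamma}. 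You instead project the antipodal identity $\beta_0\|u\|v=-2\Sigma(0,z_\e)\hat u+\ord(\sqrt{\e})$ onto $\Omega_{J^c}^\out$; since projections are contractions and $\mathbbm{1}_{\Omega_{J^c}^\out}\Sigma(0,z_\e)=\Sigma(0,z_\e)\mathbbm{1}_{\Omega_{J^c}^\inc}$, this single step yields both the norm relation (the needed instance of the paper's symmetry lemma) and the direction relation, and even pins down $c_{J,\e}=\lambda_\e/|\lambda_\e|$. This is a real economy, at the price of obtaining the out/in symmetry only for the particular $J$ of hypothesis \eqref{eq:cond-res-tun} rather than for all $J$ as in Lemma~\ref{eq:sym-out-inc-res-st}. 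Your cross-term cancellation upgrading the $\ord(\sqrt{\e})$ amplitude error of \eqref{eq:resonant-tunneling-explicit} to $T=1+\ord(\e)$ in \eqref{eq:supp-disj} — valid because $\alpha_{J^c}^\out$ is supported in $\Omega_{J^c}^\out$, so the cross term in $T$ is the same one that unitarity forces to equal $-\|r\|^2$ — spells out a step the paper dismisses with ``in particular,'' and is a welcome addition.

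One caveat, which you share with the paper rather than introduce: in the width-of-peak part your relative error $T=\frac{(1-|\lambda_\e|)^2}{|e^{i\theta}-|\lambda_\e||^2}(1+\ord(\e))$ overstates what the method delivers. Hypothesis \eqref{eq:cond-res-tun} carries an \emph{absolute} $\ord(\e)$ slack while $\|u\|\asymp\sqrt{1-|\lambda_\e|}$, so the splitting ratio is pinned only to $\tfrac12+\ord(\sqrt{\e})$ (as you yourself use two paragraphs earlier), and the antipodal identity likewise carries $\ord(\sqrt{\e})$; hence the relative error in $T$ is $\ord(\sqrt{\e})$, and the half-height crossings are located only to within $\ord(\sqrt{\e}\,(1-|\lambda_\e|))$, not the stated $\ord((1-|\lambda_\e|)^2)$. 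The paper's own proof has the same defect — its half-height equation carries an absolute $+\ord(\e)$ error which, read literally, swamps the $\ord(\e^2)$ quantities being equated — so this is an imprecision of the corollary's error bookkeeping itself, not a gap peculiar to your argument. Relatedly, you are right, and more candid than the paper, that the entire resonant part tacitly assumes the generic scaling $1-|\lambda_\e|\asymp\e$; if $1-|\lambda_\e|\ll\e$ the absolute slack in \eqref{eq:cond-res-tun} constrains nothing and the conclusion can fail, so flagging that assumption is the correct reading of the statement.
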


Note that $\lambda_\e/|\lambda_\e|$ is the closest point on the unit circle to the resonance $\lambda_\e$.
The well-known setting of the resonant tunneling corresponds to the case with $N=2$. 
Then the subset $J$ is forced to be a singleton, and the choice of normalized incoming data for this case is essentially unique.
We see from the above corollary that the condition for the resonant tunneling effect is a symmetry on the resonant state.
We will confirm this for simple examples in Section~\ref{sec:line}.

%The phenomenon above is interpreted as a generalized version of the resonant tunneling effect in the following sense. As shown in Formula \eqref{eq:supp-disj}, the resonant tunneling effect is the disjointness modulo $\ord(\e)$ of the support of the outgoing data for the free and perturbed quantum walks associated with the same incoming data. It happens at the closest point $\lambda_\e/|\lambda_\e|$ on the unit circle $\mathbb{S}^1$ to the resonance $\lambda_\e$, and 

Note that the function $\Sigma(0,z)\alpha_J^\inc$ is supported only on $\Omega_J^\out$ for any $z\in\mathbb{S}^1$, and one in particular has
\ben
\left\|\chi(\Omega_J^\out)\Sigma(0,z)\alpha_J^\inc\right\|=1.
\een
This shows a drastic difference between $\Sigma(0,z)$ and the limit as $\e\to+0$ of $\Sigma(\e,z)$ despite $U(\e)\to U(0)$ as $\e\to+0$.
The difference $\theta_+-\theta_-$ is called the \textit{width of the resonant peak} at its half-height.
The above corollary stated that this width is approximated by twice the width of the resonance $\lambda_\e$, that is, the distance from the essential spectrum to the resonance.

\begin{remark}\label{rem:equiv-cond-res-tun}
The condition \eqref{eq:cond-res-tun} is equivalent to 
\be\label{eq:cond-res-tun1}
\|\chi(\Omega_J^\out)\pphi_{\lambda,\e}\|
=\|\chi(\Omega_{J^c}^\out)\pphi_{\lambda,\e}\|+\ord(\e),
\ee
due to \eqref{eq:dist-res-state}.
This condition is also equivalent to each one of the following equalities:
\begin{align*}
&\|\chi(\Omega_{J}^\inc)\pphi_{\lambda,\e}^\circledast\|=\frac1{\sqrt2}\|\chi(\Omega^\inc)\pphi_{\lambda,\e}^\circledast\|+\ord(\e),
&&\|\chi(\Omega_{J^c}^\inc)\pphi_{\lambda,\e}^\circledast\|=\frac1{\sqrt2}\|\chi(\Omega^\inc)\pphi_{\lambda,\e}^\circledast\|+\ord(\e),
\\&
\|\chi(\Omega_J^\out)\pphi_{\lambda,\e}\|=\frac1{\sqrt2}\|\chi(\Omega^\out)\pphi_{\lambda,\e}\|+\ord(\e),
&&
\|\chi(\Omega_{J_c}^\out)\pphi_{\lambda,\e}\|=\frac1{\sqrt2}\|\chi(\Omega^\out)\pphi_{\lambda,\e}\|+\ord(\e).
\end{align*}
\end{remark}

\subsection{Comfortability}\label{sec:comf}
In recent studies of scattering problem for quantum walks, the quantity called comfortability is also investigated \cite{AKMS,HSS}.
The comfortability $\mc{E}$ of a quantum walk given by $U$ is defined for each incoming data $\alpha^\inc\in\C^{\Omega^\inc}$ and $z\in\mathbb{S}^1$ by 
\ben
\mc{E}=\mc{E}(U,\alpha^\inc,z):=\|\chi(A_0)\pphi(\alpha^\inc,z)\|_{\C^{A_0}}^2,
\een
where $\pphi$ is the unique generalized eigenfunction orthogonal to the eigenfunctions of $U$ in addition to the condition \eqref{eq:gen-ef}: $(U-z)\pphi=0$, $\chi(\Omega^\inc)\pphi=\alpha^\inc$.

Under the same setting as the previous subsection, one immediately sees
\be\label{eq:non-pert-comf}
\mc{E}(U(0),\dl_{\omega_n^\inc},z)=k_n-1,
\quad
\mc{E}(U(0),\alpha^\inc,z)= \sum_{n=1}^N|\alpha^\inc(\omega_n^\inc)|^2(k_n-1)\le\|\alpha^\inc\|^2\sup_n (k_n-1),
\ee
for $\alpha^\inc\in\C^{\Omega^\inc}$.

The following theorem shows that the comfortability grows near each resonance approaching the unit circle as $\e\to+0$.
\begin{theorem}\label{thm:comfortability}
For any fixed $z\in\mathbb{S}^1\setminus\ope{Res}(U(0))$, any $\alpha^\inc\in\C^{\Omega^\inc}$, one has
\be\label{eq:non-res-comf}
\left|\mc{E}(U(\e),\alpha^\inc,z)-\mc{E}(U(0),\alpha^\inc,z)\right|\le C\|\alpha^\inc\|^2\sqrt{\e},
\ee
as $\e\to+0$ for some $C>0$.
Fix $\lambda\in\ope{Res}(U(0))\cap\mathbb{S}^1$ such that $|\lambda_\e(\lambda)|<1$ for positive small $\e>0$.
Then one has
\ben
\mc{E}(U(\e),\til{\alpha}^\inc,\lambda_\e/|\lambda_\e|)
=\left|\lambda_\e^2\frac{1+|\lambda_\e|}{1-|\lambda_\e|}\right|
\|\chi(A_0)\pphi_{\lambda,\e}\|^2\|\chi(A_0)\pphi_{\lambda,\e}^\circledast\|^2+\ord(1)\ge
\frac{2-C\e}{1-|\lambda_\e|}+\ord(1),
\een
as $\e\to+0$, for some $C>0$ and the normalized incoming data
\ben
\til{\alpha}^\inc=\frac{1}{\|\chi(\Omega^\inc)\pphi_{\lambda,\e}^\circledast\|}\chi(\Omega^\inc)\pphi_{\lambda,\e}^\circledast.
\een
In particular, 
$\mc{E}(U(\e),\til{\alpha}^\inc,\lambda_\e/|\lambda_\e|)\ge
C\e^{-1}$
holds with some $C>0$ since $1-|\lambda_\e|=\ord(\e)$.
\end{theorem}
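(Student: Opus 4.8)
The plan is to run the argument in complete parallel with the proof of Theorem~\ref{thm:non-resonant}, but reading off the interior restriction $\chi(A_0)\pphi$ instead of the outgoing boundary data $\chi(\Omega^\out)\pphi$. First I would record the finite-rank representation of the comfortability coming from the reduction of Subsection~\ref{sec:reduce-to-matrix}. Writing $S_0(\e)=\bigoplus_{v\in V_0}U_v(\e)$ for the collection of local isometries, viewed as a unitary map $\C^{A_0}\oplus\C^{\Omega^\inc}\to\C^{A_0}\oplus\C^{\Omega^\out}$ (the arcs terminating, resp. originating, in $V_0$ being $A_0\sqcup\Omega^\inc$, resp. $A_0\sqcup\Omega^\out$; unitarity is forced by balancedness), and decomposing it into blocks $\bigl(\begin{smallmatrix}A&B\\ C&D\end{smallmatrix}\bigr)$, the generalized eigenfunction equation $(U(\e)-z)\pphi=0$, $\chi(\Omega^\inc)\pphi=\alpha^\inc$, reduces on the arcs issuing from $V_0$ to $\chi(A_0)\pphi=(z-A(\e))^{-1}B(\e)\alpha^\inc$ (and, consistently, $\Sigma(\e,z)=D+C(z-A)^{-1}B$). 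Hence
\[
\mc{E}(U(\e),\alpha^\inc,z)=\bigl\|(z-A(\e))^{-1}B(\e)\alpha^\inc\bigr\|_{\C^{A_0}}^2 .
\]
By Proposition~\ref{prop:matrix-reduction} the eigenvalues of the non-normal matrix $A(\e)$ are exactly the resonances, with $\chi(A_0)\pphi_{\lambda,\e}$ a right eigenvector; the normalization \eqref{eq:normalize-res-st} reads $(\chi(A_0)\pphi_{\lambda,\e},\chi(A_0)\pphi^\circledast_{\lambda,\e})=1$. For such $z$ (fixed away from $\ope{Res}(U(0))\cap\mathbb{S}^1$, and $z=\lambda_\e/|\lambda_\e|$ with $|\lambda_\e|<1$) the point is not an eigenvalue of $U(\e)$, so the orthogonalization in the definition of $\mc E$ is inactive and the formula above is the relevant one.

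For the non-resonant bound \eqref{eq:non-res-comf} I would invoke Condition~\ref{cond:pert} to get $A(\e)=A(0)+\ord(\e)$, $B(\e)=B(0)+\ord(\e)$; since $z$ stays at positive distance from $\ope{Res}(U(0))\cap\mathbb{S}^1$, hence for small $\e$ from every eigenvalue of $A(\e)$, the resolvent remains $\ord(1)$. Continuity of $\e\mapsto(z-A(\e))^{-1}B(\e)\alpha^\inc$ then controls the difference of the two squared norms, exactly as in the non-resonant half of Theorem~\ref{thm:non-resonant}; this is the routine part.

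The substance is the growth at $z=\lambda_\e/|\lambda_\e|$. Here I would split the resolvent as $(z-A(\e))^{-1}=\frac{P_{\lambda,\e}}{z-\lambda_\e}+R_{\lambda,\e}(z)$, where $P_{\lambda,\e}u=(u,\chi(A_0)\pphi^\circledast_{\lambda,\e})\,\chi(A_0)\pphi_{\lambda,\e}$ is the rank-one spectral projection and $R_{\lambda,\e}$ the reduced resolvent, both $\ord(1)$ by the matrix perturbation theory of \cite{Kato} (using that $\chi(A_0)\pphi_{\lambda,\e}$ and $\chi(A_0)\pphi^\circledast_{\lambda,\e}$ converge to their $\e=0$ limits, so the projection norm does not blow up). Since $|z-\lambda_\e|=1-|\lambda_\e|=\ord(\e)$ at this point, the singular term dominates. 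To evaluate its coefficient I would read the incoming-resonance relation $U(\e)^*\pphi^\circledast_{\lambda,\e}=\bar\lambda_\e\pphi^\circledast_{\lambda,\e}$ on the arcs terminating in $V_0$, giving $S_0(\e)^*\bigl(\chi(A_0)\pphi^\circledast_{\lambda,\e}\oplus0\bigr)=\bar\lambda_\e\bigl(\chi(A_0)\pphi^\circledast_{\lambda,\e}\oplus\chi(\Omega^\inc)\pphi^\circledast_{\lambda,\e}\bigr)$, hence $B(\e)^*\chi(A_0)\pphi^\circledast_{\lambda,\e}=\bar\lambda_\e\,\chi(\Omega^\inc)\pphi^\circledast_{\lambda,\e}$. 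With $\til\alpha^\inc\parallel\chi(\Omega^\inc)\pphi^\circledast_{\lambda,\e}$ this yields $P_{\lambda,\e}B(\e)\til\alpha^\inc=\lambda_\e\|\chi(\Omega^\inc)\pphi^\circledast_{\lambda,\e}\|\,\chi(A_0)\pphi_{\lambda,\e}$, so that
\[
\mc{E}(U(\e),\til\alpha^\inc,\lambda_\e/|\lambda_\e|)=\frac{|\lambda_\e|^2\,\|\chi(\Omega^\inc)\pphi^\circledast_{\lambda,\e}\|^2}{(1-|\lambda_\e|)^2}\,\|\chi(A_0)\pphi_{\lambda,\e}\|^2+\ord(1).
\]
Finally I would substitute the tail-to-interior identity of Lemma~\ref{lem:on-tails}, namely $\|\chi(\Omega^\inc)\pphi^\circledast_{\lambda,\e}\|^2=\frac{1-|\lambda_\e|^2}{|\lambda_\e|^2}\|\chi(A_0)\pphi^\circledast_{\lambda,\e}\|^2$ (itself a consequence of the unitarity of $S_0(\e)$), producing the factor $\frac{1-|\lambda_\e|^2}{(1-|\lambda_\e|)^2}=\frac{1+|\lambda_\e|}{1-|\lambda_\e|}$ and hence the claimed expression, the bookkeeping factor $|\lambda_\e|^2=1+\ord(\e)$ being absorbed into the $\ord(1)$ remainder. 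The lower bound is then immediate from Cauchy--Schwarz with \eqref{eq:normalize-res-st}: $\|\chi(A_0)\pphi_{\lambda,\e}\|\,\|\chi(A_0)\pphi^\circledast_{\lambda,\e}\|\ge|(\chi(A_0)\pphi_{\lambda,\e},\chi(A_0)\pphi^\circledast_{\lambda,\e})|=1$, together with $|\lambda_\e|^2(1+|\lambda_\e|)\ge2-C\e$, giving $\ge\frac{2-C\e}{1-|\lambda_\e|}+\ord(1)$, and finally $\ge C\e^{-1}$ since $1-|\lambda_\e|=\ord(\e)$.

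The main obstacle I anticipate is the uniform control of the reduced resolvent $R_{\lambda,\e}(z)$ and of the non-normal projection $P_{\lambda,\e}$ as $\e\to+0$ and $z\to\lambda$ simultaneously along $z=\lambda_\e/|\lambda_\e|$: one must verify that the regular part of the resolvent, together with the cross terms between singular and regular parts, contributes only $\ord(1)$ to the squared norm, so that the $\ord(\e^{-1})$ leading term is genuinely isolated. This is precisely where the convergence $\chi(A_0)\pphi_{\lambda,\e}\to\chi(A_0)\pphi_{\lambda,0}$ and its incoming counterpart, furnished by Condition~\ref{cond:pert} and Kato's perturbation theory, are essential, exactly as in the proof of Theorem~\ref{thm:non-resonant}.
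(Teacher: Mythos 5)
Your argument is correct and is, at bottom, the paper's own proof in different clothing: both reduce everything to the finite non-normal matrix $U(\e)|_{A_0}$, isolate the rank-one singular part of its resolvent at the simple eigenvalue $\lambda_\e$, compute the coefficient of that part from the relation $U(\e)^*\pphi^\circledast_{\lambda,\e}=\bar{\lambda}_\e\pphi^\circledast_{\lambda,\e}$ read on the arcs entering $V_0$ (your identity $B(\e)^*\chi(A_0)\pphi^\circledast_{\lambda,\e}=\bar{\lambda}_\e\chi(\Omega^\inc)\pphi^\circledast_{\lambda,\e}$ is exactly the computation the paper performs for the $Q(\e,z)$-term in its proof, leading to \eqref{eq:disc-inside}), convert tail norms into interior norms, and finish with Cauchy--Schwarz against the normalization \eqref{eq:normalize-res-st}. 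What you do differently: (i) you obtain the interior representation $\chi(A_0)\pphi=(z-U(\e)|_{A_0})^{-1}\chi(A_0)U(\e)\chi(\Omega^\inc)^*\alpha^\inc$ from the block structure of the vertex unitaries, while the paper gets the identical formula from the construction \eqref{eq:local-resolvent}--\eqref{eq:construct-gen-ef}; (ii) for \eqref{eq:non-res-comf} you use plain resolvent continuity, which in fact yields the stronger bound $\ord(\e)\|\alpha^\inc\|^2$, whereas the paper routes through the $\ord(\sqrt{\e})$ estimate \eqref{eq:non-res-esti-comf}, which it needs anyway for the resonant case; (iii) at $z=\lambda_\e/|\lambda_\e|$ you expand $\mc{E}(U(\e),\cdot)$ directly as $\|\text{singular}+\text{regular}\|^2$, while the paper estimates the difference from the $\e=0$ eigenfunction and uses the reverse triangle inequality together with \eqref{eq:non-pert-comf}; these are equivalent.

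Three caveats. First, the uniform $\ord(1)$ bound on your regular part $R_{\lambda,\e}(z)$ along $z=\lambda_\e/|\lambda_\e|$ --- in particular the contribution of the eigenvalue group of modulus $\le 1/2$, which can carry Jordan blocks --- is precisely the technical content of the paper's Lemma~\ref{lem:proj-estimate} and Proposition~\ref{prop:res-esti} (via the bound of \cite{DaSi}); you correctly flag it as the main obstacle but defer it. In this finite-dimensional setting it can also be closed by norm-continuity of $\e\mapsto U(\e)|_{A_0}P_\e$ plus compactness of $\mathbb{S}^1$, so the gap is fillable, but it is the one real piece of work your sketch omits. Second, the tail-to-interior identity $\|\chi(\Omega^\inc)\pphi^\circledast_{\lambda,\e}\|^2=(|\lambda_\e|^{-2}-1)\|\chi(A_0)\pphi^\circledast_{\lambda,\e}\|^2$ is \eqref{eq:esti-omega-in} of Proposition~\ref{prop:matrix-reduction2}, not Lemma~\ref{lem:on-tails}. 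Third, your displayed remainder $\ord(1)$ silently drops the cross term $2\re(\text{sing},\text{reg})$, which is only $\ord(\e^{-1/2})$; the paper's statement carries the same imprecision (its own proof yields $\sqrt{\mc{E}}=\kappa+\ord(1)$, hence $\mc{E}=\kappa^2+\ord(\kappa)$ for the leading term $\kappa^2$), and the final conclusion $\mc{E}(U(\e),\til{\alpha}^\inc,\lambda_\e/|\lambda_\e|)\ge C\e^{-1}$ is unaffected in either treatment.
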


\section{Double and triple barrier problem on the line}\label{sec:line}
In this section, we study simple examples for quantum walks on the line.
Although the definition looks different from our setting of quantum walks on directed graphs, it can in fact isomorphic to specific cases through the isomorphism given by \eqref{eq:isom-QW}.
Computations using transfer matrices are found in Appendix~\ref{app:on-the-line}.
In a standard definition, discrete time quantum walk on the line is performed on the Hilbert space 
\ben
\ell^2(\Z;\C^2)=\{\psi:\Z\to\C^2;\,\|\psi\|_{\ell^2}<+\infty\},\quad
\|\psi\|_{\ell^2}^2=\sum_{x\in\Z}\|\psi(x)\|_{\C^2}^2.
\een
The unitary time evolution $\til{U}$ on $\ell^2(\Z,\C^2)$ is defined by the composition $\til{U}=SC$. The shift operator $S$ translates each entry of a state $\psi\in\ell^2(\Z,\C^2)$
\ben
S\psi(x)=\begin{pmatrix}1&0\\0&0\end{pmatrix}\psi(x+1)+\begin{pmatrix}0&0\\0&1\end{pmatrix}\psi(x-1),
\een
and coin operator $C$ is the multiplication operator of a $2\times2$ unitary matrix-valued function $C:\Z\to\mathrm{U}(2)$.

\subsection{Double barrier problem}\label{sec:DB}
In the double barrier problem, we assume that $C(x)=I_2$ holds except for two points $x\in \Z\setminus\{0,x_0\}$ $(x_0\ge1)$.
This is equivalent to consider a quantum walk $U$ on the directed graph $(V,A)$ given by
\begin{align*}
&V=\{L(x),\,R(x);\,x\in\Z\},\quad L(0)=R(0),\  L(x_0)=R(x_0),
\\&A=\{a_L(x):=(L(x+1),L(x)),\,a_R(x):=(R(x-1),R(x));\,x\in\Z\},
\end{align*}
and
\begin{align*}
&U\pphi(a_L(x-1))=\pphi(a_L(x)),\quad
U\pphi(a_R(x+1))=\pphi(a_R(x))\quad&&(x\notin\{0,x_0\}),\\
&\begin{pmatrix}U\pphi(a_L(x-1))\\U\pphi(a_R(x+1))\end{pmatrix}=C(x)\begin{pmatrix}\pphi(a_L(x))\\\pphi(a_R(x))\end{pmatrix}\quad&&(x\in\{0,x_0\}).
\end{align*}
One has $\iota U\iota^*=\til{U}$ with the isomorphism $\iota:\ell^2(A)\to\ell^2(\Z;\C^2)$ defined by
\be\label{eq:isom-QW}
\iota\pphi(x)=\begin{pmatrix}\pphi(a_L(x))\\\pphi(a_R(x))\end{pmatrix}.
\ee

Note that this graph has two incoming and two outgoing tails:
\begin{align*}
&V_1^\inc=\{R(x);\,x\le 0\},\quad V_2^\inc=\{L(x);\, x\ge x_0\},\\
&V_1^\out=\{L(x);\,x\le -1\},\quad V_2^\out=\{R(x);\, x\ge x_0+1\},
\end{align*}
\begin{align*}
&A_1^\inc=\{a_R(x);\,x\le 0\},\quad A_2^\inc=\{a_L(x);\,x\ge x_0\},\quad
\\&A_1^\out=\{a_L(x);\,x\le -1\},\quad A_2^\out=\{a_R(x);\,x\ge x_0+1\}.
\end{align*}
We study the transmission probability $T(z)=T(\{2\},\dl_{\omega_2^\inc},z)$ $(\omega_2^\inc=a_L(x_0))$.
It is well-known that the unitarity of the scattering matrix implies 
\ben
T(\{1\},\dl_{\omega_1^\inc},z)=T(\{2\},\dl_{\omega_2^\inc},z).
\een
An equivalent definition of the transmission probability for this case is given by 
\ben
T(z)=\left|(1\ 0)\pphi_z(N)\right|^{-2},\qquad(\text{this is invariant for }N\gg1)
\een
where $\pphi_z$ is the generalized eigenfunction $(\til{U}-z)\pphi_z=0$, such that
\be\label{eq:gen-eigen-tildeU}
\pphi_z(x)=\begin{pmatrix}z^{x}\\0\end{pmatrix}\quad(x\le-1).
\ee

In this setting, the transmission probability $T(z)$ is given by
\be\label{eq:transm-DB}
T(z)=\left|\frac{C(0)_{11}C(x_0)_{11}}{z^{2x_0}-C(0)_{21}C(x_0)_{12}}\right|^2
=1-\left|\frac{C(x_0)_{21}z^{2x_0}+C(0)_{21}(\det C(x_0))}{z^{2x_0}-C(0)_{21}C(x_0)_{12}}\right|^2,
\ee
where $C(x)_{jk}$ stands for the $(j,k)$-entry of $C(x)$.
We easily see that the existence of $z\in\mathbb{S}^1$ such that $T(z)=1$ is equivalent to 
\be\label{eq:sym-eq}
|C(0)_{21}|=|C(x_0)_{21}|.
\ee
Under this condition, $T(z)=1$ holds for $z\in\mathbb{S}^1$ satisfying
\be\label{eq:res-tun-energies}
z^{2x_0}=-\frac{C(0)_{21}(\det C(x_0))}{C(x_0)_{21}}.
\ee
On the other hand, the resonances of this quantum walk are $0$ and the $2x_0$-roots of the equation
\be\label{eq:QC-DB}
\lambda^{2x_0}=C(0)_{21}C(x_0)_{12}.
\ee
Note that each root of the equation \eqref{eq:res-tun-energies} is the closest point on $\mathbb{S}^1$ to a resonance satisfying \eqref{eq:QC-DB} since one has (see Appendix~\ref{app:on-the-line})
\be\label{eq:arg-identity}
\ope{arg}\left(-\frac{C(0)_{21}(\det C(x_0))}{C(x_0)_{21}}\right)=\ope{arg}(C(0)_{21}C(x_0)_{12}).
\ee

Let us break the condition \eqref{eq:sym-eq} slightly.
We assume that $C(x)$ depends on a small parameter $\e\ge0$, and that 
\be\label{eq:broken-sym-eq}
\bigl||C(0)_{21}|-|C(x_0)_{21}|\bigr|=\ord(\e),
\ee
as $\e\to+0$.
This implies that $|C(0)_{21}|$ and $|C(x_0)_{21}|$ are close to each other for small $\e>0$.
It however does not implies the (asymptotically perfect) resonant tunneling, that is, the existence of $z\in\mathbb{S}^1$ such that $T(z)$ is close to one.
For example, let $c_1$ and $c_2$ be positive constants and put
\be\label{eq:broken-sym-eq-ex}
C(0)=\begin{pmatrix}e^{-c_1/\e}&\sqrt{1-e^{-2c_1/\e}}\\-\sqrt{1-e^{-2c_1/\e}}&e^{-c_1/\e}\end{pmatrix},\quad
C(x_0)=\begin{pmatrix}e^{-c_2/\e}&\sqrt{1-e^{-2c_2/\e}}\\-\sqrt{1-e^{-2c_2/\e}}&e^{-c_2/\e}\end{pmatrix}.
\ee
The transmission probability is given by
\ben
T(z)=\frac{e^{-2(c_1+c_2)/\e}}{\left|z^{2x_0}-\sqrt{(1-e^{-2c_1/\e})(1-e^{-2c_2/\e})}\right|^2}\le
\frac{e^{-2(c_1+c_2)/\e}(2+\ord(e^{-6c_1/\e}+e^{-6c_2/\e}))}{(e^{-2c_1/\e}+e^{-2c_2/\e})^2}.
\een
This implies that the transmission probability $T(z)$ is exponentially small and bounded by $K \exp(-2|c_2-c_1|\e^{-1})$ with some constant $K>0$ uniformly for $z\in\mathbb{S}^1$ provided that $c_1\neq c_2$.
Note that, in this case, each resonance $\lambda$ satisfying \eqref{eq:QC-DB} is exponentially close to the unit circle for small $\e>0$:
\ben
|\lambda|=1+\ord(e^{-2c_1/\e}+e^{-2c_2/\e}).
\een
As a conclusion, neither the condition \eqref{eq:broken-sym-eq} nor the small distance from a resonance implies the resonant tunneling effect.

Recall that, in Corollary~\ref{cor:res-tun}, we do not require a resonant state to be perfectly symmetric for observing the resonant tunneling effect.
The symmetry \eqref{eq:cond-res-tun1} modulo $\ord(\e)$ results in the resonant tunneling effect modulo $\ord(\e)$: $T(z)=1+\ord(\e)$.
In the double barrier setting, this symmetry is rewritten as
\be\label{eq:DB-sym-cond}
\left|(0\ \lambda)\pphi_\lambda(x_0+1)\right|=1+\ord(\e),
\ee
where $\lambda$ is a resonance satisfying \eqref{eq:QC-DB}, and $\pphi_\lambda$ is the associated resonant state satisfying \eqref{eq:gen-eigen-tildeU}.
%By substituting \eqref{eq:QC-DB} to the above equality, one obtains
Since one has
\ben
\left|(0\ \lambda)\pphi_\lambda(x_0+1)\right|=
\left|\frac{C(x_0)_{21}\lambda^{x_0}+C(0)_{21}(\det C(x_0))\lambda^{-x_0}}{C(0)_{11}C(x_0)_{11}}\right|
=
\left|\frac{C(0)_{21}}{C(x_0)_{12}}\right|^{1/2}\left|\frac{C(x_0)_{22}}{C(0)_{11}}\right|,
\een
we immediately see that the example given by \eqref{eq:broken-sym-eq-ex} does not satisfy the symmetry condition \eqref{eq:DB-sym-cond} of the resonant state.
%The symmetry on the equation \eqref{eq:sym-eq} is very unstable under perturbations at least compared to that on the resonant state.

\begin{remark}
The above example given by \eqref{eq:broken-sym-eq-ex} is related to the double barrier problem of the Schr\"odinger equation on the line:
\ben
\left(-\e^2\frac{d^2}{dx^2}+V(x)-E\right)u(x)=0\quad(x\in\R),
\een
where $E>0$ and $V\in C^\infty(\R;\R)$ such that
\ben
\{x\in\R;\,V(x)\ge E\}=[a_1^-(E),a_1^+(E)]\cup[a_2^-(E),a_2^+(E)]\qquad(a_1^-<a_1^+<a_2^-<a_2^+).
\een
The correspondence between the stationary problems for Schr\"odinger equations and quantum walks on the line is studied in \cite{Hi}.
For this case, the constant $c_j$ $(j=1,2)$ is given by the Agmon distance $\int_{a_j^-}^{a_j^+}\sqrt{V(x)-E}\,dx$.
\end{remark}

\begin{remark}
There is also a positive result under \eqref{eq:broken-sym-eq}.
In addition to \eqref{eq:broken-sym-eq}, we assume that both $1-|C(0)_{21}|$ and $|C(0)_{21}|$ are bounded from below by an $h$-independent positive constant for $\e\ge0$.
Then one has $T(z)=1+\ord(\e)$ for $z\in\mathbb{S}^1$ satisfying  
\ben
z^{2x_0}=\frac{C(0)_{21}C(x_0)_{12}}{\left|C(0)_{21}C(x_0)_{12}\right|}.
\een
\end{remark}

\subsection{Triple barrier problem}\label{sec:TB}
For the triple barrier problem, we assume that $C(x)=I_2$ holds except for three points $x\in \Z\setminus\{0,x_0,x_1\}$ $(0<x_0<x_1)$.
This is equivalent to consider a quantum walk $U$ on the directed graph $(V,A)$ given by 
\begin{align*}
&V=\{L(x),\,R(x);\,x\in\Z\},\quad L(0)=R(0),\  L(x_0)=R(x_0),\ L(x_1)=R(x_1),
\\&A=\{a_L(x):=(L(x+1),L(x)),\,a_R(x):=(R(x-1),R(x));\,x\in\Z\},
\end{align*}
and
\begin{align*}
&U\pphi(a_L(x-1))=\pphi(a_L(x)),\quad
U\pphi(a_R(x+1))=\pphi(a_R(x))\quad&&(x\notin\{0,x_0,x_1\}),\\
&\begin{pmatrix}U\pphi(a_L(x-1))\\U\pphi(a_R(x+1))\end{pmatrix}=C(x)\begin{pmatrix}\pphi(a_L(x))\\\pphi(a_R(x))\end{pmatrix}\quad&&(x\in\{0,x_0,x_1\}).
\end{align*}
One has $\iota U\iota^*=\til{U}$ with the same isomorphism $\iota$ as for the double barrier case.

The transmission and reflection probabilities $T(z)$ and $R(z)=1-T(z)$ for this case is given by
\be\label{eq:transm-TB}
T(z)=\left|\frac{C(0)_{11}C(x_0)_{11}C(x_1)_{11}}{\mathtt{a}(z)}\right|^2,\qquad R(z)=\left|\frac{\mathtt{b}(z)}{\mathtt{a}(z)}\right|^2,
\ee
with $\mathtt{a}=\mathtt{a}(z)$, $\mathtt{b}=\mathtt{b}(z)$ given by
\begin{align*}
\mathtt{a}(z)
&=z^{x_1+1}-C(x_0)_{21}C(x_1)_{12}z^{2x_0-x_1+1}
\\&\qquad -C(0)_{21}C(x_0)_{12}z^{x_1-2x_0+1}-C(0)_{21}C(x_1)_{12}\det(C(x_0))z^{-x_1+1}
\\
\mathtt{b}(z)
&=
C(x_1)_{21}z^{x_1}+C(x_0)_{21}(\det C(x_1))z^{2x_0-x_1}-C(0)_{21}C(x_0)_{12}C(x_1)_{21}z^{x_1-2x_0}
\\&\qquad+C(0)_{21}(\det C(x_0)C(x_1))z^{-x_1}.
\end{align*}
\begin{comment}
\begin{align*}
&z^{x_1-1}%C(0)_{11}C(x_0)_{11}C(x_1)_{11}
\mathtt{a}
\\&=z^{2x_1}-C(x_0)_{21}C(x_1)_{12}z^{2x_0}
%\\&\quad
-C(0)_{21}C(x_0)_{12}z^{2(x_1-x_0)}-C(0)_{21}C(x_1)_{12}(\det C(x_0)),
\\&z^{x_1}%C(0)_{11}C(x_0)_{11}C(x_1)_{11}
\mathtt{b}
\\&=C(x_1)_{21}z^{2x_1}+C(x_0)_{21}(\det C(x_1))z^{2x_0}
\\&\quad
-C(0)_{21}C(x_0)_{12}C(x_1)_{21}z^{2(x_1-x_0)}+C(0)_{21}(\det C(x_0)C(x_1)).
\end{align*}
\end{comment}

We give a non-symmetric example of resonant tunneling.
Let $x_0=2$, $x_1=3$, and $z=i$.
Then the reflection coefficient $R(z)$ is
\ben
\left|\frac{-C(x_1)_{21}+C(x_0)_{21}(\det C(x_1))+C(0)_{21}C(x_0)_{12}C(x_1)_{21}+C(0)_{21}(\det C(x_0)C(x_1))}
{-1-C(x_0)_{21}C(x_1)_{12}+C(0)_{21}C(x_0)_{12}-C(0)_{21}C(x_1)_{12}(\det C(x_0))}\right|^2.
\een
Despite the position of the barriers is non-symmetric, this can vanish.
More specifically, let us assume that $C(x)$ for $x\in\{0,x_0,x_1\}$ has the following form
\be\label{eq:simple-coin-TB}
C(x)=\begin{pmatrix}\sqrt{1-r(x)^2}&r(x)\\-r(x)&\sqrt{1-r(x)^2}\end{pmatrix},\qquad(-1<r(x)<1).
\ee
Then one has
\ben
R(i)=\left|\frac{r(x_1)-r(x_0)+r(0)r(x_0)r(x_1)-r(0)}{-1+r(x_0)r(x_1)-r(0)r(x_0)+r(0)r(x_1)}\right|^2.
\een
The condition for $R(i)=0$ (equivalently $T(i)=1$) is given by
\be\label{eq:ex-TB-res-tun}
%r(x_1)=\frac{r(0)+r(x_0)}{1-r(0)r(x_0)}.
r(x_0)=\frac{r(x_1)-r(0)}{1-r(0)r(x_1)}.
\ee
The absolute value of the right-hand-side of the above identity is less than one since 
\ben
(1-r(0)r(x_1))^2-(r(x_1)-r(0))^2=(1-r(0)^2)(1-r(x_1)^2)>0.
\een
%At least if $r(0)r(x_1)$ is positive, $r(x_0)$ satisfying \eqref{eq:ex-TB-res-tun} exists between $-1$ and $1$.
For example, $r(0)=1/2$, $r(x_0)=2/5$, $r(x_1)=3/4$ satisfy this identity.

\begin{comment}
Under the conditions $x_0=2$, $x_1=3$, and \eqref{eq:simple-coin-TB}, the resonances are $0$ and the six roots of
\ben
\lambda^3\mathtt{a}(\lambda)=
-r(x_1)\lambda^6-r(x_0)\lambda^4-r(0)r(x_0)r(x_1)\lambda^2-r(0)=0.
\een
Assuming also \eqref{eq:ex-TB-res-tun}, we can rewrite this condition as
\ben
r(x_1)\lambda^6
+\frac{r(x_1)-r(0)}{1+r(0)r(x_1)}\lambda^4
+r(0)r(x_1)\frac{r(x_1)-r(0)}{1+r(0)r(x_1)}\lambda^2
+r(0)=0.
\een
The left-hand-side is real for $\lambda=is$ with $s\in\R$, positive for $\lambda=0$, and negative for $\lambda=i$\footnote{We can solve this equation since the left-hand-side is of degree three with respect to $\lambda^2$.}.
Therefore, there exists at least one resonance whose closest point on $\mathbb{S}^1$ is $i$.

The symmetry condition on the resonant state \eqref{eq:DB-sym-cond} is rewritten as
\ben
\left|\frac{\mathtt{b}(\lambda)}{C(0)_{11}C(x_0)_{11}C(x_1)_{11}}\right|\Big|_{\mathtt{a}(\lambda)=0}=1+\ord(\e).
\een
\end{comment}

\section{Concrete examples on graphs}\label{sec:examples}

\subsection{A model in the context of  matrix Schr\"odinger operators}\label{sec:ex-Sch}
\begin{figure}
\centering
\begin{minipage}[b]{0.49\columnwidth}
    \centering
    \includegraphics[bb=0 0 560 298, width=8cm]{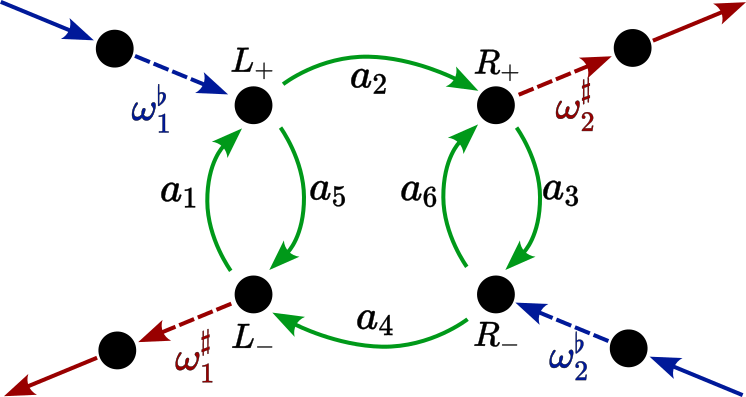}
\caption{A model related to a matrix Schr\"odinger operator}
\label{fig:matrix-Sch}
\end{minipage}
\begin{minipage}[b]{0.49\columnwidth}
    \centering
    \includegraphics[bb=0 0 412 234, width=7cm]{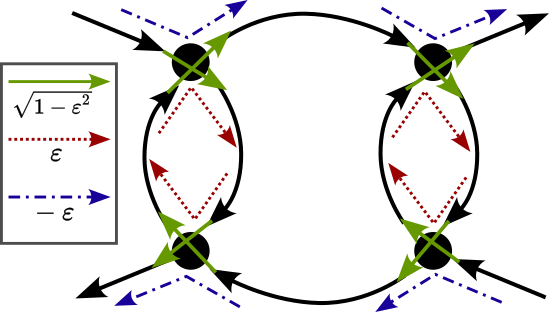}
\caption{Dynamics of the example}
\label{fig:motion_matrix-Sch}
\end{minipage}
\end{figure}
We consider a quantum walk on a graph which is related to the characteristic set of a model of matrix Schr\"odinger operator \cite{AFH1}.
The graph $(V,A)$ is introduced as follows (see also Figure~\ref{fig:matrix-Sch}).
Put $V_0:=\{L_+,R_+,L_-,R_-\}$ and $A_0:=\{a_1,\,a_2,\,\ldots,a_6\}$ with %$(a_1^-,a_1^+)=(L_-,L_+)$, $(a_2^-,a_2^+)=(L_+,R_+)$, $(a_3^-,a_3^+)=(R_+,R_-)$, $(a_4^-,a_4^+)=(R_-,L_-)$, $(a_5^-,a_5^+)=(L_+,L_-)$, $(a_6^-,a_6^+)=(R_-,R_+)$.
$a_1=(L_-,L_+)$, $a_2=(L_+,R_+)$, $a_3=(R_+,R_-)$, $a_4=(R_-,L_-)$, $a_5=(L_+,L_-)$, $a_6=(R_-,R_+)$.
Let $\omega_1^\inc$ and $\omega_2^\inc$ be arcs incoming to $L_+$ and $R_-$, respectively, and let $\omega_1^\out$ and $\omega_2^\out$ be arcs outgoing from $L_-$ and $R_+$, respectively.
Then, $V=V_0\cup V^\inc\cup V^\out$ and $A=A_0\sqcup A^\inc\sqcup A^\out$ with $V^\dir=%X_{\omega_1^\dir}\cup X_{\omega_2^\dir}
V_1^\dir\cup V_2^\dir$  and $A^\dir=%\Xi^\dir_{\omega_1^\dir}\cup\Xi^\dir_{\omega_2^\dir}
A_1^\dir\cup A_2^\dir$ $(\dir=\{\out,\inc\})$.

Define the isometries $U_v(\e):A_v^\inc\to A_v^\out$ for each vertex $v\in V_0$ (see Figure~\ref{fig:motion_matrix-Sch}) by 
\begin{align*}
&
U_{L_+}(\e)\begin{pmatrix}\dl_{a_1}&\dl_{\omega^\inc_1}\end{pmatrix}
=\begin{pmatrix}\dl_{a_2}&\dl_{a_5}\end{pmatrix}
\begin{pmatrix}\sqrt{1-\e^2}&-\e\\\e&\sqrt{1-\e^2}\end{pmatrix},
\\&
U_{R_-}(\e)\begin{pmatrix}\dl_{a_3}&\dl_{\omega^\inc_2}\end{pmatrix}
=\begin{pmatrix}\dl_{a_4}&\dl_{a_6}\end{pmatrix}
\begin{pmatrix}\sqrt{1-\e^2}&-\e\\\e&\sqrt{1-\e^2}\end{pmatrix},
\\&
U_{L_-}(\e)\begin{pmatrix}\dl_{a_4}&\dl_{a_5}\end{pmatrix}
=\begin{pmatrix}\dl_{a_4}&\dl_{\omega_1^\out}\end{pmatrix}
\begin{pmatrix}\sqrt{1-\e^2}&\e\\-\e&\sqrt{1-\e^2}\end{pmatrix},
\\&
U_{R_+}(\e)\begin{pmatrix}\dl_{a_2}&\dl_{a_6}\end{pmatrix}
=\begin{pmatrix}\dl_{a_3}&\dl_{\omega_2^\out}\end{pmatrix}
\begin{pmatrix}\sqrt{1-\e^2}&\e\\-\e&\sqrt{1-\e^2}\end{pmatrix}.
\end{align*}
Then for $\e=0$, one has the trivial dynamics 
\ben
U(0)\dl_{\omega_1^\inc}=\dl_{a_5},\quad
U(0)^2\dl_{\omega_1^\inc}=\dl_{\omega_1^\out},\quad
U(0)\dl_{\omega_2^\inc}=\dl_{a_6},\quad
U(0)^2\dl_{\omega_2^\inc}=\dl_{\omega_2^\out},
\een
and consequently the scattering matrix becomes diagonal
\ben
\Sigma(0,z)\begin{pmatrix}\dl_{\omega_1^\inc}&\dl_{\omega_2^\inc}\end{pmatrix}
=\begin{pmatrix}\dl_{\omega_1^\out}&\dl_{\omega_2^\out}\end{pmatrix}\begin{pmatrix}z^{-1}&0\\0&z^{-1}\end{pmatrix}.
\een
For $0\le\e<1/\sqrt{2}$, one has
\be\label{eq:s-matrix-ex1}
\Sigma(\e,z)\begin{pmatrix}\dl_{\omega_1^\inc}&\dl_{\omega_2^\inc}\end{pmatrix}
=\begin{pmatrix}\dl_{\omega_1^\out}&\dl_{\omega_2^\out}\end{pmatrix}
\frac1{z(z^2+1-2\e^2)}\begin{pmatrix}
(1-\e^2)(z^2+1)&\e^2(z^2-1)\\
\e^2(z^2-1)&(1-\e^2)(z^2+1)\end{pmatrix}.
\ee
In particular, the resonant-tunneling occurs for $z=\pm i$ (see Figure~\ref{Fig-res-tun-ms}):
\be\label{eq:res-tun-ms}
\Sigma(\e,\pm i)\begin{pmatrix}\dl_{\omega_1^\inc}&\dl_{\omega_2^\inc}\end{pmatrix}
=\begin{pmatrix}\dl_{\omega_1^\out}&\dl_{\omega_2^\out}\end{pmatrix}
\begin{pmatrix}
0&\mp i\\
\mp i&0\end{pmatrix}.
\ee
As we have seen in Corollary~\ref{cor:res-tun}, this is due to the fact that $\pm i$ are the closest points on the unit circle to the resonances $\pm i\sqrt{1-2\e^2}$.
\begin{figure}
\centering
\includegraphics[bb=0 0 1183 424, width=15cm]{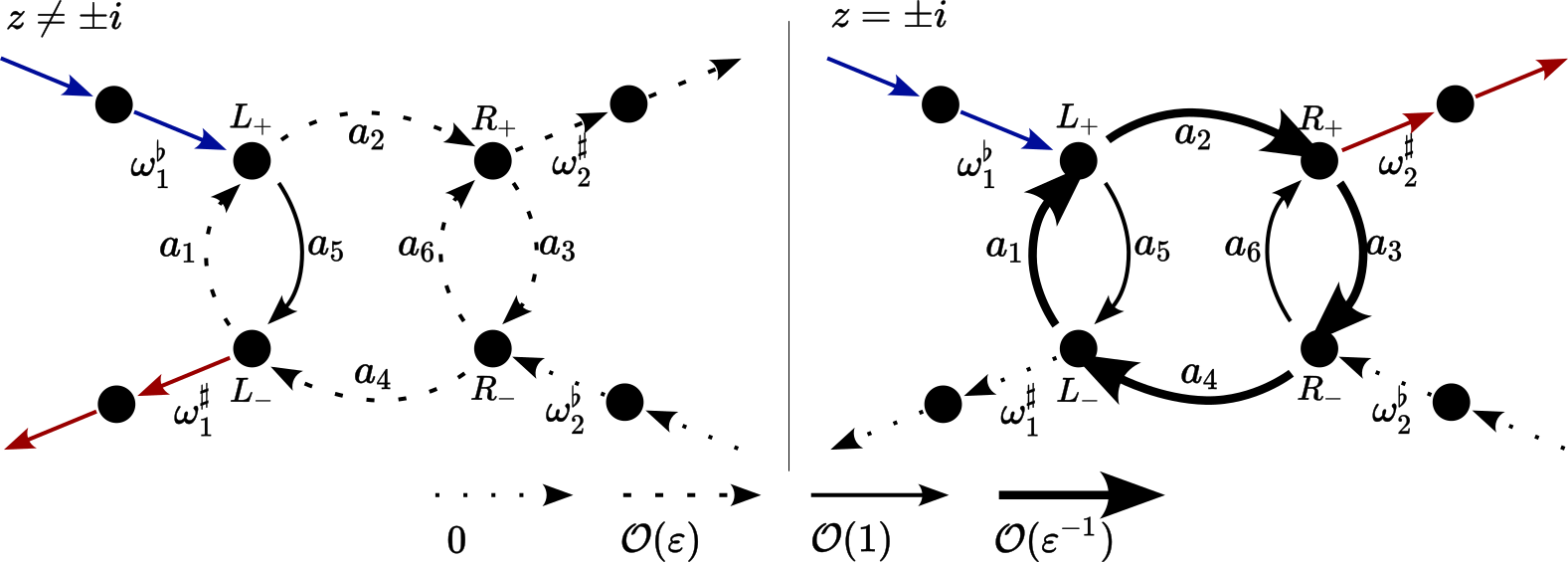}
\caption{Behavior of the generalized eigenfunction $\pphi_{\lambda_\pm}$}
\label{Fig-res-tun-ms}
\end{figure}

Let us see more informations for this model.
We use the notation
\ben
\pphi|_{A_0}=\begin{pmatrix}\pphi(a_1)&\pphi(a_2)&\pphi(a_3)&\pphi(a_4)&\pphi(a_5)&\pphi(a_6)\end{pmatrix},
\een
for a function $\pphi$ defined on $A_0$.
The eigenvalues of the matrix $U(\e)|_{A_0}=\chi(A_0)U(\e)\chi(A_0)^*$ are $\{0,\pm 1,\pm i\sqrt{1-2\e^2}\}$ $(0\le \e<1/\sqrt{2})$.
Each non-zero eigenvalue is simple while the multiplicity of the eigenvalue zero is two.
The eigenfunction $\pphi_{\pm1}$ associated with $\pm 1$ is given by
\ben
\pphi_{\pm1}|_{A_0}
=\frac12\begin{pmatrix}1&\pm\sqrt{1-\e^2}&1&\pm\sqrt{1-\e^2}&\pm\e&\pm\e\end{pmatrix},
\een
$\pphi_{\lambda_\pm}$ associated with $\lambda_\pm=\pm i\sqrt{1-2\e^2}$ is given by
\ben
%\begin{pmatrix}\pphi_{\lambda_\pm}(a_1)&\pphi_{\lambda_\pm}(a_2)&\ldots&\pphi_{\lambda_\pm}(a_6)\end{pmatrix}
\pphi_{\lambda_\pm}|_{A_0}
=\frac1{2\lambda_\pm}\begin{pmatrix}\lambda_\pm&\sqrt{1-\e^2}&-\lambda_\pm&-\sqrt{1-\e^2}&\e&-\e\end{pmatrix},
\een
and $\pphi_{0,1}$ and $\pphi_{0,2}$ associated with  $0$ is given by
\ben
\pphi_{0,1}|_{A_0}=\begin{pmatrix}0&0&0&-\e&\sqrt{1-\e^2}&0\end{pmatrix},\qquad
%\pphi_{0,1}=-\e\dl_{a_4}+\sqrt{1-\e^2}\dl_{a_5},\quad
\pphi_{0,2}|_{A_0}=\begin{pmatrix}0&-\e&0&0&0&\sqrt{1-\e^2}\end{pmatrix}.
%\pphi_{0,2}=-\e\dl_{a_2}+\sqrt{1-\e^2}\dl_{a_6}.
\een
Then the corresponding eigenfunctions to $(U(\e)|_{A_0})^*$ are given by $\pphi_{\pm1}^\circledast=\pphi_{\pm1}$, 
\ben
%\begin{pmatrix}\pphi_{\lambda_\pm}^\circledast(a_1)&\pphi_{\lambda_\pm}^\circledast(a_2)&\ldots&\pphi_{\lambda_\pm}^\circledast(a_6)\end{pmatrix}
\pphi_{\lambda_\pm}^\circledast|_{A_0}
=\frac1{2\lambda_\pm}\begin{pmatrix}\lambda_\pm&\sqrt{1-\e^2}&-\lambda_\pm&-\sqrt{1-\e^2}&-\e&\e\end{pmatrix},
\een
\ben
\pphi_{0,1}^\circledast|_{A_0}=
\frac1{1-2\e^2}\begin{pmatrix}0&-\e(1-\e^2)&0&\e^3&(1-\e^2)^{3/2}&-\e^2\sqrt{1-\e^2}\end{pmatrix},
%\pphi_{0,1}^\circledast=\frac1{1-2\e^2}\left(-\e(1-\e^2)\dl_{a_2}+\e^3\dl_{a_4}+(1-\e^2)^{3/2}\dl_{a_5}-\e^2\sqrt{1-\e^2}\dl_{a_6}\right),
\een
and
\ben
\pphi_{0,2}^\circledast|_{A_0}=\frac1{1-2\e^2}
\begin{pmatrix}0&\e^3&0&-\e(1-\e^2)&-\e^2\sqrt{1-\e^2}&(1-\e^2)^{3/2}\end{pmatrix}.
%\left(\e^3\dl_{a_2}-\e(1-\e^2)\dl_{a_4}-\e^2\sqrt{1-\e^2}\dl_{a_5}+(1-\e^2)^{3/2}\dl_{a_6}\right).
\een
Note that $\chi(A_0)^*\pphi_{\pm1}$ is an eigenfunction associated with the eigenvalue $\pm1$ of $U(\e)$. 
We use the same notation $\pphi_{\lambda_\pm}$ for the resonant state obtained by continuing the eigenfunction $\pphi_{\lambda_\pm}$ (Proposition~\ref{prop:matrix-reduction}).
Similarly, let $\pphi_{\lambda_\pm}^\circledast$ denote the incoming resonant state obtained by continuing $\pphi_{\lambda_\pm}^\circledast$ (Proposition~\ref{prop:matrix-reduction2}).
Then 
\begin{align*}
&\pphi_{\lambda_\pm}(\omega_1^\out)=\frac{\e\sqrt{1-\e^2}}{1-2\e^2},\qquad
\pphi_{\lambda_\pm}(\omega_2^\out)=-\frac{\e\sqrt{1-\e^2}}{1-2\e^2},
\\&
\pphi_{\lambda_\pm}^\circledast(\omega_1^\inc)=-\frac{\e\sqrt{1-\e^2}}{1-2\e^2},\qquad
\pphi_{\lambda_\pm}^\circledast(\omega_2^\inc)=\frac{\e\sqrt{1-\e^2}}{1-2\e^2}.
\end{align*}
As we will see in Thoerem~\ref{thm:res-exp}, the scattering matrix is given by
\begin{align*}
\Sigma(\e,z)\alpha^\inc
&=\sum_{\pm}\frac{\lambda_\pm^2(\alpha^\inc,\chi(\Omega^\inc)\pphi_{\lambda_\pm}^\circledast)}{z-\lambda_\pm}\chi(\Omega^\out)\pphi_{\lambda_\pm}+\sum_{k=1,2}\frac{(\alpha^\inc,\chi(\Omega^\inc)U(\e)^*\pphi_{0,k}^\circledast)}{z}\chi(\Omega^\out)U(\e)\pphi_{0,k}
\\&
=\frac{-2\e\sqrt{1-\e^2}(\alpha^\inc(\omega_1^\inc)-\alpha^\inc(\omega_2^\inc))}{z^2+1-2\e^2}\left(z\re\chi(\Omega^\out)\pphi_{\lambda_+}-\sqrt{1-2\e^2}\im\chi(\Omega^\out)\pphi_{\lambda_+}\right)
\\&\quad+\frac{(1-\e^2)\alpha^\inc(\omega_1^\inc)-\e^2\alpha^\inc(\omega_2^\inc)}{(1-2\e^2)z}\dl_{\omega_1^\out}
+\frac{-\e^2\alpha^\inc(\omega_1^\inc)+(1-\e^2)\alpha^\inc(\omega_2^\inc)}{(1-2\e^2)z}\dl_{\omega_2^\out}
\\&
=\frac1{z(z^2+1-2\e^2)}\Bigl((z^2+1)(1-\e^2)
(\alpha^\inc(\omega_1^\inc)\dl_{\omega_1^\out}
+\alpha^\inc(\omega_2^\inc)\dl_{\omega_2^\out})
\\&\qquad\qquad\qquad\qquad\qquad+\e^2(z^2-1)
(\alpha^\inc(\omega_1^\inc)\dl_{\omega_2^\out}
+\alpha^\inc(\omega_2^\inc)\dl_{\omega_1^\out})
\Bigr).
\end{align*}
This coincides with \eqref{eq:s-matrix-ex1}.
Moreover, the generalized eigenfunction is given by $\pphi_{\alpha^\inc,z}=\alpha^\inc(\omega_1^\inc)\pphi_{1,z}+\alpha^\inc(\omega_2^\inc)\pphi_{2,z}$ with
\begin{align*}
\pphi_{1,z}
=&\frac1{(z^2+1-2\e^2)}
\Bigl(z\e\sqrt{1-\e^2}\quad
-\e(z^2-\e^2)\quad
-z\e\sqrt{1-\e^2}
\\&\qquad\qquad\qquad
-\e(1-\e^2)\quad
\sqrt{1-\e^2}(z^2+1-\e^2)\quad 
-\e^2\sqrt{1-\e^2}\Bigr),
\end{align*}
and
\begin{align*}
\pphi_{2,z}
=&\frac1{(z^2+1-2\e^2)}
\Bigl(-z\e\sqrt{1-\e^2} \quad 
-\e(1-\e^2)\quad 
z\e\sqrt{1-\e^2}
\\&\qquad\qquad\qquad
-\e(z^2-\e^2)\quad 
-\e^2\sqrt{1-\e^2}\quad 
\sqrt{1-\e^2}(z^2+1-\e^2)\Bigr),
\end{align*}
in $A_0$, and 
\begin{align*}
&\pphi_{1,z}(\omega_1^\inc)=1,\quad
\pphi_{1,z}(\omega_2^\inc)=0, \qquad
\pphi_{2,z}(\omega_1^\inc)=0,\quad
\pphi_{2,z}(\omega_2^\inc)=1,
\\&\pphi_{1,z}(\omega_1^\out)=\frac{(1-\e^2)(z^2+1)}{z^2(z^2+1-2\e^2)},\quad
\pphi_{1,z}(\omega_2^\out)=\frac{\e^2(z^2-1)}{z^2(z^2+1-2\e^2)}, 
\\&
\pphi_{2,z}(\omega_1^\out)=\frac{\e^2(z^2-1)}{z^2(z^2+1-2\e^2)},\quad
\pphi_{2,z}(\omega_2^\out)=\frac{(1-\e^2)(z^2+1)}{z^2(z^2+1-2\e^2)}.
\end{align*} 

\subsection{A model of cycle graph}\label{sec:model-cycle}
\begin{figure}
\centering
\begin{minipage}[b]{0.49\columnwidth}
    \centering
    \includegraphics[bb=0 0 550 311, width=8cm]{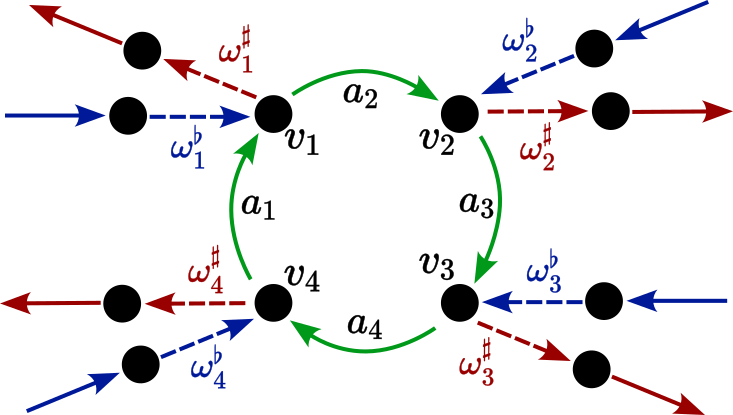}
\caption{An example of cycle graph with $N=4$}
\label{fig:cycle}
\end{minipage}
\begin{minipage}[b]{0.49\columnwidth}
    \centering
    \includegraphics[bb=0 0 312 150, width=6cm]{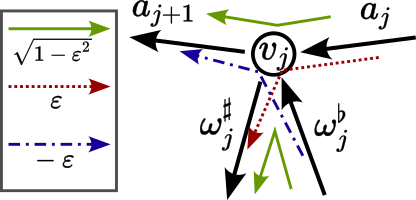}
\caption{Dynamics of the example}
\label{fig:motion_cycle}
\end{minipage}
\end{figure}
We then consider another model (see Figure~\ref{fig:cycle}).
Fix an integer $N\ge2$.
Let $V_0:=\{v_1,v_2,\ldots,v_N\}$ and $A_0:=\{a_1,a_2,\ldots,a_N\}$ with %$(a_j^-,a_j^+)=(v_{j-1},v_j)$, $v_0=v_N$.
$a_n=(v_{n-1},v_n)$ $(v_0=v_N)$.
Put $\Omega^\inc:=\{\omega_1^\inc,\ldots,\omega_N^\inc\}$ and $\Omega^\out:=\{\omega_1^\out,\ldots,\omega_N^\out\}$, where $\omega_n^\inc$ is an arc incoming to $v_n$, and $\omega_n^\out$ is an arc outgoing from $v_n$ $(n=1,2,\ldots,N)$.

Define the isometry $U_v(\e):A_v^\inc\to A_v^\out$ for each vertex $v\in V_0$ by
\ben
U_{v_n}(\e)\begin{pmatrix}\dl_{a_n}&\dl_{\omega_n^\inc}\end{pmatrix}
=\begin{pmatrix}\dl_{a_{n+1}}&\dl_{\omega_n^\out}\end{pmatrix}
\begin{pmatrix}\sqrt{1-c_n^2\e^2}&c_n\e\\ -c_n\e&\sqrt{1-c_n^2\e^2}\end{pmatrix},
\een
with a constant $c_n>0$ (see Figure~\ref{fig:motion_cycle}).
Then for $\e=0$, one has the trivial dynamics
\ben
U(0)\dl_{\omega_n^\inc}=\dl_{\omega_n^\out}\quad(n=1,2,\ldots,N),
\een
and consequently the scattering matrix becomes diagonal
\ben
\Sigma(0,z)\begin{pmatrix}\dl_{\omega_1^\inc}&\ldots&\dl_{\omega_N^\inc}\end{pmatrix}
=\begin{pmatrix}\dl_{\omega_1^\out}&\ldots&\dl_{\omega_N^\out}\end{pmatrix}I_N.
\een
For positive small $\e$, one has
\be\label{eq:s-matrix-ex2}
\Sigma(\e,z)\dl_{\omega_n^\inc}(\omega_l^\out)
%=(\Sigma(\e,z)\dl_{\omega_n^\inc},\dl_{\omega_l^\out})
=\left\{
\begin{aligned}
&\frac{(1-c_n^2\e^2)z^N-\tau_N}{\sqrt{1-c_n^2\e^2}(z^N-\tau_N)}
&&(l=n)\\
&-\frac{c_nc_l\e^2\tau_{l-1}\tau_N}{\tau_n(z^N-\tau_N)}z^{n-l}
&&(l<n)\\
&-\frac{c_nc_l\e^2\tau_{l-1}}{\tau_n(z^N-\tau_N)}z^{N+n-l}
&&(l>n)
\end{aligned}\right.
\ee
with $\tau_n=\prod_{k=1}^n\sqrt{1-c_k^2\e^2}$.
In particular, the resonant-tunneling effect takes place, for example, in the following situations (see Figure~\ref{Fig-res-tun-cycle}).
\begin{figure}
\centering
\includegraphics[bb=0 0 1798 752, width=15cm]{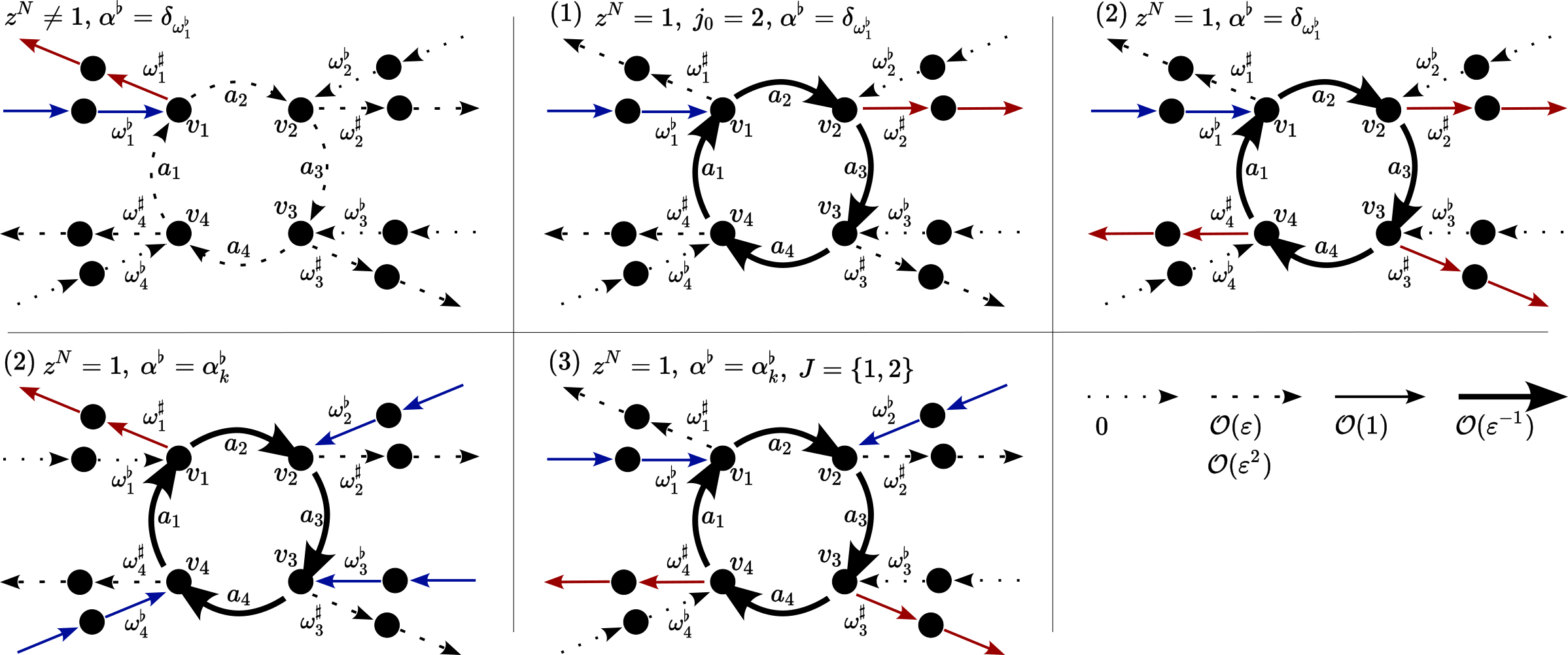}
\caption{Behavior of the generalized eigenfunction in each situation}
\label{Fig-res-tun-cycle}
\end{figure}
\begin{enumerate}
\item There exist $j_0\in\{2,3,\ldots,N\}$ such that $c_{j_0}=c_1\neq0$ and $c_n=0$ for $n\notin\{1,j_0\}$. Then one has $\tau_N=1-c_1\e^2$, 
\begin{align*}
&\Sigma(\e,z)\dl_{\omega_{1}^\inc}=\frac{(z^N-1)\sqrt{1-c_1^2\e^2}}{z^N-1+c_1^2\e^2}\dl_{\omega_1^\out}
-\frac{c_1^2\e^2z^{N+1-j_0}}{(z^N-1+c_1^2\e^2)}\dl_{\omega_{j_0}^\out},
\\&\Sigma(\e,z)\dl_{\omega_{j_0}^\inc}=\frac{(z^N-1)\sqrt{1-c_1^2\e^2}}{z^N-1+c_1^2\e^2}\dl_{\omega_{j_0}^\out}
-\frac{c_1^2\e^2z^{j_0-1}}{(z^N-1+c_1^2\e^2)}\dl_{\omega_{1}^\out},
\end{align*}
and $\Sigma(\e,z)\dl_{\omega_n^\inc}=\dl_{\omega_n^\out}$ for $n\notin\{1,j_0\}$.
In particular, for $z\in\mathbb{S}^1$ with $z^N=1$, one has
\be\label{eq:res-tun-c1}
\Sigma(\e,z)\dl_{\omega_1^\inc}=-z^{1-j_0}\dl_{\omega_{j_0}^\out},\quad
\Sigma(\e,z)\dl_{\omega_{j_0}^\inc}=-z^{j_0-1}\dl_{\omega_1^\out}.
\ee
\item 
Suppose $\sum_{n=2}^Nc_n^2=c_1^2\neq0$.
Put 
\ben
\alpha_k^\inc=\sum_{n=2}^N\frac{c_n\tau^{\frac{n-1}N}}{c_1\tau_n}e^{-i2\pi k(n-1)/N}\dl_{\omega_n^\inc},
\een
for $k=0,1,\ldots,N-1$.
Then one has $\tau_N=1-c_1^2\e^2+\ord(\e^4)$, and 
\be\label{eq:res-tun-c2}
\Sigma(\e,z)\dl_{\omega_{1}^\inc}(\omega_l^\out)=\left\{
\begin{aligned}
&\ord(\e^2)&&(l=1)\\
&-\frac{c_l}{c_1z^{l-1}}+\ord(\e^2)&&(l\ge2),
\end{aligned}\right.
\ee
\be\label{eq:res-tun-c3}
\Sigma(\e,e^{i2\pi k/N})\alpha_k^\inc=-\dl_{\omega_1^\out}+\ord(\e^2),
\ee
for $z\in\mathbb{S}^1$ with $z^N=1$.
\item 
There exists $J\subset\{1,2,\ldots,N\}$ such that $\sum_{n\in J}c_n^2=\sum_{n\notin J}c_n^2=:c^2\neq0$.
Then for 
\ben
\alpha^\inc_k:=\sum_{n\in J}\frac{c_n\tau_N^{\frac{n-1}N}}{c\tau_n}e^{-i2\pi k(n-1)/N}\dl_{\omega_n^\inc},
\een
one has
\be\label{eq:res-tun-c4}
\Sigma(\e,e^{2\pi ki/N})\alpha^\inc_k=-\sum_{n\notin J}c_n\dl_{\omega_n^\out}+\ord(\e^2).
\ee
\end{enumerate}

Let us see more informations for this model.
The non-zero resonances are $\{z\in\C;\, z^N=\tau_N\}=\{\lambda_k:=\tau_N^{1/N}\mu^k;\,k=1,2,\ldots,N\}$ with $\mu=e^{2\pi  i/N}$. They are all simple.
An associated resonant state $\pphi_{\lambda_k}$ and corresponding incoming resonant state $\pphi_{\lambda_k}^\circledast$ are given by
\ben
\pphi_{\lambda_k}(a_l)
=\frac{\tau_{l-1}}{\sqrt{N} \lambda_k^{l-1}},\quad
\pphi_{\lambda_k}^\circledast(a_l)
=\frac{\overline{\lambda_k}^{l-1}}{\sqrt{N}\tau_{l-1}},\quad(l=1,2,\ldots, N),
\een
on $A_0$, and
\ben
\pphi_{\lambda_k}(\omega_l^\out)=-\frac{\e c_l\tau_{l-1}}{\sqrt{N}\lambda_k^l},\quad
\pphi_{\lambda_k}^\circledast(\omega_l^\inc)=\frac{\e c_l\overline{\lambda_k}^{l-1}}{\sqrt{N}\tau_l},\quad(l=1,2,\ldots, N).
\een
As we will see in Thoerem~\ref{thm:res-exp}, the scattering matrix is given by
\begin{align*}
\Sigma(\e,z)\dl_{\omega_n^\inc}(\omega_l^\out)
&=\sum_{k=1}^N\frac{\lambda_k^2\overline{\pphi_{\lambda_k}^\circledast(\omega_n^\inc)}}{z-\lambda_k}\pphi_{\lambda_k}(\omega_l^\out)
+\til{\dl}_{n,l}
\\&
=-\frac{\e^2c_nc_l\tau_{l-1}\tau_N^{(n-l-1)/N}}{N\tau_n}
\sum_{k=1}^N\frac{\mu^{k(n-l-1)}}{z-\tau_N^{1/N}\mu^k}
+\til{\dl}_{n,l}.
\end{align*}
Formula~\eqref{eq:s-matrix-ex2} follows from this with
\be\label{eq:elementary-pw}
\sum_{k=0}^{N-1}\frac{\mu^{kp}}{z-c\mu^k}
=\frac{Nc^{N-p}z^{p-1}}{z^N-c^N}.
\qquad(c>0,\ p=1,2,\ldots,N)
\ee
See Appendix~\ref{app:elementary} for a proof of this identity.

\section{Scattering matrix in terms of resonant states}\label{sec:S-matrix-res}
In this section, we give an expression (Theorem~\ref{thm:res-exp}) of the scattering matrix $\Sigma(z)$ of a quantum walk induced by a unitary operator $U$.
We neither consider a family of operators depending on some parameter nor assume the simplicity of eigenvalues (i.e., we discuss without Condition~\ref{cond:pert}).
For this purpose, we introduce some notions of generalized resonant states.

\subsection{Statement of the result}
An outgoing function $\varphi\in\C^A$ is said to be a \textit{generalized resonant state} associated with a resonance $\lambda\in\ope{Res}(U)\setminus\{0\}$ if there exists $k\ge1$ such that $(U-\lambda)^k\varphi=0$.
Each generalized resonant state $\pphi$ satisfies $(U-\lambda)^{m(\lambda)}\pphi=0$ since $m(\lambda)$ is by definition the dimension of the vector space of the associated generalized resonant states $W(\lambda)$. 
As in the general theory, we can take a basis of $W(\lambda)$ which consists of Jordan chains.
For each non-zero resonance $\lambda\in\ope{Res}(U)\setminus\{0\}$, there exist $K=K(\lambda)\in\{1,2,\ldots,m(\lambda)\}$, numbers $l_1=l_1(\lambda),\ldots,l_K=l_K(\lambda)\in\{1,2,\ldots,m(\lambda)\}$ with $\sum_{k=1}^Kl_k= m(\lambda)$ and linearly independent outgoing functions $\{\pphi_{\lambda,k,l};\,k\in\{1,\ldots,K\},\,l\in\{1,\ldots,l_k\}\}$ such that
\ben
\pphi_{\lambda,k,l}=(U-\lambda)^{l_k-l}\pphi_{\lambda,k,l_k}\neq0\qtext{for}l=0,1,\ldots,l_k-1,\quad
(U-\lambda)^{l_k}\pphi_{\lambda,k,l_k}=0,
\een
holds for $k=1,2,\ldots,K$.
For each $k$, the tuple $(\pphi_{\lambda,k,l})_{1\le l\le l_k}$ is called \textit{a Jordan chain}.

The \textit{(generalized) incoming resonant states}  and \textit{Jordan chains} are also defined accordingly.
The following facts are consequence of  Propositions~\ref{prop:matrix-reduction}, \ref{prop:matrix-reduction2} and general theory of matrices (see e.g., \cite{Kato}).
For each basis $\{\pphi_{\lambda,k,l};\,k\in\{1,\ldots,K\},\,l\in\{1,\ldots,l_k\}\}$ of $W(\lambda)$ consists of $K$-Jordan chains, there exists a basis $\{\pphi^\circledast_{\lambda,k,l};\,k\in\{1,\ldots,K\},\,l\in\{1,\ldots,l_k\}\}$ of the space of generalized incoming resonant states associated with $\bar{\lambda}^{-1}$ such that
\be\label{eq:in-res1}
\pphi_{\lambda,k,l}^\circledast
=(U^*-\bar{\lambda})^{l-1}
\pphi_{\lambda,k,1}^\circledast\neq0\quad\text{for }\ l=1,2,\ldots,l_k,\quad
(U^*-\bar{\lambda})^{l_k}\pphi_{\lambda,k,1}^\circledast=0,
\ee
for each $k$, and
\be\label{eq:in-res2}
(\pphi_{\lambda,k,l},\pphi^\circledast_{\lambda,k',l'})
=\sum_{a\in A}\pphi_{\lambda,k,l}(a)\overline{\pphi^\circledast_{\lambda,k',l'}(a)}
=\til{\dl}_{(k,l),(k',l')},
\ee
where $\til{\dl}$ stands for the Kronecker delta.
Define an operator-valued function $M_{\lambda}=M_{\lambda}(z):\C^{\Omega^\inc}\to\C^{\Omega^\out}$ by 
\be\label{eq:def-M_lam2}
M_{\lambda}(z)\alpha^\inc=
\sum_{k=1}^{K(\lambda)}\sum_{l=1}^{l_k(\lambda)}
\left(\sum_{p=0}^{l_k-l}\frac{\left(\alpha^\inc,\chi(\Omega^\inc)
U^{-2}\pphi^\circledast_{\lambda,k,l+p}\right)_{\C^{\Omega^\inc}}}{(z-\lambda)^{p+1}}\right)
\chi(\Omega^\out)\pphi_{\lambda,k,l}\qquad(\alpha^\inc\in\C^{\Omega^\inc}).
\ee
\begin{remark}
Note that one has
\ben
U^{-2}\pphi_{\lambda,k,l+p}^\circledast
=\bar{\lambda}^2\pphi_{\lambda,k,l+p}^\circledast
+2\bar\lambda\pphi_{\lambda,k,l+p+1}^\circledast
+\pphi_{\lambda,k,l+p+2}^\circledast
\een
for each $k,l,p$ with the convention $\pphi_{\lambda,k,l}=0$ for $l>l_k$.
This implies that
\ben
\left(\alpha^\inc,\chi(\Omega^\inc)
U^{-2}\pphi^\circledast_{\lambda,k,l+p}\right)_{\C^{\Omega^\inc}}
=\sum_{\omega\in\Omega^\inc}\left(\lambda^2\overline{\pphi_{\lambda,k,l+p}(\omega)}
+2\lambda\overline{\pphi_{\lambda,k,l+p+1}(\omega)}+\overline{\pphi_{\lambda,k,l+p+2}(\omega)}\right)
\alpha^\inc(\omega).
\een
\end{remark}

Concerning the zero-resonance, we introduce the matrix $U|_{A_0}:=\chi(A_0)U\chi(A_0)^*$ acting on $\C^{A_0}$.
Suppose that $z=0$ is an eigenvalue of this matrix.
Then there exist $K=K(0)\in\N$, numbers $l_1=l_1(0),\ldots,l_K=l_K(0)$, and a basis  $\{v_{0,k,l};\,k\in\{1,\ldots,K\},\,l\in\{1,\ldots,l_k\}\}$ of the generalized eigenspace such that each $(v_{0,k,l})_{1\le l\le l_k}$ forms a Jordan chain.
There also exist a basis $\{v^\circledast_{0,k,l};\,k\in\{1,\ldots,K\},\,l\in\{1,\ldots,l_k\}\}$ of the generalized eigenspace associated with the eigenvalue $z=0$ of $(U|_{A_0})^*$, such that one has
%such that the modified version of \eqref{eq:in-res1} and \eqref{eq:in-res2}, that is,
\begin{align*}
&\chi(A_0)\pphi_{0,k,l}=\chi(A_0)U^{l_k-l}\pphi_{0,k,l_k}\neq0,\quad
\chi(A_0)U^{l_k}\pphi_{0,k,l_k}=0,
\\&
\chi(A_0)\pphi_{0,k,l}^\circledast=\chi(A_0)U^{-(l-1)}\pphi_{0,k,1}^\circledast\neq0,\quad
\chi(A_0)U^{-l_k}\pphi_{0,k,1}^\circledast=0,
\end{align*}
and $(\pphi_{0,k,l},\pphi^\circledast_{0,k',l'})=\til{\dl}_{(k,l),(k',l')}$ for $\pphi_{0,k,l},$ $\pphi_{0,k,l}^\circledast\in\cH$ given by
\ben
\pphi_{0,k,l}:=\chi(A_0)^*v_{0,k,l},\qquad
\pphi_{0,k,l}^\circledast:=\chi(A_0)^*v_{0,k,l}^\circledast.
\een
Define an operator-valued function $M_0=M_0(z):\C^{\Omega^\inc}\to\C^{\Omega^\out}$ by
\be\label{eq:def-M_0}
M_0(z)\alpha^\inc=
\sum_{k=1}^{K(0)}\sum_{l=1}^{l_k(0)}\left(\sum_{p=0}^{l_k(0)-l}\frac1{z^{p+1}}\left(\alpha^\inc,\chi(\Omega^\inc)U^*\pphi_{0,k,l+p}^\circledast\right)\right)\chi(\Omega^\out)U\pphi_{0,k,l}
+\chi(\Omega^\out)U\chi(\Omega^\inc)^*\alpha^\inc.
\ee
Remark that the operator $\chi(\Omega^\out)U\chi(\Omega^\inc)^*$ does not vanish identically only if there exists a pair $(\omega^\inc,\omega^\out)\in\Omega^\inc\times\Omega^\out$ such that $ (\omega^\inc)^+=(\omega^\out)^-=:v$ and $(U_v)_{\omega^\out,\omega^\inc}\neq0$.

The definition of the generalized eigenfunctions and the scattering matrix is extended to $z\in(\C\setminus\ope{Res}(U))\cup\mathbb{S}^1$ (Proposition~\ref{prop:gen-ef}). 
Here, the scattering matrix $\Sigma(z)$ is no longer isometric for $z\notin\mathbb{S}^1$.

The following theorem gives the resonance expansion of the scattering matrix.
\begin{theorem}\label{thm:res-exp}
For each $z\in(\C\setminus\ope{Res}(U))\cup\mathbb{S}^1$, one has
\ben
\Sigma(z)=\sum_{\lambda\in\ope{Res}(U)\setminus\mathbb{S}^1}M_\lambda(z).
\een
\end{theorem}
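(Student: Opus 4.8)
The plan is to reduce the computation of $\Sigma(z)$ to the resolvent of the finite, non-normal matrix $B:=U|_{A_0}=\chi(A_0)U\chi(A_0)^*$ on $\C^{A_0}$, and then to insert the Jordan (partial-fraction) expansion of $(zI-B)^{-1}$ expressed through the biorthogonal system of (generalized) resonant and incoming resonant states.

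First I would build the generalized eigenfunction explicitly. Given $\alpha^\inc\in\C^{\Omega^\inc}$, let $\psi_0$ be supported on the incoming tails and equal there to the free wave $\psi_0(a_{n,j}^\inc)=z^j\alpha^\inc(\omega_n^\inc)$. Using the shift dynamics \eqref{eq:dyn-on-tails} one checks that $(U-z)\psi_0$ vanishes on every tail, so the source $f:=(U-z)\psi_0$ is finitely supported, with $\chi(A_0)f=\chi(A_0)U\chi(\Omega^\inc)^*\alpha^\inc=:g$ and the only other possible component $\chi(\Omega^\out)U\chi(\Omega^\inc)^*\alpha^\inc$ coming from a direct incoming-to-outgoing coupling. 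Writing $R(z)$ for the meromorphic continuation (from $|z|>1$) of the cut-off outgoing resolvent, whose poles are the resonances, the function $\varphi:=\psi_0-R(z)f$ solves $(U-z)\varphi=0$, is outgoing, and satisfies $\chi(\Omega^\inc)\varphi=\alpha^\inc$. Reading $U\varphi=z\varphi$ at each arc of $\Omega^\out$ — whose value depends only on $\chi(A_0)\varphi$ and on $\chi(\Omega^\inc)\varphi=\alpha^\inc$, since the origin of $\omega_n^\out$ lies in $V_0$ — absorbs the prefactor $z$ and yields
\ben
\alpha^\out=z\chi(\Omega^\out)\varphi=\chi(\Omega^\out)U\chi(A_0)^*\chi(A_0)\varphi+\chi(\Omega^\out)U\chi(\Omega^\inc)^*\alpha^\inc.
\een

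Second, I would compute the compressed resolvent. Because incoming and outgoing tails are mutually disconnected and, by \eqref{eq:dyn-on-tails}, never feed back into $A_0$, the Schur complement of $U-z$ relative to $A_0$ carries no tail correction, so $\chi(A_0)R(z)\chi(A_0)^*=(B-z)^{-1}$; this is exactly the identification of the nonzero resonances with the eigenvalues of $B$ recorded in Proposition~\ref{prop:matrix-reduction}. Since $\chi(A_0)\psi_0=0$, one gets $\chi(A_0)\varphi=-(B-z)^{-1}g=(zI-B)^{-1}g$, hence the closed form
\ben
\Sigma(z)=\chi(\Omega^\out)U\chi(A_0)^*(zI-B)^{-1}\chi(A_0)U\chi(\Omega^\inc)^*+\chi(\Omega^\out)U\chi(\Omega^\inc)^*,
\een
valid for $z\in(\C\setminus\ope{Res}(U))\cup\mathbb{S}^1$ by Proposition~\ref{prop:gen-ef}. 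I would then substitute the spectral decomposition of $(zI-B)^{-1}$ in the biorthogonal Jordan chains $\{\chi(A_0)\pphi_{\lambda,k,l}\}$, $\{\chi(A_0)\pphi^\circledast_{\lambda,k,l}\}$ normalized by \eqref{eq:in-res2}, whose principal part at each $\lambda$ has exactly the shape $\sum_{l}\sum_{p}(z-\lambda)^{-(p+1)}$ pairing $\chi(A_0)\pphi_{\lambda,k,l}$ against $\chi(A_0)\pphi^\circledast_{\lambda,k,l+p}$. Term by term I would use that $\chi(\Omega^\out)U\chi(A_0)^*\chi(A_0)\pphi_{\lambda,k,l}=\chi(\Omega^\out)U\pphi_{\lambda,k,l}$ (the outgoing-tail part of $\pphi_{\lambda,k,l}$ does not affect $U$ read on $\Omega^\out$) and transpose $\chi(A_0)U\chi(\Omega^\inc)^*$ onto the dual side, invoking the Remark identity $U^{-2}\pphi^\circledast_{\lambda,k,l+p}=\bar\lambda^2\pphi^\circledast_{\lambda,k,l+p}+2\bar\lambda\pphi^\circledast_{\lambda,k,l+p+1}+\pphi^\circledast_{\lambda,k,l+p+2}$ to absorb the $\lambda$-shifts generated by $U\pphi_{\lambda,k,l}=\lambda\pphi_{\lambda,k,l}+\pphi_{\lambda,k,l-1}$ together with the $z=\lambda$ normalization. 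This matches each nonzero block with $M_\lambda(z)$ in \eqref{eq:def-M_lam2}, while the $z=0$ eigenvalue of $B$ combined with the direct term $\chi(\Omega^\out)U\chi(\Omega^\inc)^*$ assembles into $M_0(z)$ in \eqref{eq:def-M_0}.

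The main obstacle is the index bookkeeping in the last step: aligning the right chains $\chi(\Omega^\out)\pphi_{\lambda,k,l}$ with the correctly shifted conjugate chains $\chi(\Omega^\inc)U^{-2}\pphi^\circledast_{\lambda,k,l+p}$ so that the residues reproduce \eqref{eq:def-M_lam2} exactly — this is precisely where the $U^{-2}$ factor and the biorthogonality \eqref{eq:in-res2} are indispensable. A secondary point requiring care is the passage to $z\in\mathbb{S}^1$: an eigenvalue $\lambda\in\mathbb{S}^1$ of $B$ is a genuine eigenvalue of $U$, and although $(zI-B)^{-1}$ has a pole there, the associated incoming resonant state vanishes on $\Omega^\inc$ (Lemma~\ref{lem:on-tails} and Proposition~\ref{prop:matrix-reduction}), so $\chi(\Omega^\inc)U^{-2}\pphi^\circledast_{\lambda,\cdot}=0$ and that term is identically zero. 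Hence the summation may legitimately be restricted to $\ope{Res}(U)\setminus\mathbb{S}^1$, and the same vanishing shows that the identity extends continuously up to the unit circle.
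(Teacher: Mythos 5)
Your proposal is correct in substance and follows the same skeleton as the paper's proof: both reduce the problem to the finite non-normal matrix $U|_{A_0}$, arrive at the closed formula $\Sigma(z)=\chi(\Omega^\out)U\chi(A_0)^*(z-U|_{A_0})^{-1}\chi(A_0)U\chi(\Omega^\inc)^*+\chi(\Omega^\out)U\chi(\Omega^\inc)^*$, expand the resolvent in the biorthogonal Jordan chains, and identify the residues with $M_\lambda(z)$ of \eqref{eq:def-M_lam2} via $U\mathbbm{1}_{A_0}\pphi_{\lambda,k,l}=\mathbbm{1}_{A_0\cup\Omega^\out}(\lambda\pphi_{\lambda,k,l}+\pphi_{\lambda,k,l-1})$ and $U^*\pphi^\circledast_{\lambda,k,l}=\bar\lambda\pphi^\circledast_{\lambda,k,l}+\pphi^\circledast_{\lambda,k,l+1}$. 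The one genuinely different step is your construction of the generalized eigenfunction as $\psi_0-R(z)f$ with $R(z)$ the meromorphically continued outgoing resolvent of $U$. The paper deliberately bypasses this object (see its remark before Proposition~\ref{prop:gen-ef}): its existence inside the unit circle and the identity $(U-z)R(z)f=f$ there are never established in this setting, so as written this step rests on an unproven black box. Your own Schur-complement observation shows the detour is removable --- one can simply define $\chi(A_0)\pphi:=(z-U|_{A_0})^{-1}\chi(A_0)U\chi(\Omega^\inc)^*\alpha^\inc$ and extend to the tails, which is exactly what the paper does via \eqref{eq:local-resolvent} and \eqref{eq:construct-gen-ef} --- but you should either do that or prove the continuation exists (also note $\pphi$ itself is not ``outgoing'' in the paper's sense; only the scattered part $-R(z)f$ is).

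There is a second, smaller gap at $z\in\mathbb{S}^1$. You kill the singular terms at a modulus-one eigenvalue $\lambda$ of $U|_{A_0}$ by observing that the associated incoming resonant state vanishes on $\Omega^\inc$. That is correct for genuine eigenvectors, but if $\lambda$ carried a nontrivial Jordan block of the non-normal matrix $U|_{A_0}$, the dual chains $\pphi^\circledast_{\lambda,k,l}$ with $l\ge2$ are \emph{generalized} incoming resonant states, and Lemma~\ref{lem:on-tails} only guarantees those vanish on $A^\out$, not on $A^\inc$. So your vanishing argument tacitly uses the semi-simplicity of modulus-one eigenvalues of $U|_{A_0}$. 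This is precisely what the paper proves (using unitarity of $U$) at the start of the proof of Proposition~\ref{prop:gen-ef}, and it is equivalent to the orthogonality $P_\lambda f=0$ used there; citing that fact closes the gap.
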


\subsection{Reformation as matrix problems}\label{sec:reduce-to-matrix}
%{Reduction to problems of a matrix}
Let us put 
\ben
U|_{A_0}:=\chi(A_0)U\chi(A_0)^*:\C^{A_0}\to\C^{A_0}.
\een
We show three propositions which reduces the problems of $U$ on $\cH$ to that of $U|_{A_0}$ on $\C^{A_0}$.

\begin{lemma}\label{lem:on-tails}
Let $\varphi\in\C^A$ be a solution to the equation $(U-z)\varphi=0$ for some $z\in\C\setminus\{0\}$.
Then there exist $\alpha^\inc\in\C^{\Omega^\inc}$ and $\alpha^\out\in\C^{\Omega^\out}$ such that
\be\label{eq:on-tails}
\pphi(a)=\left\{
\begin{aligned}
&\alpha^\inc(\omega_n^\inc)z^j&&a=a_{n,j}^\inc\ (j\ge0)\\
&\alpha^\out(\omega_n^\out)z^{-j}&&a=a_{n,j}^\out\ (j\ge1).
\end{aligned}\right.
\ee
Consequently, $\pphi$ is an outgoing resonant state (resp. incoming resonant state, eigenfunction) if and only if $\alpha^\inc=0$ (resp. $\alpha^\out=0$, $\alpha^\inc=\alpha^\out=0$).
Similarly, each generalized resonant state vanishes on $A^\inc$, and each generalized incoming resonant state does on $A^\out$.
%Each eigenvalue $\lambda$ of $U$ is semi-simple, that is, the generalized eigenspace consists of eigenfunctions.
%, and each generalized eigenvector does on $A^\inc\cup A^\out$.
%The set of eigenvalues of $U$ coincides with $\ope{Res}(U)\cap\mathbb{S}^1=\ope{Res}^\inc(U)\cap\mathbb{S}^1$.
\end{lemma}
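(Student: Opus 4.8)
The plan is to exploit that, by \eqref{eq:dyn-on-tails}, $U$ acts on each tail merely as a shift, so that $(U-z)\varphi=0$ reduces there to a constant-coefficient first-order recursion. On the incoming tail $A_n^\inc$ one has $(U\varphi)(a_{n,j}^\inc)=\varphi(a_{n,j+1}^\inc)$, so the equation reads $\varphi(a_{n,j+1}^\inc)=z\,\varphi(a_{n,j}^\inc)$ for every $j\ge0$; iterating gives $\varphi(a_{n,j}^\inc)=z^j\varphi(\omega_n^\inc)$, the first line of \eqref{eq:on-tails} with $\alpha^\inc(\omega_n^\inc):=\varphi(\omega_n^\inc)$. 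On the outgoing tail $(U\varphi)(a_{n,j}^\out)=\varphi(a_{n,j-1}^\out)$ for $j\ge2$, so the equation gives $\varphi(a_{n,j}^\out)=z^{-1}\varphi(a_{n,j-1}^\out)$ and hence $\varphi(a_{n,j}^\out)=z^{-j}\bigl(z\varphi(\omega_n^\out)\bigr)$, the second line of \eqref{eq:on-tails} with $\alpha^\out(\omega_n^\out):=z\varphi(\omega_n^\out)$, matching the convention $\alpha^\out=z\chi(\Omega^\out)\varphi$. Here I use $z\neq0$ throughout so that the factors $z^{\pm j}$ are meaningful and non-vanishing.

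For the ``consequently'' assertions I would read finiteness of supports off \eqref{eq:on-tails}. Since $z\neq0$, the sequence $j\mapsto\alpha^\inc(\omega_n^\inc)z^j$ is either identically zero or nowhere zero, so $\supp\varphi\cap A^\inc$ is finite iff $\alpha^\inc=0$, which is precisely the defining condition for $\varphi$ to be outgoing; symmetrically $\varphi$ is incoming iff $\alpha^\out=0$. For the eigenfunction statement I invoke unitarity of $U$: an eigenvalue lies on $\mathbb{S}^1$, so $|z|=1$ and the tail sums $\sum_j|\alpha^\inc(\omega_n^\inc)z^j|^2$ and $\sum_j|\alpha^\out(\omega_n^\out)z^{-j}|^2$ converge only when their coefficients vanish; since $A_0$ is finite, $\varphi\in\cH$ is therefore equivalent to $\alpha^\inc=\alpha^\out=0$.

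For the generalized resonant states I would argue by induction on the least $k$ with $(U-\lambda)^k\varphi=0$. The base case $k=1$ is the statement just established. For the step, note first that $U$ maps outgoing functions to outgoing functions, since $(U\varphi)(a_{n,j}^\inc)=\varphi(a_{n,j+1}^\inc)$ can only shrink a finite support on $A^\inc$; hence $\psi:=(U-\lambda)\varphi$ is outgoing and satisfies $(U-\lambda)^{k-1}\psi=0$, so by the inductive hypothesis $\psi$ vanishes on $A^\inc$. Restricting $(U-\lambda)\varphi=\psi$ to the incoming tail gives $\varphi(a_{n,j+1}^\inc)=\lambda\,\varphi(a_{n,j}^\inc)$, whence $\varphi(a_{n,j}^\inc)=\lambda^j\varphi(\omega_n^\inc)$; as $\lambda\neq0$ and $\varphi$ is outgoing, finiteness of the support forces $\varphi(\omega_n^\inc)=0$, that is, $\varphi$ vanishes on $A^\inc$. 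The statement for generalized incoming resonant states is symmetric, using that $U$ preserves incoming functions and the recursion $\varphi(a_{n,j}^\out)=\mu^{-1}\varphi(a_{n,j-1}^\out)$ with $\mu=\bar\lambda^{-1}$ on the outgoing tails.

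The only delicate point is the interface between a tail and the core $A_0$ in the outgoing direction: the shift identity $(U\psi)(a_{n,j}^\out)=\psi(a_{n,j-1}^\out)$ holds only for $j\ge2$, because the value of $U\psi$ at $\omega_n^\out=a_{n,1}^\out$ is governed by the perturbed local isometry at $v_{n,0}^\out\in V_0$, not by the trivial tail dynamics. I expect this to be the main thing to watch. The induction sidesteps it cleanly: the recursion $\varphi(a_{n,j}^\out)=\mu^{-1}\varphi(a_{n,j-1}^\out)$ is used only for $j\ge2$, where \eqref{eq:dyn-on-tails} applies, and the conclusion $\varphi(\omega_n^\out)=0$ follows by evaluating $\varphi(a_{n,j}^\out)=\mu^{-(j-1)}\varphi(\omega_n^\out)$ at a large $j$ where $\varphi$ already vanishes, rather than from the interface equation itself. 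An alternative avoiding the induction is to observe that $(U-\lambda)^k\varphi=0$ restricts on the incoming tail to $(S-\lambda)^kf=0$ for $f_j=\varphi(a_{n,j}^\inc)$, with $S$ the left shift $(Sf)_j=f_{j+1}$; its solutions are $f_j=Q(j)\lambda^j$ with $\deg Q<k$, and finite support forces $Q\equiv0$. The outgoing case runs the same way, the $k$-dimensional solution space past the interface being spanned by $\{j^iz^{-j}\}_{i<k}$.
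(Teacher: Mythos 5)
Your proof is correct. For the dichotomy \eqref{eq:on-tails}, the outgoing/incoming characterizations, and the eigenfunction statement (unitarity puts eigenvalues on $\mathbb{S}^1$, so the tail sums diverge unless the coefficients vanish), your argument is the paper's argument, down to the same initialization $\alpha^\out(\omega_n^\out)=z\varphi(a_{n,1}^\out)$. The treatments diverge only for the generalized states. You induct on the order $k$: since $U$ preserves outgoing (resp.\ incoming) functions, $\psi=(U-\lambda)\varphi$ is outgoing and killed by $(U-\lambda)^{k-1}$, so the inductive hypothesis reduces $\varphi$ to the order-one recursion on the tail, and finite support kills it. The paper instead argues in one shot: using that $U$ acts as the shift on the incoming tail, it expands $(U-\lambda)^k\varphi(a_{n,j}^\inc)=\sum_{l=0}^k\binom{k}{l}(-\lambda)^{k-l}\varphi(a_{n,j+l}^\inc)$ and notes that a nonzero value at $a_{n,j}^\inc$ forces a nonzero value at some $a_{n,j+l}^\inc$ with $l\ge1$; iterating this contradicts finiteness of $\ope{supp}\,\varphi\cap A^\inc$ (this is essentially your closing alternative about $(S-\lambda)^kf=0$, phrased as propagation of support rather than via the closed-form solutions $Q(j)\lambda^j$). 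For the incoming generalized states the paper additionally rewrites $(U-\lambda)^k\varphi=0$ as $(U^*-\lambda^{-1})^k\varphi=0$ and runs the same expansion with $U^*$, whose shift $U^*\psi(a_{n,j}^\out)=\psi(a_{n,j+1}^\out)$ is valid for every $j\ge1$, so the interface with $A_0$ never enters; your route keeps $U$ itself and therefore needs exactly the restriction to $j\ge2$ that you flag and handle correctly. Both routes are elementary and rest on the same fact that the tail dynamics is a pure shift: yours buys a clean reduction to the already-established $k=1$ case, while the paper's avoids induction and, via the adjoint, sidesteps the interface issue for the incoming case.
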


\begin{remark}
Note that $U$ is unitary, in particular normal: $U^*U=UU^*$.
This results the semi-simplicity of the eigenvalues, that is, the generalized eigenspace coincides with the eigenspace.
\end{remark}

\begin{proof}
According to \eqref{eq:dyn-on-tails}, at each arc of $A^\inc$ and of $A^\out$, the identity $(U-z)\varphi=0$ is rewritten as
\ben
\pphi(a_{n,j+1}^\inc)=z\pphi(a_{n,j}^\inc),\quad
\pphi(a_{n,j-1}^\out)=z\pphi(a_{n,j}^\out).
\een
This inductively gives \eqref{eq:on-tails} with the initialization $\alpha^\inc(\omega_n^\inc):=\pphi(a_{n,0}^\inc)$ and $\alpha^\out(\omega_n^\out):=z\pphi(a_{n,1}^\out)$.
The condition that $\varphi$ to be outgoing or incoming is now trivial.
Let a function $\pphi$ be an eigenfunction. 
Then it belongs to $\cH$, that is, the norm 
\ben
\|\varphi\|_{\cH}^2=\sum_{a\in A}|\pphi(a)|^2\ge\sum_{a\in A\setminus A_0}|\pphi(a)|^2
=\sum_{n=1}^N\left(|\alpha^\inc(\omega_n^\inc)|^2\sum_{j\ge0}|z|^{2j}
+%\sum_{\omega\in\Omega^\out}
|\alpha^\out(\omega_n^\out)|^2\sum_{j\ge1}|z|^{-2j}\right)
\een
is finite.
Note that the operator $U$ is unitary, hence every eigenvalue lies on the unit circle $\mathbb{S}^1$.
Therefore, $\pphi\in\cH$ only if $\alpha^\inc=\alpha^\out=0$.

Let $\pphi$ be a generalized resonant state such that $(U-\lambda)^k\pphi=0$ with $\lambda\neq0$.
Then for each arc $a_{n,j}^\inc\in A^\inc$, one has
\ben
(U-\lambda)^k\pphi(a_{n,j}^\inc)
=\sum_{l=0}^k\begin{pmatrix}k\\l\end{pmatrix}(-\lambda)^{k-l}\pphi(a_{n,j+l}^\inc)
=(-\lambda)^{k}\pphi(a_{n,j}^\inc)+\sum_{l=1}^k\begin{pmatrix}k\\l\end{pmatrix}(-\lambda)^{k-l}\pphi(a_{n,j+l}^\inc).
\een 
This shows that if $\pphi(a_{n,j}^\inc)\neq0$, then at least for one of $l=1,2,\ldots,k$, one has $\pphi(a_{n,j+l}^\inc)\neq0$.
This with the fact that $\pphi$ is outgoing, one concludes that $\pphi(a_{n,j}^\inc)=0$ for any $j\ge0$.

Similarly let $\varphi$ be a generalized incoming resonant state such that $(U-\lambda)^k\pphi=0$ with $\lambda\neq0$.
This is equivalent to $(U^*-\lambda^{-1})^k\pphi=(-\lambda^{-1}U^*)^k(U-\lambda)^k\pphi=0$.
This shows 
\ben
(U^*-\lambda^{-1})^k\pphi(a_{n,j}^\out)
=\sum_{l=0}^k\begin{pmatrix}k\\l\end{pmatrix}(-\lambda^{-1})^k\pphi(a_{n,j+l}^\out),
\een
for each $a_{n,j}^\out\in A^\out$, 
and consequently $\pphi(a_{n,j}^\out)=0$ for any $j\ge1$.
%As we have shown that $\alpha^\inc=\alpha^\out=0$ holds for each eigenfunction, it is clear that the set of eigenvalues of $U$ is contained by the intersection of $\ope{Res}(U)\cap\mathbb{S}^1$ and $\ope{Res}^\inc(U)\cap\mathbb{S}^1$. We show the reverse inclusion of these sets.
\end{proof}

For each subset $B$ and each $k\ge1$, put 
\ben
\mc{N}_k(B):=\{a\in A;\,\text{there exists a  path of length $k+1$ from an arc of $B$ to $a$}\},
\een
\ben
\mc{N}_{-k}(B):=\{a\in A;\,\text{there exists a  path of length $k+1$ from $a$ to an arc of $B$}\},
\een
and $\mc{N}_0(B):=B$.
Here, a finite sequence $(a_1,a_2,\ldots,a_k)$ of arcs $a_j\in A$ is said to be a \textit{path of length} $k$ \textit{from $a_1$ to $a_k$} if $a_j^+=a_{j+1}^-$ holds for each $j=1,2,\ldots,k-1$.
In general, one has $\mc{N}_{-k}(\mc{N}_k(B))\supset B$.

\begin{lemma}\label{lem:nbd}
Let $B\subset A$ and $k\in\Z$.  One has
\be\label{eq:nbd}
\ope{supp}\,(U^k\mathbbm{1}_B\pphi)\subset \mc{N}_k(B),
\ee
for any function $\pphi\in\C^A$.
One also has
\be\label{eq:intertwine}
U^k\mathbbm{1}_B\pphi=\mathbbm{1}_{\mc{N}_k(B)}U^k\pphi,
\ee
provided that $\mc{N}_{-k}(\mc{N}_k(B))= B$.
Especially, we will frequently use
\be\label{eq:inter}
U\mathbbm{1}_{A_0\cup\Omega^\inc}\pphi
=\mathbbm{1}_{A_0\cup\Omega^\out}U\pphi.
\ee
\end{lemma}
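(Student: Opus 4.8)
The plan is to treat the three assertions in turn, reducing each to the local form of $U$. First, for \eqref{eq:nbd} I would note that every $U_v$ is a unitary on a finite-dimensional space, so $U$ is invertible on $\C^A$ and $U^k$ is defined for all $k\in\Z$, with $U\psi(a)=\sum_{b^+=a^-}(U_{a^-})_{a,b}\psi(b)$ and $U^{-1}\psi(a)=\sum_{c^-=a^+}(U^{-1}_{a^+})_{a,c}\psi(c)$. I'd prove \eqref{eq:nbd} by induction on $|k|$. The case $k=0$ is immediate since $\mc{N}_0(B)=B$. For the step from $k-1$ to $k$ with $k\ge1$, the fact that $U\psi(a)\neq0$ forces $\psi(b)\neq0$ for some $b$ with $b^+=a^-$, applied to $\psi=U^{k-1}\mathbbm{1}_B\pphi$ (supported in $\mc{N}_{k-1}(B)$ by induction), shows that appending $a$ to a length-$k$ path realizing $b\in\mc{N}_{k-1}(B)$ yields a length-$(k+1)$ path from $B$ to $a$, so $a\in\mc{N}_k(B)$. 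The case $k\le-1$ is identical using $U^{-1}$, matching the backward definition of $\mc{N}_{-k}$.

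For \eqref{eq:intertwine} I would split $\pphi=\mathbbm{1}_B\pphi+\mathbbm{1}_{A\setminus B}\pphi$ and apply $U^k$. By \eqref{eq:nbd} the first term is supported in $\mc{N}_k(B)$, so $\mathbbm{1}_{\mc{N}_k(B)}U^k\mathbbm{1}_B\pphi=U^k\mathbbm{1}_B\pphi$, and the whole identity reduces to showing that the second term, which by \eqref{eq:nbd} lives in $\mc{N}_k(A\setminus B)$, vanishes on $\mc{N}_k(B)$; equivalently, that $\mc{N}_k(B)\cap\mc{N}_k(A\setminus B)=\emptyset$.

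This disjointness is the crux, and the only place where the hypothesis $\mc{N}_{-k}(\mc{N}_k(B))=B$ is used. Take $k\ge1$ and suppose some $a$ lay in both sets. Then $a\in\mc{N}_k(A\setminus B)$ provides a length-$(k+1)$ path from some $b'\in A\setminus B$ to $a$, which is exactly the statement $b'\in\mc{N}_{-k}(\{a\})$. Since $a\in\mc{N}_k(B)$ and $\mc{N}_{-k}$ is monotone, $b'\in\mc{N}_{-k}(\{a\})\subset\mc{N}_{-k}(\mc{N}_k(B))=B$, contradicting $b'\in A\setminus B$. The case $k\le-1$ is symmetric (interchanging the roles of $\mc{N}_k$ and $\mc{N}_{-k}$) and $k=0$ is trivial, which gives \eqref{eq:intertwine}.

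Finally, \eqref{eq:inter} is the instance $k=1$, $B=A_0\cup\Omega^\inc$ of \eqref{eq:intertwine}, so I only need the two set identities $\mc{N}_1(A_0\cup\Omega^\inc)=A_0\cup\Omega^\out$ and $\mc{N}_{-1}(A_0\cup\Omega^\out)=A_0\cup\Omega^\inc$, the latter also supplying the required hypothesis. The clean way is to observe from the tail structure that $A_0\cup\Omega^\inc$ is precisely the set of arcs with terminus in $V_0$ and $A_0\cup\Omega^\out$ precisely the set of arcs with origin in $V_0$ (the tail arcs meet $V_0$ only through $\omega_n^\inc$ at its terminus and $\omega_n^\out$ at its origin, while $A_0$-arcs have both endpoints in $V_0$). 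Then $a\in\mc{N}_1(A_0\cup\Omega^\inc)$ iff $a^-=b^+\in V_0$ for such a $b$, i.e. iff $a$ has origin in $V_0$, with balancedness guaranteeing the required $b$ exists; the computation of $\mc{N}_{-1}$ is dual. The main obstacle throughout is the disjointness step of the third paragraph; everything else is bookkeeping with the local action of $U$ and the monotonicity of $\mc{N}_{\pm k}$.
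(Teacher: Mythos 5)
Your proof is correct and follows essentially the same route as the paper: induction on $|k|$ via the local action of $U$ and $U^{-1}$ for \eqref{eq:nbd}, the same splitting $\pphi=\mathbbm{1}_B\pphi+\mathbbm{1}_{A\setminus B}\pphi$ for \eqref{eq:intertwine}, and the same disjointness argument $\mc{N}_k(B)\cap\mc{N}_k(A\setminus B)=\emptyset$ deduced from the hypothesis $\mc{N}_{-k}(\mc{N}_k(B))=B$. The one point where you go beyond the paper is the explicit verification of the set identities $\mc{N}_1(A_0\cup\Omega^\inc)=A_0\cup\Omega^\out$ and $\mc{N}_{-1}(A_0\cup\Omega^\out)=A_0\cup\Omega^\inc$ (using balancedness and the tail structure) needed to obtain \eqref{eq:inter} as an instance of \eqref{eq:intertwine}; the paper leaves this bookkeeping implicit, so this is a welcome addition rather than a divergence.
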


\begin{proof}
For any $b\in B$, one has
\ben
U\dl_{b}=\sum_{a\in A_{b^+}^\out}(U_{b^+})_{a,b}\dl_a.
\een
Note here that $\mc{N}_1(\{b\})=A_{b^+}^\out$. In fact, $(b,a)$ is a path of length two if and only if $b^+=a^-$, equivalently $a\in A_{b^+}^\out$.
Then, for any function $\pphi\in\C^A$, one has
\ben
U\mathbbm{1}_B\pphi
=
U\sum_{b\in B}\pphi(b)\dl_b
=\sum_{b\in B}\pphi(b)\sum_{a\in A^\out_{b^+}}(U_{b^+})_{a,b}\dl_a.
\een
This means $\ope{supp} (U\mathbbm{1}_B\pphi)\subset \bigcup_{b\in B}A_{b^+}^\out=\mc{N}_1(B)$.

Let $k\ge1$, suppose that $\supp(U^k\mathbbm{1}_B\pphi)\subset\mc{N}_k(B)$.
Then one has
\ben
U^{k+1}\mathbbm{1}_B\pphi
=U\sum_{b\in\mc{N}_k(B)}(U^k\mathbbm{1}_B\pphi)(b)\dl_b
=\sum_{b\in\mc{N}_k(B)}(U^k\mathbbm{1}_B\pphi)(b)\sum_{a\in A_{b^+}^\out}(U_{b^+})_{a,b}\dl_a.
\een
Then one obtains
\ben
\ope{supp}\,(U^{k+1}\mathbbm{1}_B\pphi)
\subset\bigcup_{b\in\mc{N}_k(B)} A_{b^+}^\out
=\mc{N}_{k+1}(B).
\een
Formula \eqref{eq:nbd} is now proven for each positive $k$ by induction.
We can also prove \eqref{eq:nbd} for $k\le-1$ in the parallel argument starting with the identity
\ben
U^{-1}\dl_b=\sum_{a\in A^\inc_{b^-}}(U_{b_-}^*)_{a,b}\dl_a.
\een

As a consequence of \eqref{eq:nbd}, one has
\ben
U^{k}\mathbbm{1}_B\pphi
=\mathbbm{1}_{\mc{N}_k(B)}U^{k}\mathbbm{1}_B\pphi
=\mathbbm{1}_{\mc{N}_k(B)}U^{k}\pphi
-\mathbbm{1}_{\mc{N}_k(B)}U^{k}\mathbbm{1}_{A\setminus B}\pphi,
\een
and
$U^k\mathbbm{1}_{A\setminus B}\pphi
=\mathbbm{1}_{\mc{N}_k(A\setminus B)}U^k\mathbbm{1}_{A\setminus B}\pphi$.
It suffices to show that the sets $\mc{N}_k(A\setminus B)$ and $\mc{N}_k(B)$ are disjoint under the condition $\mc{N}_{-k}(\mc{N}_k(B))=B$.
By definition, the initial arc of any path of length $k+1$ to an arc of $\mc{N}_k(B)$ belongs to $\mc{N}_{-k}(\mc{N}_k(B))=B$. 
This implies that the terminus %(which belongs to $\mc{N}_k(A\setminus B)$) 
of any path of length $k+1$ starting from an arc of $A\setminus B$ is outside of $\mc{N}_k(B)$, i.e., $\mc{N}_k(A\setminus B)\cap\mc{N}_k(B)=\emptyset$.
\end{proof}

The proposition below gives us the correspondence between outgoing resonances and eigenvalues of $U|_{A_0}$.

\begin{proposition}\label{prop:matrix-reduction}
\begin{enumerate}
\item\label{enu:red1} The non-zero resonances of $U$ and non-zero eigenvalues of $U|_{A_0}$ coincide including their multiplicities. 
%The multiplicity of each resonance equals the multiplicity as an eigenvalue of $U|_{A_0}$.
\item\label{enu:red2} For each generalized resonant state $\pphi$ associated with $\lambda\in\ope{Res}(U)\setminus\{0\}$, $\chi(A_0)\varphi\in\C^{A_0}$ is a generalized eigenvector of $U|_{A_0}$.
\item\label{enu:red3} Every resonance lies inside or on the unit circle $\mathbb{S}^1$.
Resonances lying on the unit circle are eigenvalues of $U$.
\item\label{enu:red4} For any resonant state $\pphi$ associated with $\lambda\in\ope{Res}(U)\setminus\{0\}$, one has
\be\label{eq:widths}
\frac{\|\mathbbm{1}_{\Omega^\out}\pphi\|^2}{\|\mathbbm{1}_{A_0}\pphi\|^2}=|\lambda|^{-2}-1.
%,\quad\alpha^\out=\chi(\Omega^\out)\pphi\in\C^{\Omega^\out}.
\ee
\end{enumerate}
\end{proposition}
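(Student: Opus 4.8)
The plan is to reduce the whole statement to the finite matrix $U|_{A_0}=\chi(A_0)U\chi(A_0)^*$ through the restriction map $\chi(A_0)$, using the tail structure of Lemma~\ref{lem:on-tails} and the intertwining relation \eqref{eq:inter}. First I would introduce the two subspaces of $\C^A$
\[
\mathcal{V}:=\{\pphi\in\C^A;\,\pphi|_{A^\inc}=0\},\qquad
\mathcal{W}:=\{\pphi\in\C^A;\,\supp\pphi\subset A^\out\}.
\]
Both are $U$-invariant: since the tail dynamics \eqref{eq:dyn-on-tails} are pure shifts, $U\pphi(a_{n,j}^\inc)=\pphi(a_{n,j+1}^\inc)$ shows $U\mathcal{V}\subset\mathcal{V}$, while for $\pphi\in\mathcal{W}$ one has $U\pphi(\omega_n^\out)=0$ (no outgoing-tail arc terminates at a vertex of $V_0$, so every arc incoming to $v_{n,0}^\out$ lies in $A_0\cup\Omega^\inc$ where $\pphi$ vanishes) and $U\pphi(a_{n,j}^\out)=\pphi(a_{n,j-1}^\out)$ for $j\ge2$, so $U\mathcal{W}\subset\mathcal{W}$. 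By the last assertion of Lemma~\ref{lem:on-tails}, every generalized resonant state associated with $\lambda\ne0$ vanishes on $A^\inc$, hence $W(\lambda)$ is precisely the generalized eigenspace of $U|_{\mathcal{V}}$ at $\lambda$.

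The heart of the argument is the short exact sequence of $U$-modules
\[
0\longrightarrow\mathcal{W}\longrightarrow\mathcal{V}\xrightarrow{\ \chi(A_0)\ }\C^{A_0}\longrightarrow0 ,
\]
where the induced action on the quotient $\mathcal{V}/\mathcal{W}\cong\C^{A_0}$ is exactly $U|_{A_0}$: for $\pphi\in\mathcal{V}$, applying $\chi(A_0)$ to $\mathbbm{1}_{A_0\cup\Omega^\out}U\pphi=U\mathbbm{1}_{A_0\cup\Omega^\inc}\pphi=U\mathbbm{1}_{A_0}\pphi$ (this is \eqref{eq:inter}, since $\pphi$ vanishes on $\Omega^\inc$) gives $\chi(A_0)U\pphi=(U|_{A_0})\chi(A_0)\pphi$. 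Thus $\chi(A_0)$ intertwines $(U-\lambda)$ with $(U|_{A_0}-\lambda)$, which already yields (\ref{enu:red2}). To obtain the bijection with matching multiplicities in (\ref{enu:red1}), I would show that $(U-\lambda)$ is bijective on $\mathcal{W}$ for $\lambda\ne0$: both injectivity and surjectivity follow from a one-step recursion in $j$ along each outgoing tail, using $U\pphi(\omega_n^\out)=0$ and $U\pphi(a_{n,j}^\out)=\pphi(a_{n,j-1}^\out)$. Because the kernel term $\mathcal{W}$ then carries no generalized eigenvector at $\lambda$, a standard diagram chase shows that $\chi(A_0)$ restricts to an isomorphism from $W(\lambda)$ onto the generalized eigenspace of $U|_{A_0}$ at $\lambda$; as this isomorphism commutes with $(U-\lambda)$ and $(U|_{A_0}-\lambda)$, it matches the Jordan structure and in particular equates the multiplicities. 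The bound $m\le\#A_0$ is then automatic.

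For (\ref{enu:red4}) I would combine unitarity with \eqref{eq:inter}. For a genuine resonant state $\pphi$, $U\pphi=\lambda\pphi$ with $\pphi$ vanishing on $\Omega^\inc$, so \eqref{eq:inter} gives $U\mathbbm{1}_{A_0}\pphi=\mathbbm{1}_{A_0\cup\Omega^\out}U\pphi=\lambda\,\mathbbm{1}_{A_0\cup\Omega^\out}\pphi$. Both sides have finite support, hence lie in $\cH$; taking norms, using $\|U\,\cdot\,\|=\|\cdot\|$ and the orthogonal decomposition $\|\mathbbm{1}_{A_0\cup\Omega^\out}\pphi\|^2=\|\mathbbm{1}_{A_0}\pphi\|^2+\|\mathbbm{1}_{\Omega^\out}\pphi\|^2$, one gets $\|\mathbbm{1}_{A_0}\pphi\|^2=|\lambda|^2\bigl(\|\mathbbm{1}_{A_0}\pphi\|^2+\|\mathbbm{1}_{\Omega^\out}\pphi\|^2\bigr)$, which rearranges to \eqref{eq:widths}. (Here $\|\mathbbm{1}_{A_0}\pphi\|\ne0$, since otherwise $\pphi\in\mathcal{W}$ would be a nonzero vector killed by $(U-\lambda)$, contradicting the injectivity above.)

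Finally (\ref{enu:red3}) splits in two. The modulus bound follows from (\ref{enu:red1}): a nonzero resonance $\lambda$ is an eigenvalue of $U|_{A_0}$, a compression of the unitary $U$ and hence a contraction, so $|\lambda|\le\|U|_{A_0}\|\le1$ (and $\lambda=0$ is trivially inside). For the claim that a resonance on $\mathbb{S}^1$ is an eigenvalue of $U$, I would substitute $|\lambda|=1$ into \eqref{eq:widths}: the right-hand side vanishes, forcing $\pphi$ to vanish on $\Omega^\out$, whence by the tail form \eqref{eq:on-tails} $\pphi$ vanishes on all of $A^\out$; together with its vanishing on $A^\inc$ this confines $\supp\pphi$ to the finite set $A_0$, so $\pphi\in\cH$ is a genuine eigenfunction. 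The main obstacle I anticipate is the bookkeeping in the multiplicity part of (\ref{enu:red1}): one must check that $(U-\lambda)$ is \emph{bijective}, not merely injective, on the infinite-dimensional space $\mathcal{W}$, so that the diagram chase delivers an honest isomorphism of generalized eigenspaces rather than only an injection; the remaining steps are short computations.
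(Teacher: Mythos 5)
Your proposal is correct, and its core differs from the paper's argument in an interesting way. Parts (\ref{enu:red2}), (\ref{enu:red3}) and (\ref{enu:red4}) are essentially the computations the paper performs: the intertwining $\chi(A_0)U\pphi=U|_{A_0}\chi(A_0)\pphi$ via \eqref{eq:inter}, and the norm identity obtained from unitarity, $\|\mathbbm{1}_{A_0}\pphi\|=\|U\mathbbm{1}_{A_0}\pphi\|=|\lambda|\,\|\mathbbm{1}_{A_0\cup\Omega^\out}\pphi\|$, which yields \eqref{eq:widths} and the dichotomy on $\mathbb{S}^1$ (the paper gets $|\lambda|\le1$ from this identity rather than from your contraction bound $\|U|_{A_0}\|\le1$, a minor variation). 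Where you genuinely diverge is the multiplicity statement in (\ref{enu:red1}): the paper proves the converse inclusion by an \emph{explicit} extension of a generalized eigenvector $u$ of $U|_{A_0}$ to a generalized resonant state, writing $\til{u}=\chi(A_0)^*u+\sum_{n,l}\alpha_l^\out(\omega_n^\out)\chi(A_n^\out)^*F_{l,\lambda,n}$ with concrete tail profiles $F_{l,\lambda,n}(a_{n,j})=\lambda^{-j-l+1}f_l(j)$, whereas you package the same information homologically: the exact sequence $0\to\mathcal{W}\to\mathcal{V}\to\C^{A_0}\to0$ together with bijectivity of $(U-\lambda)$ on $\mathcal{W}$ (your tail recursion is exactly the implicit form of the paper's $F_{l,\lambda,n}$), and a diagram chase producing an isomorphism $W(\lambda)\cong\ope{Ker}\bigl((U|_{A_0}-\lambda)^{\#A_0}\bigr)$ commuting with $(U-\lambda)$. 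Your route buys two things: the multiplicity equality and the matching of Jordan structure come in one stroke, and the injectivity of $\chi(A_0)$ on $W(\lambda)$ --- which is also needed for the paper's inequality $m(U|_{A_0};\lambda)\ge m(\lambda)$ but is left implicit there --- is an explicit consequence of injectivity of $(U-\lambda)$ on $\mathcal{W}$; the same fact cleanly justifies $\mathbbm{1}_{A_0}\pphi\neq0$ in (\ref{enu:red4}). What the paper's explicit construction buys instead is the concrete tail behavior of generalized resonant states, which is reused later (e.g.\ in Section~\ref{sec:S-matrix-res}). One presentational wrinkle in your write-up: to conclude $U\mathcal{W}\subset\mathcal{W}$ you must check that $U\pphi$ vanishes on $A_0\cup A^\inc$, not only compute $U\pphi$ on $A^\out$; this follows from the same incidence fact you already invoke (every arc terminating at a vertex of $V_0$ lies in $A_0\cup\Omega^\inc$), or more slickly from the intertwining relation itself, so it is not a gap, but state it.
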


\begin{proof}
Remark that for any function $\psi\in\C^A$, one has
\ben
U|_{A_0}\chi(A_0)\psi
=
\chi(A_0)U\chi(A_0)^*\chi(A_0)\psi
%\mathbbm{1}_{A_0}\psi
=
\chi(A_0)U\mathbbm{1}_{A_0}\psi.
\een
$(\ref{enu:red1},\ref{enu:red2})$ Let $\varphi\in\C^A$ be a generalized resonant state associated with a non-zero resonance $\lambda\in\ope{Res}(U)\setminus\{0\}$ such that $(U-\lambda)^k\varphi=0$.
%We first see that $\varphi|_{A^\inc}=0$. If it were not true, there would exist $\xi_{\omega,j_0}^\inc$ such that $\varphi(\xi_{\omega,j_0})\neq0$ and $\varphi(\xi_{\omega,j})=0$ for any $j>j_0$. Then $(U-\lambda)^k\varphi(\xi_{\omega,j_0})=(-\lambda^k)\varphi(\xi_{\omega,j_0})\neq0$ would follow. This is a contradiction.
According to Lemma~\ref{lem:on-tails}, $\pphi$ vanishes on $A^\inc$, in particular on $\Omega^\inc$.
It follows from %Lemma~\ref{lem:nbd} 
\eqref{eq:inter} that 
\ben%\chi(A_0)U\chi(A_0)^*
\chi(A_0)U\pphi
=\chi(A_0)\mathbbm{1}_{A_0\cup\Omega^\out}U\pphi
=\chi(A_0)U\mathbbm{1}_{A_0\cup\Omega^\inc}\pphi
=\chi(A_0)U\mathbbm{1}_{A_0}\pphi
=U|_{A_0}\chi(A_0)\pphi.
\een
By applying the above identity repeatedly, we obtain
\ben
(%\chi(A_0)U\chi(A_0)^*
U|_{A_0}-\lambda)^k\chi(A_0)\varphi
=\chi(A_0)(U-\lambda)^k\varphi=0,
\een
that is, $U\pphi$ is again a generalized eigenfunction.
%Then $\chi(A_0)^*\chi(A_0)\varphi(a)=\chi(A_0)\varphi(a)=\varphi(a)$ holds for each $a\in A\setminus A^\out$. Since the behavior of $\varphi$ on $A^\out$ does not affect that of $U\varphi$ on $A\setminus A^\out$, one obtains $\chi(A_0)U\chi(A_0)^*\chi(A_0)\varphi=\chi(A_0)\varphi$, and consequently \ben (\chi(A_0)U\chi(A_0)^*-\lambda)^k\chi(A_0)\varphi=\chi(A_0)(U-\lambda)^k\varphi=0.\een
This means that the multiplicity of $\lambda$ as an eigenvalue of $%\chi(A_0)U\chi(A_0)^*
U|_{A_0}$ is larger than or equal to $m(\lambda)$.

Conversely, let $u$ be a generalized eigenvector of %$\chi(A_0)U\chi(A_0)^*$ 
$U|_{A_0}$ associated with $\lambda\in\C\setminus\{0\}$ satisfying $(%\chi(A_0)U\chi(A_0)^*
U|_{A_0}-\lambda)^ku=0$.
This can be extended to a generalized resonant state $\til{u}$ which is defined by
\ben
\til{u}=\chi(A_0)^*u+\sum_{n=1}^N\sum_{l=1}^k\alpha_l^\out(\omega_n^\out)\chi(A_n^\out)^*F_{l,\lambda,n},
\een
where $\alpha_l^\out(\omega_n^\out):=(U-\lambda)^l\chi(A_0)^*u(\omega_n^\out)$, and $F_{l,\lambda,n}\in\C^{A_n^\out}$ defined by $F_{l,\lambda,n}(a_{n,j}):=\lambda^{-j-l+1}f_l(j)$ with 
$f_l(j)$ defined inductively for $l\ge1$ and $j\ge1$ by 
\ben
f_1(j)=1,\qquad
f_l(j)=\sum_{k=1}^jf_{l-1}(k).
\een
Then one can check that $(U-\lambda)^k\til{u}=0$ by using the fact that $(U-\lambda)^k\chi(A_n^\out)^*F_{l,\lambda,n}=-\dl_{\omega_n^\out}$.

\noindent
$(\ref{enu:red3})$ Let $\varphi$ be a resonant state associated with $\lambda\in\ope{Res}(U)\setminus\{0\}$.
Then the unitarity of $U$ acting on $\cH$ and %$\chi(A_0)^*\chi(A_0)\pphi=\mathbbm{1}_{A_0}\pphi\in\cH$
the identity \eqref{eq:inter} imply
\begin{align*}
\left\|\mathbbm{1}_{A_0}\varphi\right\|
=
\left\|U\mathbbm{1}_{A_0}\varphi\right\|
=
\left\|U\mathbbm{1}_{A_0\cup\Omega^\inc}\varphi\right\|
%=\chi(A_0)U\chi(A_0)^*\chi(A_0)\varphi+(1-\chi(A_0))U\chi(A_0)^*\chi(A_0)
=\left\|\mathbbm{1}_{A_0\cup\Omega^\out}U\pphi\right\|
=\left\|\lambda\mathbbm{1}_{A_0\cup\Omega^\out}\pphi\right\|.
\end{align*}
Since $A_0$ and $\Omega^\out$ are disjoint, it follows that
\be\label{eq:ineq-out}
|\lambda|^2\left\|\mathbbm{1}_{A_0}\pphi\right\|^2
=
\left\|\mathbbm{1}_{A_0}\varphi\right\|^2
-|\lambda|^2\left\|\mathbbm{1}_{\Omega^\out}\pphi\right\|^2.
%\\&\le\left\|\chi(A_0)^*\chi(A_0)\pphi\right\|^2.
\ee
This shows $|\lambda|\le1$.
Moreover, the equality $|\lambda|=1$ holds if and only if
\be\label{eq:on-omega-out}
\mathbbm{1}_{\Omega^\out}\pphi
=\chi(\Omega^\out)^*\alpha^\out=0.
\ee
This implies that $\pphi$ is an eigenfunction.

\noindent
$(\ref{enu:red4})$ Finally one obtains \eqref{eq:widths} by combining \eqref{eq:ineq-out} and \eqref{eq:on-omega-out}. 
\end{proof}

The following proposition for incoming resonances is almost parallel with Proposition~\ref{prop:matrix-reduction} for outgoing resonances.

\begin{proposition}\label{prop:matrix-reduction2}
\begin{enumerate}
\item Every incoming resonance of $U$ lies outside or on the unit circle $\mathbb{S}^1$. Incoming resonances lying on the unit circle are eigenvalues of $U$.
\item The reciprocal of each incoming resonance is an eigenvalue of $U|_{A_0}^*$, and the multiplicity of each incoming resonance $\lambda$ coincides with that of an eigenvalue $\lambda^{-1}$ of $U|_{A_0}^*$.
\item For each generalized eigenvector $u$ associated with an eigenvalue $\lambda\in\C\setminus\{0\}$ of $U|_{A_0}^*$, there exists a generalized incoming resonant state $\pphi^\inc$ of the incoming resonance $\lambda^{-1}$ such that $\chi(A_0)\pphi^\inc=u$. 
\item For any incoming resonant state $\pphi^\inc$ associated with an incoming resonance $\lambda\in\ope{Res}^\inc(U)$, one has
\be\label{eq:esti-omega-in}
\frac{\|\mathbbm{1}_{\Omega^\inc}\pphi^\inc\|^2}{\|\mathbbm{1}_{A_0}\pphi^\inc\|^2}=|\lambda|^2-1.
\ee
\end{enumerate}
\end{proposition}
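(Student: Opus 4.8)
The plan is to derive the whole proposition from Proposition~\ref{prop:matrix-reduction} applied to the \emph{adjoint} operator $U^*$, using that $U^*$ is itself a quantum walk on the graph obtained by reversing every arc, in which the incoming and outgoing tails of $U$ exchange their roles. Since each local isometry $U_v$ is unitary and the families $\{A_v^\inc\}_v$, $\{A_v^\out\}_v$ both partition $A$ by \eqref{eq:arc-decomp}, one checks directly that $U^*U=UU^*=I$ on all of $\C^A$ and that $U^*=\sum_v\chi(A_v^\inc)^*U_v^*\chi(A_v^\out)$. Reading $a\mapsto(a^+,a^-)$ as the reversed orientation, $U^*$ obeys \eqref{eq:dyn-on-tails} with $A^\inc$ and $A^\out$ interchanged, so it is again a quantum walk on a balanced directed graph with tails whose incoming (resp.\ outgoing) tails are $A^\out$ (resp.\ $A^\inc$).

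First I would translate the spectral data. If $\lambda$ is an incoming resonance of $U$ with incoming resonant state $\pphi$, then $U\pphi=\lambda\pphi$ on $\C^A$, and applying $U^*$ gives $U^*\pphi=\lambda^{-1}\pphi$ (here $\lambda\neq0$, since $U$ is invertible on $\C^A$, so $0$ is not an incoming resonance). Because $\ope{supp}\,\pphi\cap A^\out$ is finite, $\pphi$ is an \emph{outgoing} function for $U^*$; hence $\pphi$ is a resonant state of $U^*$ associated with the resonance $\lambda^{-1}$, and conversely every resonant state of $U^*$ is an incoming resonant state of $U$ with reciprocal resonance. This yields a resonant-state–preserving bijection $\lambda\leftrightarrow\lambda^{-1}$ between $\ope{Res}^\inc(U)\setminus\{0\}$ and $\ope{Res}(U^*)\setminus\{0\}$, which serves as the dictionary for everything below.

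With this dictionary each claim is a transcription of the corresponding part of Proposition~\ref{prop:matrix-reduction} for $U^*$. Part~1 follows from part~\ref{enu:red3}: $|\lambda^{-1}|\le1$ gives $|\lambda|\ge1$, and an incoming resonance on $\mathbb{S}^1$ corresponds to a unit-modulus resonance of $U^*$, hence to an eigenvalue of $U^*$, which (as $UU^*=I$) is an eigenvalue of $U$. Part~2 follows from part~\ref{enu:red1} together with the identity $\chi(A_0)U^*\chi(A_0)^*=(U|_{A_0})^*$: the nonzero resonances of $U^*$ coincide with the nonzero eigenvalues of $U|_{A_0}^*$ with multiplicities, so $\lambda^{-1}$ is an eigenvalue of $U|_{A_0}^*$ of matching multiplicity. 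Part~3 is the extension statement established inside part~\ref{enu:red2}, applied to $U^*$. Finally part~4, i.e.\ \eqref{eq:esti-omega-in}, is \eqref{eq:widths} for $U^*$: since the outgoing tails of $U^*$ are $A^\inc$, the ratio $\|\mathbbm{1}_{\Omega^\inc}\pphi^\inc\|^2/\|\mathbbm{1}_{A_0}\pphi^\inc\|^2$ equals $|\lambda^{-1}|^{-2}-1=|\lambda|^2-1$.

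The main point requiring care --- the only step that is not a formal consequence of Proposition~\ref{prop:matrix-reduction} --- is the verification that $U^*$ genuinely satisfies the axioms of Subsection~\ref{sec:def-qw}, in particular that the tail normalization built into \eqref{eq:dyn-on-tails} survives arc reversal (up to harmless unimodular phases), so that Lemma~\ref{lem:on-tails} and the intertwining \eqref{eq:inter} are available for $U^*$. If one prefers to sidestep recasting $U^*$ as a quantum walk, the same four statements follow by repeating the proof of Proposition~\ref{prop:matrix-reduction} verbatim with $U$ replaced by $U^*$ and $\Omega^\out$ by $\Omega^\inc$: the only ingredients used there are $U^*U=I$, the tail dynamics, and the adjoint of \eqref{eq:inter}, namely $U^*\mathbbm{1}_{A_0\cup\Omega^\out}\pphi=\mathbbm{1}_{A_0\cup\Omega^\inc}U^*\pphi$, all of which hold directly.
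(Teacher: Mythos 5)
Your proposal is correct and takes essentially the same route as the paper: the paper packages the orientation reversal as the reverse graph $(V,A^T)$ with the conjugated operator $\tau^{-1}U^*\tau$, proves exactly your dictionary (that generalized incoming resonant states of $U$ for $\lambda$ correspond to generalized outgoing resonant states of the reversed walk for $\lambda^{-1}$, via the identity $(U^*-\lambda)^k=(-\lambda U^*)^k(U-\lambda^{-1})^k$ and invertibility of $U$ on $\C^A$), and then invokes Proposition~\ref{prop:matrix-reduction}, which is precisely your reduction with the identification $\tau$ made explicit.
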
 

\begin{remark}
Let $u_k$ be a generalized eigenvector associated with an eigenvalue $\lambda$ of $U|_{A_0}^*$ such that $(U|_{A_0}^*-\lambda)^ku_k=0$.
Then for $\pphi^\inc_k\in\C^A$ such that $\chi(A_0)\pphi^\inc_k=u_k$, one has
\ben
(U^*-\lambda)^k\pphi^\inc_k=0,
\een 
which is equivalent to $(U-\lambda^{-1})^k\pphi^\inc_k=0$.
The vector $u_{k-1}:=(U|_{A_0}^*-\lambda)u_k$ is again a generalized eigenvector associated with $\lambda$ provided that $u_{k-1}\neq0$.
Then for the corresponding generalized incoming resonant state $\pphi^\inc_{k-1}$, one has
\ben
(U^*-\lambda)\pphi^\inc_k=\pphi^\inc_{k-1},
\quad\text{and }\ 
(U-\lambda^{-1})\pphi^\inc_k=-\lambda^{-1}U\pphi^\inc_{k-1}.
\een
This shows that the structure of the Jordan chains for the generalized incoming resonant states of $U$ are not exactly same as that for the generalized eigenfunctions of $U|_{A_0}^*$.
One has
\ben
(U-\lambda^{-1})^l\til{\pphi}_k^\inc
=\til{\pphi}_{k-l}^\inc,\qquad
\til{\pphi}_{l}^\inc:=(-\lambda^{-1}U)^{k-l}\pphi_l^\inc.
\een
\end{remark}

The above proposition is a consequence of the following lemma with Proposition~\ref{prop:matrix-reduction}.
We prepare some terminologies of the graph theory.
The \textit{reverse graph}, also known as \textit{transposed graph}, of a directed graph $(V,A)$ is the directed graph $(V,A^T)$, where every arc is reversed.
More precisely, the arc set $A^T$ is the set of inverted arcs of the original graph $A^T=\{ a^{-1};\, a\in A\}$, where $(a^{-1})^+=a^-$, $(a^{-1})^-=a^+$.
Define $\tau:\C^{A^T}\to\C^{A}$ by
\ben
\tau\pphi(a)=\pphi(a^{-1})\qquad(\pphi\in\C^{A^T}).
\een

\begin{lemma}
The operator $\tau^{-1}U^*\tau$ is a quantum walk on the reverse graph $(V,A^T)$ induced by the isometries $U_v^*$.
Each (generalized) incoming resonant state $\pphi^\inc$ associated with $\lambda\in\ope{Res}^\inc(U)$ defines a (generalized) outgoing resonant state $\tau^{-1}\pphi^\inc$ associated with $\lambda^{-1}\in\ope{Res}(\tau^{-1}U^*\tau)\setminus\{0\}$, and vice versa.
\end{lemma}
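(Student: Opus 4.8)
The plan is to split the statement into two independent parts: first, to identify $\tau^{-1}U^*\tau$ with a quantum walk on the reverse graph $(V,A^T)$; second, to transport (generalized) incoming resonant states of $U$ to (generalized) outgoing resonant states of this new operator. The crucial preliminary observation is that, although $U$ is only unitary on $\cH=\ell^2(A)$, it is algebraically invertible on the full space $\C^A$ with $U^{-1}=U^*$. Indeed, since the graph is balanced one has $\deg v=\#A_v^\inc=\#A_v^\out$, so each local isometry $U_v\colon\C^{A_v^\inc}\to\C^{A_v^\out}$ is in fact unitary, $U_v^*U_v=U_vU_v^*=I$; as the families $\{A_v^\inc\}_v$ and $\{A_v^\out\}_v$ each partition $A$, one computes $U^*U=\sum_v\chi(A_v^\inc)^*U_v^*U_v\chi(A_v^\inc)=\sum_v\mathbbm{1}_{A_v^\inc}=I$ and likewise $UU^*=I$. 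I would use this identity throughout.

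For the first part I would record that reversing arcs interchanges incoming and outgoing arcs at every vertex: for $w\in V$ one has $(A^T)_w^\inc=\{a^{-1};\,a\in A_w^\out\}$ and $(A^T)_w^\out=\{a^{-1};\,a\in A_w^\inc\}$, since $(a^{-1})^+=a^-$ and $(a^{-1})^-=a^+$. Writing $U^*=\sum_v\chi(A_v^\inc)^*U_v^*\chi(A_v^\out)$ and evaluating $(\tau^{-1}U^*\tau\pphi)(a^{-1})=(U^*\tau\pphi)(a)$ for $a\in A_w^\inc$ (so $a^{-1}\in(A^T)_w^\out$), a direct computation yields $(\tau^{-1}U^*\tau\pphi)(a^{-1})=\sum_{b\in A_w^\out}(U_w^*)_{a,b}\,\pphi(b^{-1})$, where the arcs $b^{-1}$ run exactly over $(A^T)_w^\inc$. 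This is precisely the quantum walk on $(V,A^T)$ whose local isometry at $w$ is $U_w^*$ under the identifications $\C^{A_w^\out}\cong\C^{(A^T)_w^\inc}$ and $\C^{A_w^\inc}\cong\C^{(A^T)_w^\out}$ furnished by $\tau$.

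For the second part I would treat the support condition and the spectral condition separately. On the level of supports, reversal carries the incoming tails of $(V,A)$ onto the outgoing tails of $(V,A^T)$ and conversely; since $\tau^{-1}\pphi^\inc(a^{-1})=\pphi^\inc(a)$, an incoming function for $U$ (finite support on $A^\out$) is sent to an outgoing function for $\tau^{-1}U^*\tau$ (finite support on $(A^T)^\inc$). On the spectral level I would combine the intertwining $\tau^{-1}U^*\tau-\lambda^{-1}=\tau^{-1}(U^*-\lambda^{-1})\tau$ with the algebraic identity $(U^*-\lambda^{-1})^k=(-\lambda^{-1})^kU^{-k}(U-\lambda)^k$, valid because $U^{-1}=U^*$ and all factors are Laurent polynomials in $U$, hence commute. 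Thus $(U-\lambda)^k\pphi^\inc=0$ forces $(U^*-\lambda^{-1})^k\pphi^\inc=0$, whence $(\tau^{-1}U^*\tau-\lambda^{-1})^k(\tau^{-1}\pphi^\inc)=\tau^{-1}(U^*-\lambda^{-1})^k\pphi^\inc=0$, so $\tau^{-1}\pphi^\inc$ is a generalized outgoing resonant state for $\lambda^{-1}$, and in particular $\lambda^{-1}\in\ope{Res}(\tau^{-1}U^*\tau)\setminus\{0\}$. The converse follows by reading the same chain backwards: given an outgoing resonant state $\psi$ of $\tau^{-1}U^*\tau$ for $\lambda^{-1}$, the function $\tau\psi$ is incoming for $U$, and $(U-\lambda)^k=(-\lambda)^kU^k(U^*-\lambda^{-1})^k$ gives $(U-\lambda)^k\tau\psi=0$.

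The computations are all elementary; the main obstacle is the bookkeeping of the arc reversal and the in/out interchange in the first part, where one must keep track of which vertex governs each coefficient and verify that the entry $(U_w^*)_{a,b}$ is attached to the correct pair of reversed arcs. A secondary subtlety is to insist throughout on the \emph{algebraic} identity $U^{-1}=U^*$ on $\C^A$ rather than Hilbert-space unitarity, since the resonant states need not lie in $\cH$.
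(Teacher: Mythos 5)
Your proposal is correct and follows essentially the same route as the paper: the core step in both is the algebraic factorization relating $(U^*-\lambda^{-1})^k$ and $(U-\lambda)^k$ through a power of $U$ (invertible on all of $\C^A$), combined with the observation that $\tau$ exchanges incoming and outgoing tails. Your write-up is somewhat more complete than the paper's — you explicitly verify the quantum walk structure on the reverse graph and justify the identity $U^{-1}=U^*$ on $\C^A$ via the balancedness of the graph, both of which the paper leaves implicit — but the underlying argument is the same.
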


\begin{proof}
The incoming and outgoing tails of the original graph is reversed in the reversed graph.
In particular, $\tau\pphi\in\C^A$ is incoming provided that $\pphi\in\C^{A^T}$ is outgoing.
Therefore, for each generalized resonant state $\pphi$ of $\tau^{-1}U^*\tau$ associated with $\lambda\in\ope{Res}(\tau^{-1}U^*\tau)$ such that $(\tau^{-1}U^*\tau-\lambda)^k\pphi=0$, one has 
\ben
\tau^{-1}(-\lambda U^*)^k(U-\lambda^{-1})^k\tau \pphi
=\tau^{-1}(U^*-\lambda)^k\tau \pphi
=(\tau^{-1}U^*\tau-\lambda)^k\pphi=0.
\een
The invertibility of $\tau^{-1}(-\lambda U^*)^k$ implies that $\tau\pphi$ is a generalized incoming resonant state associated with $\lambda^{-1}$.
\end{proof}

We finally see the construction of generalized eigenfunctions by using the (reduced) resolvent of $U|_{A_0}$.
Note that, if we construct them by using the resolvent of $U$, we need to show the so-called limiting absorption principle which guarantees the existence of the limit in some sense of the resolvent as the spectral parameter approaches the essential spectrum $\mathbb{S}^1$ (see e.g., \cite{Mo}).
Moreover, we need to introduce meromorphic continuation of the resolvent to enter the inside of the unit circle (see e.g., \cite{HMS}).

\begin{proposition}\label{prop:gen-ef}
For any $\alpha^\inc\in\C^{\Omega^\inc}$ and any $z\in(\C\setminus\ope{Res}(U))\cup\mathbb{S}^1$, there exists a function $\varphi\in\C^A$ satisfying \eqref{eq:gen-ef}: $(U-z)\pphi=0$ with $\alpha^\inc=\chi(\Omega^\inc)\pphi$.
Such a function is unique if $z$ is not an eigenvalue of $U$.
The function $\alpha^\out:=z\chi(\Omega^\out)\varphi\in\C^{\Omega^\out}$ is uniquely determined by $\alpha^\inc$ and $z\in(\C\setminus\ope{Res}(U))\cup\mathbb{S}^1$ even if $z$ is an eigenvalue.
Moreover, the scattering matrix $\Sigma(z)$ is isometric for $z\in\mathbb{S}^1$.
\end{proposition}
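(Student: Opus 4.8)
The plan is to reduce the equation $(U-z)\pphi=0$ on the infinite graph to a finite linear system on $\C^{A_0}$ driven by the matrix $U|_{A_0}$, solve it with (a reduced) resolvent, and then reconstruct $\pphi$ on the tails through Lemma~\ref{lem:on-tails}. Fixing $\alpha^\inc$, any solution is prescribed on the incoming tails by $\pphi(a_{n,j}^\inc)=z^j\alpha^\inc(\omega_n^\inc)$, so that $\chi(\Omega^\inc)\pphi=\alpha^\inc$ holds automatically. Setting $u:=\chi(A_0)\pphi$ and applying $\chi(A_0)$ to $U\pphi=z\pphi$, the locality of $U$ (Lemma~\ref{lem:nbd}, in particular \eqref{eq:inter}) shows that only the values of $\pphi$ on $A_0\cup\Omega^\inc$ reach $A_0$ under one step of $U$, which yields the reduced equation
\[
(U|_{A_0}-z)u=-w,\qquad w:=\chi(A_0)U\chi(\Omega^\inc)^*\alpha^\inc .
\]
Conversely, from any solution $u$ I would rebuild $\pphi$ by keeping the incoming-tail values, setting $\pphi|_{A_0}=u$, defining $\alpha^\out(\omega_n^\out):=(U\pphi)(\omega_n^\out)$ (which depends only on $u$ and $\alpha^\inc$), and propagating $\pphi(a_{n,j}^\out)=z^{-j}\alpha^\out(\omega_n^\out)$; a direct check with the trivial tail dynamics \eqref{eq:dyn-on-tails} confirms $(U-z)\pphi=0$ at every arc.

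Existence and uniqueness then hinge on the invertibility of $U|_{A_0}-z$. When $z\in\C\setminus\ope{Res}(U)$, or when $z\in\mathbb{S}^1$ is not an eigenvalue of $U$, Proposition~\ref{prop:matrix-reduction}(\ref{enu:red1}),(\ref{enu:red3}) shows that $z$ is not an eigenvalue of $U|_{A_0}$, so $u=-(U|_{A_0}-z)^{-1}w$ is the unique solution and $\pphi$ is uniquely determined. The genuinely delicate case, and the main obstacle, is an embedded eigenvalue $z\in\mathbb{S}^1$ of $U$, where $U|_{A_0}-z$ is singular and solvability is not automatic. I would settle it by the Fredholm alternative, reducing existence to the condition $w\perp\ker\bigl((U|_{A_0})^*-\bar z\bigr)$. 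By Proposition~\ref{prop:matrix-reduction2}(1),(3) each vector of this kernel lifts to an incoming resonant state $\pphi^\circledast$ at the incoming resonance $z\in\mathbb{S}^1$, which by Proposition~\ref{prop:matrix-reduction2}(4) is supported in $A_0$ and therefore vanishes on $\Omega^\inc$. Moving $U$ across the inner product and using $U^*\pphi^\circledast=\bar z\,\pphi^\circledast$, I obtain
\[
(w,\chi(A_0)\pphi^\circledast)_{\C^{A_0}}=z\,(\alpha^\inc,\chi(\Omega^\inc)\pphi^\circledast)_{\C^{\Omega^\inc}}=0 ,
\]
so the system is solvable; equivalently, $u$ is produced by the reduced resolvent of $U|_{A_0}$ as announced before the statement.

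To see that $\alpha^\out$ is nonetheless unique even at an eigenvalue, I would take two generalized eigenfunctions $\pphi_1,\pphi_2$ sharing the same $\alpha^\inc$; their difference $\eta$ solves $(U-z)\eta=0$ with $\chi(\Omega^\inc)\eta=0$, hence is an outgoing resonant state and, for $z\in\mathbb{S}^1$, an eigenfunction of $U$ supported in $A_0$ by Proposition~\ref{prop:matrix-reduction}(\ref{enu:red3}). Thus $\chi(\Omega^\out)\eta=0$, and the two associated $\alpha^\out$ coincide.

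Finally, for the isometry of $\Sigma(z)$ on $\mathbb{S}^1$, I would apply \eqref{eq:inter} to the finitely supported vector $\mathbbm{1}_{A_0\cup\Omega^\inc}\pphi\in\cH$ and use the unitarity of $U$:
\[
\|\mathbbm{1}_{A_0\cup\Omega^\inc}\pphi\|=\|U\mathbbm{1}_{A_0\cup\Omega^\inc}\pphi\|=\|\mathbbm{1}_{A_0\cup\Omega^\out}U\pphi\|=|z|\,\|\mathbbm{1}_{A_0\cup\Omega^\out}\pphi\| .
\]
Since $|z|=1$ and $A_0,\Omega^\inc,\Omega^\out$ are pairwise disjoint, the $A_0$-contributions cancel and $\|\chi(\Omega^\inc)\pphi\|=\|\chi(\Omega^\out)\pphi\|$; as $\alpha^\inc=\chi(\Omega^\inc)\pphi$ and $\alpha^\out=z\chi(\Omega^\out)\pphi$, this is exactly $\|\alpha^\inc\|=\|\alpha^\out\|$. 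I expect all steps except the embedded-eigenvalue solvability to be routine bookkeeping on the tails together with this flux identity.
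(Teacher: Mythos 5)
Your proposal is correct, and its skeleton (reduction of $(U-z)\pphi=0$ to the finite system $(U|_{A_0}-z)u=-w$ with $w=\chi(A_0)U\chi(\Omega^\inc)^*\alpha^\inc$, reconstruction of $\pphi$ on the tails, uniqueness of $\alpha^\out$ via outgoing resonant states supported in $A_0$, and isometry via \eqref{eq:inter} plus unitarity) is exactly that of the paper; the genuine difference lies in how existence is obtained at an embedded eigenvalue $z\in\mathbb{S}^1$. The paper first proves that every modulus-one eigenvalue of $U|_{A_0}$ is semi-simple (using unitarity of $U$ through \eqref{eq:v_is_ef}), deduces $P_\lambda f=0$ for the associated spectral projections, and then solves the system by the explicit partial-fraction (reduced-resolvent) formula \eqref{eq:local-resolvent}, which stays regular on $\mathbb{S}^1$ precisely because the would-be singular terms vanish. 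You instead invoke the Fredholm alternative, reducing existence to $w\perp\ker\bigl((U|_{A_0})^*-\bar z\bigr)$, and verify this by lifting kernel vectors to incoming resonant states via Proposition~\ref{prop:matrix-reduction2}, whose support lies in $A_0$ when $|z|=1$; the two orthogonality computations are the same in spirit (lift a kernel vector of the adjoint problem to a state of $U$ supported in $A_0$, then use disjointness of supports), but your route is more economical, since the Fredholm alternative holds for an arbitrary matrix and you therefore never need the semi-simplicity step. What the paper's longer construction buys is the explicit solution formula \eqref{eq:local-resolvent}: it is not merely a device for this proposition but is reused verbatim in the proofs of Theorem~\ref{thm:res-exp} and Theorem~\ref{thm:non-resonant}. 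Accordingly, your closing remark that your solution ``is produced by the reduced resolvent'' is true but not free of charge: to justify that identification you would still need the paper's statement $P_\lambda f=0$ for every modulus-one eigenvalue $\lambda$ (equivalently, orthogonality of $f$ to the full range of $P_\lambda^*$, which is where semi-simplicity enters), so the two arguments are complementary rather than redundant.
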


This proposition is a version of the Rellich theorem (see e.g., \cite[Theorems 3.33, 3.35]{DyZw}).
A part of the proof below is based on the method of \cite[proof of Lemma 3.5]{HiSe} where they studied the Grover walk on graphs with tails.
\begin{proof}
We first show that each eigenvalue $\lambda$ of modulus one of $U|_{A_0}$ is semi-simple. 
Let $u\in\C^{A_0}$ satisfy $(U|_{A_0}-\lambda)^ku=0$ for $k\ge2$ with $\lambda\in\mathbb{S}^1$.
Then $v:=(U|_{A_0}-\lambda)^{k-1}u$ is an eigenvector: $(U|_{A_0}-\lambda)v=0$, and consequently $\chi(A_0)^*v$ is an eigenfunction of $U$ (see the construction of $\til{u}$ in the proof of Proposition~\ref{prop:matrix-reduction}). 
Then by the unitarity of $U$, one obtains 
\be\label{eq:v_is_ef}
(U^*-\lambda^{-1})\chi(A_0)^*v=(-\lambda^{-1}U^*)(U-\lambda)\chi(A_0)^*v=0.
\ee
This implies $(U|_{A_0}^*-\lambda^{-1})v=\chi(A_0)(U^*-\lambda^{-1})\chi(A_0)^*v=0$, and
\ben
\|(U|_{A_0}-\lambda)^{k-1}u\|^2
=((U|_{A_0}-\lambda)^{k-1}u,v)
=((U|_{A_0}-\lambda)^{k-2}u,(U|_{A_0}^*-\bar{\lambda})v)=0.
\een
We obtain $(U|_{A_0}-\lambda)^{k-1}u=0$, and $(U|_{A_0}-\lambda)u=0$ by repeating this process.
This gives the semi-simplicity of each eigenvalue.

Put $f=f(\alpha^\inc):=\chi(A_0)U\chi(\Omega^\inc)^*\alpha^\inc\in\C^{A_0}$.
Then $f$ is orthogonal to any eigenvector $v$ associated with any eigenvalue $\lambda$ of modulus one of $U|_{A_0}$.
Recall that \eqref{eq:v_is_ef} shows $\chi(A_0)^*v$ is an eigenfunction of $U^*$, and
\ben
(v,f)_{\C^{A_0}}=(U^*\chi(A_0)^*v,\chi(\Omega^\inc)^*\alpha^\inc)_{\cH}
=\lambda^{-1}(\chi(A_0)^*v,\chi(\Omega^\inc)^*\alpha^\inc)_{\cH}=0.
\een
Here, we have used the fact that $\chi(\Omega^\inc)^*\alpha^\inc$ is supported only on $\Omega^\inc$.
Since each eigenvalue $\lambda$ of modulus one of $U|_{A_0}$ is semi-simple, one has $P_\lambda f=0$ for the projection onto the (generalized) eigenspace associated with $\lambda$ defined by
\be\label{eq:def-proj}
P_\lambda=-\frac1{2\pi i}\oint_{\lambda}(U|_{A_0}-z)^{-1}dz.
\ee
Let us put $m(0)=\ope{rank}P_0$, the multiplicity of the zero eigenvalue of $U|_{A_0}$.
By a general theory of matrices, one has
\ben
(U|_{A_0}-z)^{-1}g=-\sum_{\lambda\in\ope{Res}(U)}\left(\sum_{k=0}^{m(\lambda)-1}\frac{(U|_{A_0}-\lambda)^kP_\lambda g}{(z-\lambda)^{k+1}}\right),
\een
for $z\in\C\setminus\ope{Res}(U)$.
This with the orthogonality $P_\lambda f=0$ for each $\lambda\in\ope{Res}(U)\cap\mathbb{S}^1$ implies 
\ben
(U|_{A_0}-z)u(\alpha^\inc,z)=-f
\een
for $z\in(\C\setminus\ope{Res}(U))\cup\mathbb{S}^1$ with
\be\label{eq:local-resolvent}
u(\alpha^\inc,z):=
%-(U|_{A_0}-z)^{-1}f=
\sum_{\lambda\in\ope{Res}(U)\setminus\mathbb{S}^1}\left(\sum_{k=0}^{m(\lambda)-1}\frac{(U|_{A_0}-\lambda)^kP_\lambda f}{(z-\lambda)^{k+1}}\right).
\ee
%for $z\in\C\setminus\ope{Res}(U)$.
%The right-hand-side of this identity is valid even if $z$ is an eigenvalue of $U$. Then follows from the analyticity.
Put $\alpha^\out:=\chi(\Omega^\out)U(\chi(A_0)^*u(z,\alpha^\inc)+\chi(\Omega^\inc)^*\alpha^\inc)\in\C^{\Omega^\out}$.
Then one can easily check that
\be\label{eq:construct-gen-ef}
\varphi_{\alpha^\inc,z}(a):=\left\{
\begin{aligned}
&u(\alpha^\inc,z;a)&&a\in A_0\\
&\alpha^\inc(\omega_n)z^j&&a=a_{n,j}^\inc\in A^\inc\\%\ \omega\in\Omega^\inc\\
&\alpha^\out(\omega_n)z^{-j}&&a=a_{n,j}^\out\in A^\out%,\ \omega\in\Omega^\out
\end{aligned}\right.
\ee
is a generalized eigenfunction.

Suppose that there exist two functions $\varphi_1$ and $\varphi_2$ on $A$ such that $\chi(\Omega^\inc)\varphi_1=\chi(\Omega^\inc)\varphi_2$ and $(U-z_0)\varphi_1=(U-z_0)\varphi_2=0$ for some $z_0\in\C$.
It then follows that $\varphi_1-\varphi_2$ is outgoing, and that $z_0$ is a resonance if $\varphi_1-\varphi_2$ is non-trivial.
This shows that the uniqueness of the generalized eigenfunction is broken when $z_0$ is a resonance.
However, the outgoing data $\alpha^\out$ is invariant even if $z_0$ is an eigenvalue.
In this case, $\pphi_1-\pphi_2$ is an eigenfunction, and its support is contained in $A_0$, as we have seen in Lemma~\ref{lem:on-tails}.
%We consider the case that $z_0$ is an eigenvalue, and $\pphi_1-\pphi_2$ is an eigenfunction.
%As we have seen in Lemma~\ref{lem:on-tails}, the support of each eigenfunction of $U$ is contained in $A_0$.
%This means that $\alpha^\out$ is invariant.

According to the identity \eqref{eq:inter}, one obtains for $z\in\mathbb{S}^1$,
\begin{align*}
\|\alpha^\inc\|^2+\|u(\alpha^\inc,z)\|^2
&=\|\mathbbm{1}_{A_0\cup\Omega^\inc}\pphi_{\alpha^\inc,z}\|^2
=\|U\mathbbm{1}_{A_0\cup\Omega^\inc}\pphi_{\alpha^\inc,z}\|^2
=\|\mathbbm{1}_{A_0\cup\Omega^\out}U\pphi_{\alpha^\inc,z}\|^2
%=|z|^2\|\mathbbm{1}_{A_0\cup\Omega^\out}\pphi_{\alpha^\inc,z}\|^2
\\&=\|\alpha^\out\|^2+\|u(\alpha^\inc,z)\|^2.
\end{align*}
This shows the isometry of the scattering matrix $\Sigma(z)$.
\end{proof}

\subsection{Proof of the resonance expansion}\label{sec:prf-res-exp}
We prove Theorem~\ref{thm:res-exp} in this section.
As we have seen in the proof of  Proposition~\ref{prop:gen-ef}, the generalized eigenfunctions can be given by the formulas \eqref{eq:local-resolvent} and \eqref{eq:construct-gen-ef} by using the resolvent of the matrix $U|_{A_0}=\chi(A_0)U\chi(A_0)^*$.
By the general theory of matrices, each projection $P_\lambda$ defined by \eqref{eq:def-proj} onto the generalized eigenspace of $\lambda$ is rewritten as
\be\label{eq:proj-exp}
P_\lambda f=\sum_{k=1}^K\sum_{l=1}^{l_k}
(f,v_{\lambda,k,l}^\circledast)v_{\lambda,k,l},
\ee
where $\{v_{\lambda,k,l};\,k\in\{1,\ldots,K(\lambda)\},\,l\in\{1,\ldots,l_k(\lambda)\}\}$ and $\{v_{\lambda,k,l}^\circledast;\,k\in\{1,\ldots,K(\lambda)\},\,l\in\{1,\ldots,l_k(\lambda)\}\}$ are generalized eigenfunctions associated with the eigenvalue $\lambda$ of $U|_{A_0}$ and with the eigenvalue $\bar\lambda$ of $U|_{A_0}^*$, respectively.
They are characterized by the conditions 
\ben
(v_{\lambda,k,l},v_{\lambda,k',l'}^\circledast)_{\C^{A_0}}=\til{\dl}_{(k,l),(k',l')},
\een
and
\ben
(U|_{A_0}-\lambda)^{l_k-l}v_{\lambda,k,l_k}=v_{\lambda,k,l},\qquad
(U|_{A_0}^*-\bar\lambda)^{l-1}v_{\lambda,k,1}^\circledast=v_{\lambda,k,l}^\circledast,
\een
for $l\in\{1,\ldots,l_k\}$ with $v_{k,0}=v_{k,l_k+1}^\circledast=0$.
Let $\pphi_{\lambda,k,l}$ be the generalized resonant state associated with the resonance $\lambda\in\ope{Res}(U)$ such that $\chi(A_0)\pphi_{\lambda,k,l}=v_{\lambda,k,l}$ and let $\pphi_{\lambda,k,l}^\circledast$ be the generalized incoming resonant state associated with the incoming resonance $\bar\lambda^{-1}$ such that $\chi(A_0)\pphi_{\lambda,k,l}^\circledast=v_{\lambda,k,l}^\circledast$ (see Proposition~\ref{prop:matrix-reduction2} for the existence).
Then, one has
\ben
(f,v^\circledast_{\lambda,k,l})
=(\alpha^\inc,\chi(\Omega^\inc)U^*\chi(A_0)^*v^\circledast_{\lambda,k,l})
=\left(\alpha^\inc,\chi(\Omega^\inc)U^*\pphi_{\lambda,k,l}^\circledast\right),
%=\left(\alpha^\inc,\chi(\Omega^\inc)(\bar\lambda \pphi_{\lambda,k,l}^\circledast+\pphi_{\lambda,k,l+1}^\circledast)\right),
\een
for $f=\chi(A_0)U\chi(\Omega^\inc)^*\alpha^\inc$ with $\alpha^\inc\in\C^{\Omega^\inc}$.
By substituting this into \eqref{eq:proj-exp}, we obtain
\ben
P_\lambda f=\sum_{k=1}^K\sum_{l=1}^{l_k}\left(\alpha^\inc,\chi(\Omega^\inc)U^*\pphi^\circledast_{\lambda,k,l}\right)\chi(A_0)\pphi_{\lambda,k,l},
%\sum_{k=1}^K\sum_{l=1}^{l_k}\sum_{\omega\in\Omega^\inc}\alpha^\inc(\omega)\left(\lambda\overline{\pphi^\circledast_{\lambda,k,l}(\omega)}+\overline{\pphi^\circledast_{\lambda,k,l+1}(\omega)}\right)v_{\lambda,k,l}.
\een
and
\begin{align*}
(U|_{A_0}-\lambda)^pP_\lambda f
&=\sum_{k=1}^K\sum_{l=1}^{l_k}\left(\alpha^\inc,\chi(\Omega^\inc)U^*\pphi^\circledast_{\lambda,k,l}\right)\chi(A_0)\pphi_{\lambda,k,l-p}
\\&
=\sum_{k=1}^K\sum_{l=1}^{l_k-p}\left(\alpha^\inc,\chi(\Omega^\inc)U^*\pphi^\circledast_{\lambda,k,l+p}\right)\chi(A_0)\pphi_{\lambda,k,l}.
\end{align*}

When $z=0$ is an eigenvalue of $U|_{A_0}$, we put
\ben
\pphi_{0,k,l}:=\chi(A_0)^*v_{0,k,l},\qquad
\pphi^\circledast_{0,k,l}:=\chi(A_0)^*v^\circledast_{0,k,l}.
\een
Then $\pphi_{0,k,l}$ is not a generalized resonant state, and $\pphi^\circledast_{0,k,l}$ is not a generalized incoming resonant state.
However, the above computation is valid also for $\lambda=0$.
Let $u(z,\alpha^\inc)$ be the function defined by \eqref{eq:local-resolvent}:
\ben
u(z,\alpha^\inc)
=
\sum_{\lambda\in\ope{Res}(U)\setminus\mathbb{S}^1}\left(\sum_{p=0}^{m(\lambda)-1}\frac{(U|_{A_0}-\lambda)^p P_\lambda f}{(z-\lambda)^{p+1}}\right).
\een
This is now rewritten as
\begin{align*}
u(z,\alpha^\inc)
&=\sum_{\lambda\in\ope{Res}(U)\setminus\mathbb{S}^1}
\sum_{p=0}^{m(\lambda)-1}\frac1{(z-\lambda)^{p+1}}
\sum_{k=1}^K\sum_{l=1}^{l_k-p}\left(\alpha^\inc,\chi(\Omega^\inc)U^*\pphi^\circledast_{\lambda,k,l+p}\right)\chi(A_0)\pphi_{\lambda,k,l}
\\&=\sum_{\lambda\in\ope{Res}(U)\setminus\mathbb{S}^1}
\sum_{k=1}^K\sum_{l=1}^{l_k}
\sum_{p=0}^{l_k-l}\frac1{(z-\lambda)^{p+1}}\left(\alpha^\inc,\chi(\Omega^\inc)U^*\pphi^\circledast_{\lambda,k,l+p}\right)\chi(A_0)\pphi_{\lambda,k,l}
\\&=:\sum_{\lambda\in\ope{Res}(U)\setminus\mathbb{S}^1}u_\lambda(z,\alpha^\inc).
\end{align*}
This gives a resonance expansion in $A_0$ of the generalized eigenfunction $\pphi_{\alpha^\inc,z}$ defined by \eqref{eq:construct-gen-ef} and \eqref{eq:local-resolvent}.

According to the formula right before \eqref{eq:construct-gen-ef}, % in the proof of Proposition~\ref{prop:gen-ef}, 
the outgoing data $\alpha^\out$ is given by $\alpha^\out=\chi(\Omega^\out)U(\chi(A_0)^*u(z,\alpha^\inc)+\chi(\Omega^\inc)^*\alpha^\inc)$.
For each $\lambda\in\ope{Res}(U)\setminus(\mathbb{S}^1\cup\{0\})$, % and $k\in\{1,\ldots,K(\lambda)\}$, 
we deduce
\begin{align*}
\chi(\Omega^\out)U\chi(A_0)^*u_\lambda(z,\alpha^\inc)
%\left(\sum_{k=1}^K\sum_{l=1}^{l_k}\sum_{p=0}^{l_k-l}\frac1{(z-\lambda)^{p+1}}\left(\alpha^\inc,\chi(\Omega^\inc)U^*\pphi^\circledast_{\lambda,k,l+p}\right)\chi(A_0)\pphi_{\lambda,k,l}\right)
%\\&
%=\sum_{l=1}^{l_k}\sum_{p=0}^{l_k-l}\frac1{(z-\lambda)^{p+1}}\left(\alpha^\inc,\chi(\Omega^\inc)U^*\pphi^\circledast_{\lambda,k,l+p}\right)\mathbbm{1}_{A_0\cup\Omega^\out}(\lambda\pphi_{\lambda,k,l}+\pphi_{\lambda,k,l-1})
%\\&=\sum_{l=1}^{l_k}\sum_{p=0}^{l_k-l}\frac1{(z-\lambda)^{p+1}}\left(\alpha^\inc,\chi(\Omega^\inc)(\bar\lambda U^*\pphi^\circledast_{\lambda,k,l+p}+U^*\pphi^\circledast_{\lambda,k,l+p+1})\right)\mathbbm{1}_{A_0\cup\Omega^\out}\pphi_{\lambda,k,l}
%=\sum_{k=1}^K\sum_{l=1}^{l_k}\sum_{p=0}^{l_k-l}\frac1{(z-\lambda)^{p+1}}\left(\alpha^\inc,\chi(\Omega^\inc)U^{-2}\pphi^\circledast_{\lambda,k,l+p}\right)\chi(\Omega^\out)\mathbbm{1}_{A_0\cup\Omega^\out}\pphi_{\lambda,k,l}
=M_\lambda\alpha^\inc,
\end{align*}
from
$
U\chi(A_0)^*\chi(A_0)\pphi_{\lambda,k,l}
=U\mathbbm{1}_{A_0}\pphi_{\lambda,k,l}
=\mathbbm{1}_{A_0\cup\Omega^\out}\left(\lambda\pphi_{\lambda,k,l}+\pphi_{\lambda,k,l-1}\right)
$ and $U^*\pphi_{\lambda,k,l}^\circledast=\bar\lambda\pphi_{\lambda,k,l}^\circledast+\pphi_{\lambda,k,l+1}^\circledast$ (see \eqref{eq:def-M_lam2} for the definition of $M_\lambda$).
One also has (see \eqref{eq:def-M_0} for the definition of $M_0$)
\begin{align*}
\chi(\Omega^\out)U\chi(A_0)^*u_0(z,\alpha^\inc)
&=\sum_{k=1}^{K(0)}\sum_{l=1}^{l_k(0)}
\sum_{p=0}^{l_k(0)-l}\frac1{z^{p+1}}\left(\alpha^\inc,\chi(\Omega^\inc)U^*\pphi^\circledast_{0,k,l+p}\right)\chi(\Omega^\out)U\pphi_{0,k,l}
\\&=\left(M_0-\chi(\Omega^\out)U\chi(\Omega^\inc)\right)\alpha^\inc.
\end{align*}
This ends the proof of Theorem~\ref{thm:res-exp}.

\section{Proof of the asymptotic behavior}\label{sec:prf-main}
In this section, we prove Theorem~\ref{thm:non-resonant}, Corollary~\ref{cor:res-tun} stated in Subsection~\ref{sec:results} and Theorem~\ref{thm:comfortability} done in Subsection~\ref{sec:comf}.
\subsection{Resolvent estimates}
We prepare some estimates on reduced resolvents.
Put $E(\e):=\ope{EV}(U(\e)|_{A_0})\cap\{z\in\C;\,|z|\le1/2\}$, where $\ope{EV}(U(\e)|_{A_0})$ stands for the set of eigenvalues of $U(\e)|_{A_0}$. 
Let $P_\e$ be the projection onto the union $\bigoplus_{z\in E(\e)}\ope{Ker}\left((U(\e)|_{A_0}-z)^{\#A_0}\right)$ of generalized eigenspaces associated with eigenvalues belonging to $E(\e)$ of $U(\e)|_{A_0}$.
%Moreover, for each $\lambda\in\ope{Res}(U(0))\cap\mathbb{S}^1$, there exists a function $\lambda=\lambda_\e$ such that $\lambda_\e\in\ope{Res}(U(\e))\setminus E(\e)$ for each small $\e>0$.

We study the reduced resolvents
\ben
\til{R}(\e,z):=R(\e,z)P_{\e}=-\sum_{\lambda\in E(\e)}\left(\sum_{k=0}^{m(U(\e)|_{A_0};\lambda)-1}\frac{(U(\e)|_{A_0}-z)^kP_{\lambda,\e}}{(z-\lambda)^{k+1}}\right),
\een
\ben
Q(\e,z):=-\sum_{\lambda\in \ope{Res}(U(0))\cap\mathbb{S}^1,\ |\lambda_\e(\lambda)|\neq1}\frac{P_{\lambda_\e(\lambda),\e}}{z-\lambda_\e(\lambda)},
\een
where $P_{\lambda,\e}$ stands for the projection onto the generalized eigenspace associated with the eigenvalue $\lambda$ of $U(\e)|_{A_0}$.
Recall that $\lambda_\e(\lambda)$ denotes the unique eigenvalue of $U(\e)|_{A_0}$ approaching  $\lambda$ in the limit $\e\to+0$.

Note that one has $Q(0,z)=0$ since $\ope{EV}(U(0)|_{A_0})\subset\ope{Res}(U(0))\subset\mathbb{S}^1\cup\{0\}$.
Roughly speaking, the difference $Q(\e,z)-Q(0,z)=Q(\e,z)\not\to0$ as $\e\to+0$ makes some discrepancy, while $\til{R}(\e,z)-\til{R}(0,z)\to0$ as $\e\to+0$.
\begin{proposition}\label{prop:res-esti}
As $\e\to+0$, one has
\ben
\|\til{R}(\e,z)-\til{R}(0,z)\|_{\C^{A_0}\to\C^{A_0}}=\ord(\e).
\een
\end{proposition}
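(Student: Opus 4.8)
The plan is to represent the reduced resolvent as a contour integral and then apply the second resolvent identity. First I would fix the circle $\gamma=\{w\in\C;\,|w|=1/2\}$. Since every eigenvalue of $U(0)|_{A_0}$ lies in $\ope{Res}(U(0))\subset\mathbb{S}^1\cup\{0\}$ (Proposition~\ref{prop:matrix-reduction}), none of them sits on $\gamma$. Moreover, because $U(\e)|_{A_0}$ is a fixed-size matrix whose entries are assembled from the finitely many local isometries $U_v(\e)$ $(v\in V_0)$, the right-differentiability in Condition~\ref{cond:pert} gives the norm bound $\|U(\e)|_{A_0}-U(0)|_{A_0}\|_{\C^{A_0}\to\C^{A_0}}=\ord(\e)$. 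By continuity of the spectrum of a matrix under perturbation, for all sufficiently small $\e>0$ the eigenvalues in $E(\e)$ remain in $\{|w|<1/2\}$ while the remaining eigenvalues remain in $\{|w|>1/2\}$; hence $\gamma$ encircles exactly the eigenvalues of $E(\e)$ and avoids the whole spectrum of $U(\e)|_{A_0}$, uniformly for small $\e$.

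This yields the Riesz-projection formula $P_\e=-\frac{1}{2\pi i}\oint_\gamma(U(\e)|_{A_0}-w)^{-1}\,dw$. For any $z$ with $|z|>1/2$ (which covers the relevant regime $z\in\mathbb{S}^1$), the resolvent identity $(U(\e)|_{A_0}-z)^{-1}(U(\e)|_{A_0}-w)^{-1}=\tfrac{1}{z-w}\bigl[(U(\e)|_{A_0}-z)^{-1}-(U(\e)|_{A_0}-w)^{-1}\bigr]$, combined with Cauchy's theorem (the pole at $w=z$ lies outside $\gamma$, so the $(U(\e)|_{A_0}-z)^{-1}$ term integrates to zero), gives the compact representation
\[
\til{R}(\e,z)=R(\e,z)P_\e=\frac{1}{2\pi i}\oint_\gamma\frac{(U(\e)|_{A_0}-w)^{-1}}{z-w}\,dw ,
\]
and the same formula holds at $\e=0$.

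Subtracting the two representations and applying the second resolvent identity pointwise on $\gamma$, namely $(U(\e)|_{A_0}-w)^{-1}-(U(0)|_{A_0}-w)^{-1}=-(U(\e)|_{A_0}-w)^{-1}\bigl(U(\e)|_{A_0}-U(0)|_{A_0}\bigr)(U(0)|_{A_0}-w)^{-1}$, I would estimate the integrand factor by factor: on $\gamma$ the two resolvents are uniformly bounded for small $\e$ (by the uniform spectral gap established in the first step), the scalar $|z-w|^{-1}$ is bounded by $(|z|-1/2)^{-1}$, the length of $\gamma$ equals $\pi$, and the middle factor is $\ord(\e)$. Multiplying these bounds yields $\|\til{R}(\e,z)-\til{R}(0,z)\|_{\C^{A_0}\to\C^{A_0}}=\ord(\e)$, with the implied constant uniform for $z$ bounded away from $\gamma$ (in particular uniform over $z\in\mathbb{S}^1$).

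The only genuinely delicate point is the uniform invertibility of $U(\e)|_{A_0}-w$ along $\gamma$ for all small $\e$ simultaneously. I would secure it by combining the norm bound $\|U(\e)|_{A_0}-U(0)|_{A_0}\|=\ord(\e)$ with the fact that $U(0)|_{A_0}-w$ is invertible with $w$-uniformly bounded inverse on the compact set $\gamma$; a Neumann-series argument then makes $U(\e)|_{A_0}-w$ invertible with a uniformly bounded inverse once $\e$ is small enough, which simultaneously justifies that $\gamma$ carries no spectrum of $U(\e)|_{A_0}$. Everything remaining is a routine contour-integral estimate.
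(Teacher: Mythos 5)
Your proof is correct, but it takes a genuinely different route from the paper's. The paper proves the proposition algebraically: it first derives the identity \eqref{eq:res-eq}, $\til{R}(\e,z)-\til{R}(0,z)=\til{R}(\e,z)(1-P_0)-(1-P_\e)\til{R}(0,z)-\til{R}(\e,z)(U(\e)|_{A_0}-U(0)|_{A_0})\til{R}(0,z)$, and then estimates the three terms using Lemma~\ref{lem:proj-estimate}, whose two ingredients are the uniform bound $\sup_{z\in\mathbb{S}^1}\|\til{R}(\e,z)\|\le C$, obtained from the resolvent bound of \cite{DaSi} for non-normal matrices, and the projection estimate $\|P_\e-P_0\|=\ord(\e)$, which rests on Kato's perturbation theory and uses the simplicity of eigenvalues assumed in Condition~\ref{cond:pert}. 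You instead write $\til{R}(\e,z)$ as a Cauchy integral over the fixed circle $\gamma=\{|w|=1/2\}$, apply the second resolvent identity inside the integral, and bound the resolvents on $\gamma$ by compactness plus a Neumann series. This buys two things: the non-normality difficulty that forces the paper to invoke \cite{DaSi} disappears, since you only ever need the resolvent at points of $\gamma$, which stay at distance at least $1/4$ from the whole spectrum uniformly in $\e$; and you never need $\|P_\e-P_0\|=\ord(\e)$, so your argument does not use the simplicity hypothesis at all---it needs only $\|U(\e)|_{A_0}-U(0)|_{A_0}\|=\ord(\e)$ and the fact that the spectrum of $U(0)|_{A_0}$ avoids $\gamma$. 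Conversely, the paper's detour through Lemma~\ref{lem:proj-estimate} produces byproducts ($\|\til{R}(\e,z)\|=\ord(1)$ uniformly and $\|P_\e-P_0\|=\ord(\e)$) that are reused later in the proofs of Theorems~\ref{thm:non-resonant} and \ref{thm:comfortability}; with your route those would have to be recovered separately, although your contour formula does yield the uniform resolvent bound as an immediate corollary. Two minor points to phrase carefully: the inclusion $\ope{EV}(U(0)|_{A_0})\subset\ope{Res}(U(0))\subset\mathbb{S}^1\cup\{0\}$ is not purely a consequence of Proposition~\ref{prop:matrix-reduction}---the second inclusion comes from the decoupling condition \eqref{eq:free-motion-e=0}, as in the proof of Lemma~\ref{lem:proj-estimate}, though the paper states exactly this chain just before the proposition, so the fact is at your disposal; and when $z\in\mathbb{S}^1$ happens to be an eigenvalue of $U(\e)|_{A_0}$ the product $R(\e,z)P_\e$ is not literally defined, so you should either identify the contour integral with the partial-fraction definition of $\til{R}(\e,z)$ directly by residues at the eigenvalues in $E(\e)$, or extend your identity from non-eigenvalue $z$ by analytic continuation in $\{|z|>1/2\}$.
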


To prove this, we prepare the following lemma.
The assumption on the simplicity of eigenvalues of $U(0)$ is used here.
\begin{lemma}\label{lem:proj-estimate}
There exists $C>0$ such that
\ben
\sup_{z\in\mathbb{S}^1}\left\|\til{R}(\e,z)\right\|%_{\C^{A_0}\to\C^{A_0}}
\le C,\qquad
\|P_\e-P_0\|%_{\C^{A_0}\to\C^{A_0}}
\le C\e,
\een
for any $\e\ge0$ small enough and $z\in\mathbb{S}^1$.
\end{lemma}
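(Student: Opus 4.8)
The plan is to realise both $P_\e$ and the reduced resolvent $\til R(\e,z)$ as Cauchy integrals over a single circle that is independent of $\e$, and then to estimate the integrands uniformly. First I would use the structure of the unperturbed spectrum: the non-zero eigenvalues of $U(0)|_{A_0}$ are exactly the non-zero resonances of $U(0)$ (Proposition~\ref{prop:matrix-reduction}), and one has $\ope{EV}(U(0)|_{A_0})\subset\ope{Res}(U(0))\subset\mathbb{S}^1\cup\{0\}$, so the spectrum of $U(0)|_{A_0}$ meets the annulus $\{1/4<|w|<3/4\}$ in the empty set. Since each $U_v(\e)$ is right-differentiable at $\e=0$ (Condition~\ref{cond:pert}), the matrix $U(\e)|_{A_0}=\chi(A_0)U(\e)\chi(A_0)^*$ satisfies $\|U(\e)|_{A_0}-U(0)|_{A_0}\|=\ord(\e)$ and its eigenvalues move continuously; simplicity of the eigenvalues of $U(0)$ forces each eigenvalue on $\mathbb{S}^1$ to perturb into a single resonance $\lambda_\e$ with $1-|\lambda_\e|=\ord(\e)$, while the eigenvalues of the zero-cluster stay inside $\{|w|\le 1/4\}$. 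Hence there is $\e_1>0$ so that for every $\e\in[0,\e_1]$ the spectrum of $U(\e)|_{A_0}$ avoids the circle $\Gamma:=\{w;\,|w|=1/2\}$, with $E(\e)$ lying strictly inside $\Gamma$ and all remaining eigenvalues strictly outside.

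On this fixed contour I would write $P_\e=\frac{1}{2\pi i}\oint_{\Gamma}(w-U(\e)|_{A_0})^{-1}\,dw$, and, for $z\in\mathbb{S}^1$ (which lies outside $\Gamma$), apply the holomorphic functional calculus to $g(w)=(w-z)^{-1}$ to obtain
\ben
\til R(\e,z)=(U(\e)|_{A_0}-z)^{-1}P_\e=\frac{1}{2\pi i}\oint_{\Gamma}\frac{(w-U(\e)|_{A_0})^{-1}}{w-z}\,dw.
\een
Since $(\e,w)\mapsto (w-U(\e)|_{A_0})^{-1}$ is continuous on the compact set $[0,\e_1]\times\Gamma$, on which no spectrum is met, it is bounded there by some $M$; as $|w-z|\ge 1-\tfrac12=\tfrac12$ for $w\in\Gamma$ and $z\in\mathbb{S}^1$, the representation immediately gives $\sup_{z\in\mathbb{S}^1}\|\til R(\e,z)\|\le M$, uniformly in $\e$. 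For the projection difference I would insert the resolvent identity
\ben
(w-U(\e)|_{A_0})^{-1}-(w-U(0)|_{A_0})^{-1}=(w-U(\e)|_{A_0})^{-1}\bigl(U(\e)|_{A_0}-U(0)|_{A_0}\bigr)(w-U(0)|_{A_0})^{-1}
\een
into the formula for $P_\e-P_0$; the two resolvents are bounded by $M$ on $\Gamma$ and the middle factor is $\ord(\e)$, so integrating over the fixed-length contour yields $\|P_\e-P_0\|\le C\e$.

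The hard part is the first step, namely producing the single $\e$-independent contour $\Gamma$ together with a uniform spectral gap around it; everything afterwards is a routine resolvent estimate. This is precisely where Condition~\ref{cond:pert} enters: the confinement of $\ope{spec}(U(0)|_{A_0})$ to $\{0\}\cup\mathbb{S}^1$ rules out any eigenvalue sitting in the gap at $\e=0$, and simplicity guarantees that the eigenvalues emanating from $\mathbb{S}^1$ recede inward only at rate $\ord(\e)$, so they cannot cross $\Gamma$ for $\e\le\e_1$. One must also verify that no member of the zero-cluster escapes outward across $\Gamma$, which again follows from continuity of the spectrum together with $\|U(\e)|_{A_0}-U(0)|_{A_0}\|=\ord(\e)$.
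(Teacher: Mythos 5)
Your proof is correct in substance but follows a genuinely different route from the paper's. For the uniform bound on $\til{R}(\e,z)$, the paper writes $\til{R}(\e,z)=(U(\e)|_{A_0}P_\e-z)^{-1}P_\e$ and invokes the resolvent estimate for non-normal matrices of Davies--Simon \cite{DaSi} to obtain the explicit bound $2\cot(\pi/(4\#A_0))$; you instead represent $\til{R}(\e,z)$ as a Cauchy integral over the fixed circle $\Gamma=\{|w|=1/2\}$, which sidesteps the non-normality issue the paper flags, since you never need a distance-to-spectrum bound, only boundedness of the resolvent of the fixed matrix $U(0)|_{A_0}$ on a fixed compact contour plus a perturbation argument. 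For $\|P_\e-P_0\|\le C\e$, the paper expands $I-P_\e=\sum_{\lambda}P_{\lambda_\e(\lambda),\e}$ and uses Kato's perturbation theory for each simple eigenvalue (right-differentiability of the individual eigenprojections), whereas you obtain the estimate in one stroke from the resolvent identity under the contour integral; notably, your argument for this lemma does not use simplicity of the eigenvalues of $U(0)$ at all, only the spectral gap around $\Gamma$ and $\|U(\e)|_{A_0}-U(0)|_{A_0}\|=\ord(\e)$, so it is in this respect slightly more general. Each approach buys something: the paper's gives an explicit, dimension-dependent constant; yours is more self-contained and elementary.

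One justification needs repair. You claim that $(\e,w)\mapsto(w-U(\e)|_{A_0})^{-1}$ is continuous on the compact set $[0,\e_1]\times\Gamma$, but Condition~\ref{cond:pert} only gives right-differentiability of $U_v(\e)$ at $\e=0$; no continuity in $\e$ is assumed on $(0,\e_1]$, so the compactness argument is not available as stated. The fix is immediate and uses only what you already have: with $M_0:=\sup_{w\in\Gamma}\|(w-U(0)|_{A_0})^{-1}\|<\infty$ (the resolvent of the fixed matrix $U(0)|_{A_0}$ is continuous on the compact set $\Gamma$, which avoids its spectrum), write for $w\in\Gamma$
\ben
w-U(\e)|_{A_0}=(w-U(0)|_{A_0})\left(I-(w-U(0)|_{A_0})^{-1}\bigl(U(\e)|_{A_0}-U(0)|_{A_0}\bigr)\right),
\een
and invert the second factor by a Neumann series once $M_0\|U(\e)|_{A_0}-U(0)|_{A_0}\|\le 1/2$, which holds for $\e$ small since $\|U(\e)|_{A_0}-U(0)|_{A_0}\|=\ord(\e)$. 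This yields $\sup_{w\in\Gamma}\|(w-U(\e)|_{A_0})^{-1}\|\le 2M_0$ uniformly for small $\e\ge0$, and simultaneously proves that the spectrum of $U(\e)|_{A_0}$ avoids $\Gamma$, which you need anyway to identify your Riesz projection over $\Gamma$ with the projection $P_\e$ onto the generalized eigenspaces of $E(\e)$. With this substitution the rest of your argument goes through verbatim.
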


\begin{remark}
Note that the matrix $U(\e)|_{A_0}$ is in general non-normal.
Resolvent estimates for a non-normal operator are difficult even if the spectral parameter is away from the spectrum of the operator (see \cite[Formula (1.15)]{DaSi}).
We here use \cite[Theorem 1]{DaSi} to manage this problem.
\end{remark}

\begin{proof}[Proof of Lemma~\ref{lem:proj-estimate}]
The modulus of each eigenvalue of $U(0)|_{A_0}$ is  either one or zero.
This is due to the condition \eqref{eq:free-motion-e=0} which completely separates the quantum walks on the interior and exterior.
More precisely, for an arc $a\in A$ belonging to the exterior, that is, $\ope{supp}\,U(0)^k\dl_a\in A^\out$  and $\ope{supp}\,U(0)^{-k}\dl_a\in A^\inc$ holds for large $k>0$, one has $U(0)|_{A_0}^k\dl_a=0$. 
This means $\dl_a$ belongs to the generalized eigenspace associated with the eigenvalue zero.
The action of $U(0)$ on a function $\pphi\in\cH$ supported in the interior and that of $U(0)|_{A_0}$ on $\chi(A_0)\pphi$ are completely same. This implies the coincidence of the eigenvalues on $\mathbb{S}^1$ of $U(0)$ and $U(0)|_{A_0}$.

Under the assumption of the right-differentiablility of $U(\e)$ at $\e=0$, each eigenvalue of $U(\e)|_{A_0}$ converges to an eigenvalue of $U(0)|_{A_0}$ as $\e\to+0$.
Recall that, for each $\lambda\in \ope{Res}(U(0))\cap\mathbb{S}^1$, the unique eigenvalue $\lambda_\e(\lambda)$ converging to $\lambda$ satisfies $|\lambda_\e-\lambda|=\ord(\e)$ \cite{Kato}.
Then there exists $C>0$ such that $\ope{Res}(U(\e))\setminus E(\e)=\{\lambda_\e(\lambda);\,\lambda\in\ope{Res}(U(0))\cap\mathbb{S}^1\}\subset\{z\in\C;\,|z|\ge1-C\e\}$ holds for $\e\ge0$ small enough.
One then has
\be\label{eq:exp-proj}
I-P_\e=\sum_{\lambda\in\ope{Res}(U(0))\cap\mathbb{S}^1}P_{\lambda_\e(\lambda),\e}.
\ee
Recall that $P_{\lambda_\e(\lambda),\e}$ is the projection onto the eigenspace (since $\lambda_\e(\lambda)$ is simple) associated with the simple eigenvalue $\lambda_\e(\lambda)$ of $U(\e)|_{A_0}$.
Each one of these projection is also right-differentiable at $\e=0$ due to the simplicity of the eigenvalue $\lambda$.
In particular, one has $\left\|P_{\lambda_\e(\lambda),\e}-P_{\lambda,0}\right\|=\ord(\e)$.
This with \eqref{eq:exp-proj} implies the boundedness of $I-P_\e$ and $P_\e$ uniformly for small $\e\ge0$, and
\begin{align*}
\left\|P_\e-P_0\right\|=\left\|(I-P_\e)-(I-P_0)\right\|\le\sum_{\lambda\in\ope{Res}(U(0))\cap\mathbb{S}^1}\left\|P_{\lambda_\e(\lambda),\e}-P_{\lambda,0}\right\|=\ord(\e).
\end{align*}

Concerning the estimate of the reduced resolvent, we remark that one has
\ben
\til{R}(\e,z)=(U(\e)|_{A_0}P_\e-z)^{-1}P_\e.
\een
The above equality can be easily checked by the definition of $\til{R}(\e,z)$ and the following property of $P_\e$ for any $\lambda\in E(\e)$:
\ben
P_\e P_{\lambda,\e}= P_{\lambda,\e}P_\e=P_{\lambda,\e},\qquad
P_\e U(\e)|_{A_0}=U(\e)|_{A_0} P_\e.
\een

According to \cite[Theorem 1]{DaSi}, if an $n\times n$-matrix $M$ and a complex number $z\in\C$ satisfies
\ben
|z|\ge \|M\|,
\een
and if $z$ is not an eigenvalue of $M$, one has
\ben
\|(M-z)^{-1}\|\le\frac{\cot(\pi/4n)}{\ope{dist}(z,\ope{EV}(M))},\qquad
\ope{dist}(z,\ope{EV}(M))=\min_{\lambda\in\ope{EV}(M)}|z-\lambda|,
\een
where $\ope{EV}(M)$ denotes the set of eigenvalues of $M$. 
By applying this formula, we obtain
\ben
\|(U(\e)|_{A_0}P_\e-z)^{-1}\|_{\C^{A_0}\to\C^{A_0}}\le2\cot\left(\frac{\pi}{4\#A_0}\right),
\een
for any $z\in\mathbb{S}^1$ since the modulus of each eigenvalue of $U(\e)|_{A_0}$ is less than or equal to $1/2$.
\end{proof}

\begin{proof}[Proof of Proposition~\ref{prop:res-esti}]
For the reduced resolvents, one has the following resolvent identity:
\be\label{eq:res-eq}
\begin{aligned}
&\til{R}(\e,z)-\til{R}(0,z)
\\&=\til{R}(\e,z)(1-P_0)-(1-P_\e)\til{R}(0,z)
-\til{R}(\e,z)\left(U(\e)|_{A_0}-U(0)|_{A_0}\right)\til{R}(0,z).
\end{aligned}
\ee
Let us prove it. 
By definition, one has
\ben
(U(\e)|_{A_0}-z)\til{R}(\e,z)=\til{R}(\e,z)(U(\e)|_{A_0}-z)=P_\e\qquad(\e\ge0).
\een
Then, one gets $\til{R}(\e,z)(U(\e)|_{A_0}-z)\til{R}(0,z)=P_\e\til{R}(0,z)$,
 and also
\begin{align*}
\til{R}(\e,z)(U(\e)|_{A_0}-z)\til{R}(0,z)
&=\til{R}(\e,z)\left((U(0)|_{A_0}-z)+(U(\e)|_{A_0}-U(0)|_{A_0})\right)\til{R}(0,z)
\\&=\til{R}(\e,z)P_0+\til{R}(\e,z)(U(\e)|_{A_0}-U(0)|_{A_0})\til{R}(0,z).
\end{align*}
By combining the above two expressions of $\til{R}(0,z)(U(\e)|_{A_0}-z)\til{R}(\e,z)$, one obtains
\ben
\til{R}(\e,z)P_0-P_\e\til{R}(0,z)=-\til{R}(\e,z)(U(\e)|_{A_0}-U(0)|_{A_0})\til{R}(0,z).
\een
The resolvent identity \eqref{eq:res-eq} is a consequence of this formula.

Then, let us estimate each term of the right-hand-side of the resolvent identity \eqref{eq:res-eq}.
Since $P_\e$ is a projection, one has $(I-P_\e)\til{R}(\e,z)=\til{R}(\e,z)(I-P_\e)=0$.
This with Lemma~\ref{lem:proj-estimate} implies 
\ben
\|\til{R}(\e,z)(1-P_0)\|=\|\til{R}(\e,z)(P_\e-P_0)\|\le\|\til{R}(\e,z)\|\|P_\e-P_0\|=\ord(\e),
\een
and similarly $\|(1-P_\e)\til{R}(0,z)\|\le\|P_\e-P_0\|\|\til{R}(0,z)\|=\ord(\e)$.
Since $U(\e)$ is  right-differentiable and $\#A_0$ is finite, one has $\|U(\e)|_{A_0}-U(0)|_{A_0}\|=\ord(\e)$ and
\ben
\|\til{R}(\e,z)(U(\e)|_{A_0}-U(0)|_{A_0})\til{R}(0,z)\|\le
\|\til{R}(\e,z)\|\|U(\e)|_{A_0}-U(0)|_{A_0}\|\|\til{R}(0,z)\|=\ord(\e).
\een
This ends the proof.
\end{proof}

\subsection{Proof of Theorem~\ref{thm:non-resonant}}
We here prove our main theorem.
In the first part of the proof below, the formula \eqref{eq:discrepancy-thm} is shown.
With this formula, the other part of the theorem is given by estimates of the matrix $M_{\lambda,\e}$.
\begin{proof}[Proof of Theorem~\ref{thm:non-resonant}]
Recall that the generalized eigenfunction is constructed by using the resolvent of $U|_{A_0}$ by formulae \eqref{eq:local-resolvent} and \eqref{eq:construct-gen-ef}.
Let $\pphi_{\alpha^\inc,\e,z}$ be the generalized eigenfunction for $U(\e)$.
According to the construction there, the outgoing data $\alpha^\out(\e)=\alpha^\out(\alpha^\inc,\e,z)=\chi(\Omega^\out)\pphi_{\alpha^\inc,\e,z}$ is given by
\ben
\alpha^\out(\e)=\chi(\Omega^\out)U(\e)\left(1-\chi(A_0)^*(U(\e)|_{A_0}-z)^{-1}\chi(A_0)U(\e)\right)\chi(\Omega^\inc)^*\alpha^\inc.
\een
As we have seen in the proof of Proposition~\ref{prop:gen-ef}, $\chi(A_0)U(\e)\chi(\Omega^\inc)^*\alpha^\inc$ is orthogonal to any eigenvector associated with an eigenvalue of $U(\e)|_{A_0}$ of modulus one.
This implies that the resolvent can be replaced with the reduced resolvents, that is,
\ben
\alpha^\out(\e)=\chi(\Omega^\out)U(\e)\left(1-\chi(A_0)^*\left(\til{R}(\e,z)+Q(\e,z)\right)\chi(A_0)U(\e)\right)\chi(\Omega^\inc)^*\alpha^\inc.
\een
The difference $\alpha^\out(\e)-\alpha^\out(0)$ is then expressed as
\begin{align*}
\alpha^\out(\e)-\alpha^\out(0)
&=\chi(\Omega^\out)\left[U(s)\left(1-\chi(A_0)^*\til{R}(s,z)\chi(A_0)U(s)\right)\right]_{s=0}^\e\chi(\Omega^\inc)^*\alpha^\inc
\\&\qquad-\chi(\Omega^\out)U(\e)\chi(A_0)^*Q(\e,z)\chi(A_0)U(\e)\chi(\Omega^\inc)^*\alpha^\inc.
\end{align*}
We here use the notation $[F(s)]_{s=0}^\e=F(\e)-F(0)$ and the fact $Q(0,z)=0$.
By the same argument as in the proof of the resonance expansion (Subsection~\ref{sec:prf-res-exp}), the second term of the right-hand-side is computed as
\begin{align*}
&-\chi(\Omega^\out)U(\e)\chi(A_0)^*Q(\e,z)\chi(A_0)U(\e)\chi(\Omega^\inc)^*\alpha^\inc
\\&\qquad=\sum_{\substack{\lambda\in\ope{Res}(U(0))\cap\mathbb{S}^1,\\ |\lambda_\e(\lambda)|\neq1}}
M_{\lambda,\e}(z)\alpha^\inc
=\sum_{\lambda\in\ope{Res}(U(0))\cap\mathbb{S}^1}
M_{\lambda,\e}(z)\alpha^\inc.
%\frac{(\alpha^\inc,\pphi^\circledast_{\lambda_\e(\lambda)})}{z-\lambda_\e(\lambda)}\pphi_{\lambda_\e(\lambda)}
\end{align*}
Note that $|\lambda_\e(\lambda)|=1$ implies that $\lambda_\e$ is an eigenvalue of $U(\e)$, and $M_{\lambda,\e}(z)=0$ for any $z\in\C$ since both $\chi(\Omega^\inc)\pphi_{\lambda,\e}^\circledast$ and $\chi(\Omega^\out)\pphi_{\lambda,\e}$ vanish identically (see Lemma~\ref{lem:on-tails}).
For the remaining part, one has
\begin{align*}
&\left[U(s)\left(1-\chi(A_0)^*\til{R}(s,z)\chi(A_0)U(s)\right)\right]_{s=0}^\e
\\&=(U(\e)-U(0))\left(1-\chi(A_0)^*\til{R}(0,z)\chi(A_0)U(0)\right)
+U(\e)\chi(A_0)^*\til{R}(0,z)\chi(A_0)(U(\e)-U(0))
\\&\quad+U(\e)\chi(A_0)^*(\til{R}(\e,z)-\til{R}(0,z))U(\e).
\end{align*}
The norm as $\cH\to\cH$ of the first and second terms are estimated as $\ord(\e)$ by using the unitarity of $U(\e)$, the estimate $\|\til{R}(0,z)\|=\ord(1)$ of Lemma~\ref{lem:proj-estimate}, and $\|U(\e)-U(0)\|_{\cH\to\cH}=\ord(\e)$ (Condition~\ref{cond:pert}).
For the third term, we use the estimate of Proposition~\ref{prop:res-esti}.
This ends the proof of \eqref{eq:discrepancy-thm}.

Let us estimate $M_{\lambda,\e}(z)$.
For any $z\in\mathbb{S}^1$, \eqref{eq:widths} and  \eqref{eq:esti-omega-in} implies
\begin{align*}
\|M_{\lambda,\e}(z)\|_{\C^{\Omega^\inc}\to\C^{\Omega^\out}}
&=
\left\|M_{\lambda,\e}(z)\frac{\chi(\Omega^\inc)\pphi_{\lambda,\e}^\circledast}{\|\chi(\Omega^\inc)\pphi_{\lambda,\e}^\circledast\|}
\right\|_{\C^{\Omega^\out}}
\\&=\left|\frac{\lambda_\e^2}{z-\lambda_\e}\right|\|\chi(\Omega^\inc)\pphi_{\lambda,\e}^\circledast\|\|\chi(\Omega^\out)\pphi_{\lambda,\e}\|
\\&=\left|\frac{1-|\lambda_\e|^2}{z-\lambda_\e}\right|
\|\chi(A_0)\pphi_{\lambda,\e}^\circledast\|\|\chi(A_0)\pphi_{\lambda,\e}\|.
\end{align*}
The normalized eigenvectors $\chi(A_0)\pphi_{\lambda,\e}$ and $\chi(A_0)\pphi_{\lambda,\e}^\circledast$ are given by
\ben
\pphi_{\lambda,\e}=(P_{\lambda_\e,\e}\pphi_{\lambda,0},\pphi_{\lambda,0})^{-1}P_{\lambda_\e,\e}\pphi_{\lambda,0},
\quad
\pphi_{\lambda,\e}^\circledast=(P_{\lambda_\e,\e}^*\pphi_{\lambda,0},\pphi_{\lambda,0})^{-1}P_{\lambda_\e,\e}^*\pphi_{\lambda,0}.
\een
This implies that $\|\chi(A_0)\pphi_{\lambda,\e}^\circledast\|\|\chi(A_0)\pphi_{\lambda,\e}\|=\ord(1)$.
Since $|\lambda_\e-\lambda|=\ord(\e)$, one has $1-|\lambda_\e|^2=\ord(\e)$.
This shows $M_{\lambda,\e}(z)=\ord(\e)$ when $|z-\lambda_\e|\ge c$ for some $c>0$ independent of $\e$.

On the other hand, the Cauchy-Schwarz inequality with the normalization condition \eqref{eq:normalize-res-st} implies $\|\chi(A_0)\pphi_{\lambda,\e}^\circledast\|\|\chi(A_0)\pphi_{\lambda,\e}\|\ge1$.
The estimate \eqref{eq:esti-M-below} follows from this with
\ben
\left(\left|\frac{1-|\lambda_\e|^2}{z-\lambda_\e}\right|\right)\Big|_{z=\lambda_\e/|\lambda_\e|}
=\frac{(1+|\lambda_\e|)(1-|\lambda_\e|)}{1-|\lambda_\e|}
=1+|\lambda_\e|.
\een
\end{proof}

\subsection{Proof of Corollary~\ref{cor:res-tun}}
The proof provided below of Corollary~\ref{cor:res-tun} consists in the following elementary lemma shown by the triangular inequality.
\begin{lemma}\label{lem:triangular}
Let $v_1=v_1(\e)$ and $v_2=v_2(\e)$ be $\C^N$-valued functions of small $\e\ge0$ satisfying
\ben
\|v_j\|=1,\quad\|v_1(\e)-v_2(\e)\|\ge2-C\e,
\een
for some $\e$-independent constant $C>0$.
Then, one has
\ben
v_1(\e)=-v_2(\e)+\ord(\sqrt{\e})=\frac12(v_1(\e)-v_2(\e))+\ord(\sqrt{\e}).
\een
\end{lemma}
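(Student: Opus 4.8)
The plan is to exploit the parallelogram law, which converts the near-maximality of $\|v_1-v_2\|$ directly into a smallness estimate for $\|v_1+v_2\|$. Since $\|v_1\|=\|v_2\|=1$, one has the identity $\|v_1+v_2\|^2+\|v_1-v_2\|^2=2\|v_1\|^2+2\|v_2\|^2=4$, valid in any (complex) inner product space; equivalently, expanding, $\|v_1\pm v_2\|^2=2\pm2\re(v_1,v_2)$, so under the normalization the two squared norms always sum to $4$. This is the whole geometric content: two unit vectors whose difference is almost as long as possible must be almost antipodal.

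First I would substitute the hypothesis $\|v_1-v_2\|\ge 2-C\e$ into this identity to obtain
\ben
\|v_1+v_2\|^2=4-\|v_1-v_2\|^2\le 4-(2-C\e)^2=4C\e-C^2\e^2\le 4C\e,
\een
for $\e\ge0$ small enough (note the right-hand side is nonnegative, consistently with $\|v_1+v_2\|^2\ge0$). Taking square roots gives $\|v_1+v_2\|\le 2\sqrt{C}\,\sqrt{\e}=\ord(\sqrt\e)$, which is precisely the first asserted equality $v_1=-v_2+\ord(\sqrt\e)$. The loss of a square root—so that a hypothesis of size $\e$ yields a conclusion of size $\sqrt\e$—is intrinsic and is the only point worth flagging: the quantity the hypothesis controls is the \emph{squared} norm, and geometrically two unit vectors at squared distance $4-\delta$ subtend an angle whose supplement is of order $\sqrt\delta$, not $\delta$.

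The second equality is then purely algebraic bookkeeping. Rewriting the first equality as $v_2=-v_1+\ord(\sqrt\e)$ and substituting gives $\tfrac12(v_1-v_2)=\tfrac12 v_1-\tfrac12 v_2=\tfrac12 v_1+\tfrac12 v_1+\ord(\sqrt\e)=v_1+\ord(\sqrt\e)$, as claimed. I do not expect any genuine obstacle in this lemma; its role is merely to package the Cauchy–Schwarz/parallelogram mechanism so that the estimate \eqref{eq:esti-M-below} (operator norm of the discrepancy close to $2$, the unitary maximum) can be upgraded to the explicit asymptotic \eqref{eq:resonant-tunneling-explicit} in the proof of Corollary~\ref{cor:res-tun}.
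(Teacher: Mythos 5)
Your proof is correct and is essentially the same argument as the paper's: the paper expands $\|v_1\mp v_2\|^2=2\mp2\re(v_1,v_2)$ and deduces $\|v_1+v_2\|^2\le4C\e$, which is exactly your parallelogram-law computation in slightly different packaging. The final algebraic step for the second equality also matches.
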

\begin{proof}
By the condition $\|v_j\|=1$, one has
\ben
\|v_1-v_2\|^2=\|v_1\|^2+\|v_2\|^2-2\re(v_1,v_2)
=2-2\re(v_1,v_2).
\een
This with the inequality $\|v_1-v_2\|\ge2-C\e$ implies
\ben
1+\re(v_1,v_2)\le 2C\e.
\een
Then one obtains
\ben
\|v_1+v_2\|^2=2(1+\re(v_1,v_2))\le4C\e.
\een
This shows $v_1=-v_2+\ord(\sqrt{\e})$.
The formula $v_1=2^{-1}(v_1-v_2)+\ord(\sqrt{\e})$ follows immediately.
\end{proof}
We also prepare the following lemma on the symmetry of the pair of outgoing and incoming resonant states.

\begin{lemma}\label{eq:sym-out-inc-res-st}
Let $\pphi_{\lambda,\e}$ and $\pphi_{\lambda,\e}^\circledast$ be a pair of outgoing and incoming resonant state associated with a resonance $\lambda_\e(\lambda)$ for $\lambda\in\ope{Res}(U(0))\cap\mathbb{S}^1$ such that $|\lambda_\e(\lambda)|<1$ holds for positive small $\e$. 
For any $J\subset\{1,2,\ldots,N\}$, one has
\ben
\frac{\bigl\|\mathbbm{1}_{\Omega_J^\out}\chi(\Omega^\out)\pphi_{\lambda,\e}\bigr\|}{\left\|\chi(\Omega^\out)\pphi_{\lambda,\e}\right\|}
=\frac{\bigl\|\mathbbm{1}_{\Omega_J^\inc}\chi(\Omega^\inc)\pphi_{\lambda,\e}^\circledast\bigr\|}{\|\chi(\Omega^\inc)\pphi_{\lambda,\e}^\circledast\|}+\ord(\sqrt{\e}).
\een
%$n\in\{1,2,\ldots,N\}$, one has \eqref{eq:dist-res-state}:
%\ben
%\frac{|\pphi_{\lambda,\e}(\omega_n^\out)|}{\|\chi(\Omega^\out)\pphi_{\lambda,\e}\|}
%=\frac{|\pphi_{\lambda,\e}^\circledast(\omega_n^\inc)|}{\|\chi(\Omega^\inc)\pphi_{\lambda,\e}^\circledast\|}+\ord(\e).
%\een
\end{lemma}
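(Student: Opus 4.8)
The plan is to recognize the two normalized vectors in the statement as the \emph{output} and \emph{input} directions of the rank-one operator $M_{\lambda,\e}$ from Theorem~\ref{thm:non-resonant}, and then to exploit the fact that at the point $z_\e:=\lambda_\e/|\lambda_\e|$ its operator norm is almost maximal. Write
\ben
v:=\frac{\chi(\Omega^\out)\pphi_{\lambda,\e}}{\|\chi(\Omega^\out)\pphi_{\lambda,\e}\|},\qquad
u:=\frac{\chi(\Omega^\inc)\pphi^\circledast_{\lambda,\e}}{\|\chi(\Omega^\inc)\pphi^\circledast_{\lambda,\e}\|};
\een
these are well-defined unit vectors because $|\lambda_\e|<1$ forces both tail-restrictions to be non-zero by Propositions~\ref{prop:matrix-reduction}(\ref{enu:red4}) and \ref{prop:matrix-reduction2}. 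By the definition \eqref{eq:def-M_lam} one has $M_{\lambda,\e}(z_\e)u=\beta v$ with $\beta=\frac{\lambda_\e^2}{z_\e-\lambda_\e}\|\chi(\Omega^\inc)\pphi^\circledast_{\lambda,\e}\|\,\|\chi(\Omega^\out)\pphi_{\lambda,\e}\|$, so that $|\beta|=\|M_{\lambda,\e}(z_\e)\|$. It therefore suffices to prove the vector-level relation $v=\zeta\,\Sigma(0,z_\e)u+\ord(\sqrt\e)$ with $|\zeta|=1+\ord(\e)$, because the $J$-statement then follows by applying $\mathbbm{1}_{\Omega_J^\out}$ and taking norms.

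First I would feed $u$ into \eqref{eq:discrepancy-thm} evaluated at $z=z_\e$. For $\lambda'\in\ope{Res}(U(0))\cap\mathbb{S}^1$ with $\lambda'\neq\lambda$ the point $z_\e\to\lambda$ stays at a distance bounded below from $\lambda'_\e$, so $M_{\lambda',\e}(z_\e)=\ord(\e)$ by the estimate in Theorem~\ref{thm:non-resonant}; together with the $\ord(\e)$ remainder in \eqref{eq:discrepancy-thm} this gives
\ben
\Sigma(\e,z_\e)-\Sigma(0,z_\e)=M_{\lambda,\e}(z_\e)+\ord_{\C^{\Omega^\inc}\to\C^{\Omega^\out}}(\e).
\een
Setting $v_1:=\Sigma(\e,z_\e)u$ and $v_2:=\Sigma(0,z_\e)u$, both are unit vectors since $\Sigma(\e,z_\e)$ and $\Sigma(0,z_\e)$ are isometric on $\mathbb{S}^1$ (Proposition~\ref{prop:gen-ef}), and
\ben
\|v_1-v_2\|=\|M_{\lambda,\e}(z_\e)u\|+\ord(\e)=|\beta|+\ord(\e).
\een
The lower bound \eqref{eq:esti-M-below} gives $|\beta|\ge 2-C\e$, while $\|v_1-v_2\|\le 2$ forces $|\beta|=2+\ord(\e)$; in particular $\|v_1-v_2\|\ge 2-C\e$.

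Next I would invoke the elementary Lemma~\ref{lem:triangular} for $v_1,v_2$, which yields $v_1=-v_2+\ord(\sqrt\e)$, hence
\ben
\beta v=M_{\lambda,\e}(z_\e)u=v_1-v_2+\ord(\e)=-2v_2+\ord(\sqrt\e).
\een
Dividing by $\beta$ and using $|2/\beta|=1+\ord(\e)$ produces the desired identity $v=\zeta\,\Sigma(0,z_\e)u+\ord(\sqrt\e)$ with $\zeta=-2/\beta$, $|\zeta|=1+\ord(\e)$. Finally, $\Sigma(0,z_\e)$ is diagonal with unimodular entries by \eqref{eq:scatt-e=0}, so $\mathbbm{1}_{\Omega_J^\out}\Sigma(0,z_\e)=\Sigma(0,z_\e)\mathbbm{1}_{\Omega_J^\inc}$ and $\|\mathbbm{1}_{\Omega_J^\out}\Sigma(0,z_\e)u\|=\|\mathbbm{1}_{\Omega_J^\inc}u\|$ for every $J$. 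Applying $\mathbbm{1}_{\Omega_J^\out}$ to the vector identity and taking norms gives $\|\mathbbm{1}_{\Omega_J^\out}v\|=\|\mathbbm{1}_{\Omega_J^\inc}u\|+\ord(\sqrt\e)$, which is exactly the assertion once the definitions of $u$ and $v$ are unfolded.

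I expect the only delicate point to be the bookkeeping of error terms at the moving point $z_\e$: one must verify that the expansion \eqref{eq:discrepancy-thm} is uniform enough that replacing a fixed $z$ by $z_\e=\lambda_\e/|\lambda_\e|$ is legitimate, and that the contributions $M_{\lambda',\e}(z_\e)$ of the remaining resonances genuinely remain $\ord(\e)$ there. It is also essential to keep the estimates at the level of vectors rather than squaring and summing componentwise, since the latter would degrade the $\ord(\sqrt\e)$ error; carrying the diagonal isometry $\Sigma(0,z_\e)$ through before taking norms is what preserves the correct order.
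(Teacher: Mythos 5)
Your proof is correct, and its first half is the same as the paper's: both arguments extract the near-maximality of the rank-one operator $M_{\lambda,\e}$ at $z_\e=\lambda_\e/|\lambda_\e|$ from \eqref{eq:discrepancy-thm} together with the lower bound \eqref{eq:esti-M-below}, and both invoke Lemma~\ref{lem:triangular} to obtain the vector-level identity (the paper's \eqref{eq:sigma-m} is exactly your $v=\zeta\,\Sigma(0,z_\e)u+\ord(\sqrt{\e})$ after rearrangement). Where you genuinely diverge is the passage from that identity to the $J$-wise norm statement. The paper feeds the truncated datum $\beta_J^\inc=\mathbbm{1}_{\Omega_J^\inc}\til{\alpha}^\inc$ back into $\Sigma(\e,\til{z})$, uses unitarity of the scattering matrix to compute the norm of the output, and lands on the quadratic relation $(1-4(\gamma_J^\inc)^2)(\gamma_J^\out)^2=(1-4(\gamma_J^\inc)^2)(\gamma_J^\inc)^2+\ord(\sqrt{\e})$, which it then solves with a case split at $\gamma_J^\inc=1/2$ (passing to $J^c$). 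You instead apply $\mathbbm{1}_{\Omega_J^\out}$ directly to the vector identity and use the intertwining relation $\mathbbm{1}_{\Omega_J^\out}\Sigma(0,z)=\Sigma(0,z)\mathbbm{1}_{\Omega_J^\inc}$, valid because \eqref{eq:scatt-e=0} is diagonal with unimodular entries, so the projected norm transfers in one line. Your route is shorter, requires no case analysis, and is in fact more robust: the paper's division by $1-4(\gamma_J^\inc)^2$ degrades the $\ord(\sqrt{\e})$ error when $\gamma_J^\inc$ happens to lie close to (but not exactly at) $1/2$, a regime your argument never meets, since your estimate is uniform in $J$ and in the mass distribution of the resonant state. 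The two points you flag as delicate are the right ones and both check out: the remainder in \eqref{eq:discrepancy-thm} is controlled uniformly over $z\in\mathbb{S}^1$ (via Lemma~\ref{lem:proj-estimate} and Proposition~\ref{prop:res-esti}), so evaluation at the moving point $z_\e$ is legitimate --- the paper itself does this --- and $M_{\lambda',\e}(z_\e)=\ord(\e)$ for $\lambda'\neq\lambda$ because $|z_\e-\lambda'_\e|$ stays bounded below while $1-|\lambda'_\e|^2=\ord(\e)$.
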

\begin{proof}
%If either $J=\emptyset$ or the existence of $\e_0>0$ such that $|\lambda_\e(\lambda)|=1$ $(0\le\forall\e\le\e_0)$ holds, the lemma is trivial.
%In the following, we assume $J\neq\emptyset$ and $\lambda$
Fix $\lambda\in\ope{Res}(U(0))\cap\mathbb{S}^1$. 
Put $\til{\alpha}^\inc=\til{\alpha}^\inc(\lambda,\e):=\|\chi(\Omega^\inc)\pphi_{\lambda,\e}^\circledast\|^{-1}\chi(\Omega^\inc)\pphi_{\lambda,\e}^\circledast$, and $\til{z}=\til{z}(\lambda,\e):=\lambda_\e/|\lambda_\e|$.
This normalized function $\til{\alpha}^\inc \in\C^{\Omega^\inc}$ is the maximizer of the action of the rank one operator $M_{\lambda,\e}(\til{z})$.
The estimate \eqref{eq:esti-M-below} gives
\be\label{eq:esti-below-max}
\left\|M_{\lambda,\e}\left(\til{z}\right)\right\|_{\C^{\Omega^\inc}\to\C^{\Omega^\out}}
=\left\|M_{\lambda,\e}\left(\til{z}\right)\til{\alpha}^\inc\right\|_{\C^{\Omega^\out}}\ge2- C\e.
\ee
We also have an estimate from above
\ben
\|M_{\lambda,\e}(\til{z})\|
=\|\Sigma(\e,\til{z})-\Sigma(0,\til{z})+\ord(\e)\|
\le 2+\ord(\e).
\een
This implies the existence of the constant $c_{J,\e}$ of modulus one such that
\be\label{eq:const-cJ}
M_{\lambda,\e}\left(\til{z}\right)\til{\alpha}^\inc
=2c_{J,\e}\frac{\chi(\Omega^\out)\pphi_{\lambda,\e}}{\|\chi(\Omega^\out)\pphi_{\lambda,\e}\|},
\ee
since the range of $M_{\lambda,\e}$ is spanned by $\chi(\Omega^\out)\pphi_{\lambda,\e}$.

For any $z\in\mathbb{S}^1$ and $\e\ge0$, one has $\|\Sigma(\e,z)\til{\alpha}^\inc\|=1$ since the scattering matrix is isometric.
Then Lemma~\ref{lem:triangular} with \eqref{eq:esti-below-max} and the formula
\ben
\Sigma(\e,\til{z})\til{\alpha}^\inc-\Sigma(0,\til{z})\til{\alpha}^\inc=M_{\lambda,\e}\left(\til{z}\right)\til{\alpha}^\inc+\ord(\e),
\een
gives
\ben
\Sigma(\e,\til{z})\til{\alpha}^\inc
=-\Sigma(0,\til{z})\til{\alpha}^\inc+\ord(\sqrt{\e})=\frac12M_{\lambda,\e}\left(\til{z}\right)\til{\alpha}^\inc+\ord(\sqrt{\e}).
\een
%This implies that there exists $\theta=\theta(\e)\in\R$ such that
One deduces from this with \eqref{eq:const-cJ} the formula
\be\label{eq:sigma-m}
\Sigma(0,\til{z})\til{\alpha}^\inc
%\chi(\Omega^\inc)\pphi_{\lambda,\e}^\circledast
=-\frac12 M_{\lambda,\e}(\til{z})\til{\alpha}^\inc+\ord(\sqrt{\e})
%=
%-\frac{\lambda_\e^2\|\chi(\Omega^\inc)\pphi_{\lambda,\e}^\circledast\|}{2(\til{z}-\lambda_\e)}\chi(\Omega^\out)\pphi_{\lambda,\e}+\ord(\e)
=%e^{i\theta}
-c_{J,\e}\frac{\chi(\Omega^\out)\pphi_{\lambda,\e}}{\|\chi(\Omega^\out)\pphi_{\lambda,\e}\|}+\ord(\sqrt{\e}).
%\quad
%\|\chi(\Omega^\out)\pphi_{\lambda,\e}\|\|\chi(\Omega^\inc)\pphi_{\lambda,\e}^\circledast\|=2\left|\frac{\til{z}-\lambda_\e}{\lambda_\e^2}\right|+\ord(\e).
\ee

Put $\beta_J^\inc:=\mathbbm{1}_{\Omega_J^\inc}\til{\alpha}^\inc$ and $\gamma_J^\inc:=\|\beta_J^\inc\|=\|\mathbbm{1}_{\Omega_J^\inc}\chi(\Omega^\inc)\pphi_{\lambda,\e}^\circledast\|/\|\chi(\Omega^\inc)\pphi_{\lambda,\e}^\circledast\|$
%\ben
%\beta_J^\inc:=\mathbbm{1}_{\Omega_J^\inc}\til{\alpha}^\inc
%=\frac{\mathbbm{1}_{\Omega_J^\inc}\chi(\Omega^\inc)\pphi_{\lambda,\e}^\circledast}{\|\chi(\Omega^\inc)\pphi_{\lambda,\e}^\circledast\|},\quad
%\gamma_J^\inc:=\|\beta_J^\inc\|
%=\frac{\bigl\|\mathbbm{1}_{\Omega_J^\inc}\chi(\Omega^\inc)\pphi_{\lambda,\e}^\circledast\bigr\|}{\|\chi(\Omega^\inc)\pphi_{\lambda,\e}^\circledast\|},
%\een
for some $J\subset\{1,2,\ldots,N\}$.
Then one has
%Under the assumption \eqref{eq:cond-res-tun} (see also Remark~\ref{rem:equiv-cond-res-tun}), one has
\ben
(\beta_J^\inc,\chi(\Omega^\inc)\pphi_{\lambda,\e}^\circledast)=
\frac{\|\mathbbm{1}_{\Omega_J^\inc}\chi(\Omega^\inc)\pphi_{\lambda,\e}^\circledast\|^2}{\|\chi(\Omega^\inc)\pphi_{\lambda,\e}^\circledast\|}
=(\gamma_J^\inc)^2\|\chi(\Omega^\inc)\pphi_{\lambda,\e}^\circledast\|,
\een
and consequently
\ben
M_{\lambda,\e}(\til{z})\beta_J^\inc
=(\gamma_J^\inc)^2M_{\lambda,\e}(\til{z})\til{\alpha}^\inc
=-2(\gamma_J^\inc)^2\Sigma(0,\til{z})\til{\alpha}^\inc+\ord(\sqrt{\e}).
\een
Since $\Sigma(0,\til{z})$ is a diagonal matrix, one also has 
\ben
\Sigma(0,\til{z})\beta_J^\inc
=\mathbbm{1}_{\Omega_J^\out}\Sigma(0,\til{z})\til{\alpha}^\inc+\ord(\e).
\een
Then we obtain
\be\label{eq:res-tun-general-gamma}
\Sigma(\e,\til{z})\beta_J^\inc
=(\Sigma(0,\til{z})+M_{\lambda,\e}(\til{z}))\beta_J^\inc+\ord(\e)
=(-2(\gamma_J^\inc)^2+\mathbbm{1}_{\Omega_{J}^\out})\Sigma(0,\til{z})\til{\alpha}^\inc+\ord(\sqrt{\e}).
\ee
The unitarity of $\Sigma(\e,\til{z})$ implies that the norm of this vector is $\gamma_J^\inc$.
Formula~\eqref{eq:sigma-m} gives the equality
\begin{align*}
(\gamma_J^\inc)^2
&=\left\|(-2(\gamma_J^\inc)^2+\mathbbm{1}_{\Omega_J^\out})c_{J,\e}\frac{\chi(\Omega^\out)\pphi_{\lambda,\e}}{\|\chi(\Omega^\out)\pphi_{\lambda,\e}\|}+\ord(\sqrt{\e})\right\|^2
\\&=(1-2(\gamma_J^\inc)^2)^2(\gamma_J^\out)^2+(-2(\gamma_J^\inc)^2)^2(1-(\gamma_J^\out)^2)+\ord(\sqrt{\e}),
\end{align*} 
with $\gamma_J^\out:=\|\mathbbm{1}_{\Omega_J^\out}\chi(\Omega^\out)\pphi_{\lambda,\e}\|/\|\chi(\Omega^\out)\pphi_{\lambda,\e}\|$.
This equality is equivalent to
\ben
(1-4(\gamma_J^\inc)^2)(\gamma_J^\out)^2
=(1-4(\gamma_J^\inc)^2)(\gamma_J^\inc)^2+\ord(\sqrt{\e}).
\een
When $\gamma_J^\inc\neq1/2$, one obtains $\gamma_J^\inc=\gamma_J^\out+\ord(\sqrt{\e})$.
Otherwise, we apply the above argument for $J^c:=\{1,2,\ldots, N\}\setminus J$ $(\gamma_{J^c}^\inc=\sqrt{3}/2)$.
\end{proof}

\begin{proof}[Proof of Corollary~\ref{cor:res-tun}]
\underline{Proof of \eqref{eq:non-resonant-transm}}
Fix $z\in\mathbb{S}^1\setminus\ope{Res}(U(0))$, a non-empty subset $J\subsetneq\{1,2,\ldots,N\}$, and a normalized incoming data $\alpha^\inc\in\C^{\Omega^\inc}$ with $\ope{supp}\,\alpha^\inc\subset\Omega^\inc_J$.
The estimate $\|\Sigma(\e,z)-\Sigma(0,z)\|=\ord(\e)$ implies the approximation
\ben
\Sigma(\e,z)\alpha^\inc=\Sigma(0,z)\alpha^\inc+\ord(\e).
\een
Since $\Sigma(0,z)$ is a diagonal matrix, $\Sigma(0,z)\alpha^\inc=\ord(\e)$ on $\Omega_{J^c}^\out$.

\noindent\underline{Proof of \eqref{eq:resonant-tunneling-explicit}}
Fix $\lambda\in\ope{Res}(U(0))\cap\mathbb{S}^1$. 
We use the same notation as in the proof of Lemma~\ref{eq:sym-out-inc-res-st}.
The incoming data $\alpha_J^\inc\in\C^{\Omega^\inc}$ defined by \eqref{eq:alpha-Js} satisfies
\ben
\alpha_J^\inc=\frac{\|\chi(\Omega^\inc)\pphi_{\lambda,\e}^\circledast\|}{\|\mathbbm{1}_{\Omega^\inc_J}\chi(\Omega^\inc)\pphi_{\lambda,\e}^\circledast\|}\beta_J,
\een
where $\beta_J$ is defined right after \eqref{eq:sigma-m}.
Under the assumption \eqref{eq:cond-res-tun} (see also Remark~\ref{rem:equiv-cond-res-tun}), one has
\ben
\gamma_J^\inc=\|\beta_J^\inc\|=\frac1{\sqrt{2}}.
\een
The formula \eqref{eq:res-tun-general-gamma} is rewritten as
\ben
\Sigma(\e,\til{z})\alpha_J^\inc=-\sqrt{2}\mathbbm{1}_{\Omega_{J^c}^\out}\Sigma(0,\til{z})\til{\alpha}^\inc+\ord(\sqrt{\e}).
\een
Formula~\eqref{eq:sigma-m} and the unitarity of the scattering matrix $\Sigma(\e,\til{z})$ shows that the right-hand-side is asymptotic to $c_{J,\e}\alpha_{J^c}^\out$.

\noindent\underline{Proof of \eqref{eq:width-of-peak}}
The width of the resonant peak is computed by using
\ben
M_{\lambda,\e}(e^{i\theta}\til{z})\alpha_J^\inc
=\frac{\til{z}-\lambda_\e}{e^{i\theta}\til{z}-\lambda_\e}M_{\lambda,\e}(\til{z})\alpha_J^\inc,
\een
and $\Sigma(0,e^{i\theta}\til{z})=\Sigma(0,\til{z})+\ord(|\theta|)$ $(|\theta|\ll1).$
Then the condition to have the half-height, that is, $T(J,\alpha_J^\inc,\e,e^{i\theta}\til{z})=1/2$, is approximated by
\begin{align*}
(1-|\lambda_\e|)^2
=\frac{|e^{i\theta}-|\lambda_\e||^2}{2}+\ord(\e)
=\frac12\left((1-|\lambda_\e|)^2+|\theta|^2\right)(1+\ord(|\theta|^2))+\ord(\e).
\end{align*}
This with the estimate $1-|\lambda_\e|\le|\lambda-\lambda_\e|=\ord(\e)$ gives the result.
\end{proof}

\subsection{Proof of Theorem~\ref{thm:comfortability}}
Assume for simplicity that $\alpha^\inc$ is normalized: $\|\alpha^\inc\|=1$.
As in the proof of Theorem~\ref{thm:non-resonant}, let $\pphi_{\alpha^\inc,\e,z}$ be the generalized eigenfunction for $U(\e)$ defined by  formulae \eqref{eq:local-resolvent} and \eqref{eq:construct-gen-ef}.
According to the definition there, one has
\ben
\chi(A_0)\pphi_{\alpha^\inc,\e,z}=u(\alpha^\inc,\e,z)=-(\til{R}(\e,z)+Q(\e,z))\chi(A_0)U(\e)\chi(\Omega^\inc)\alpha^\inc.
\een
Then we diduce
\begin{align*}
&\chi(A_0)\pphi_{\alpha^\inc,\e,z}-\chi(A_0)\pphi_{\alpha^\inc,0,z}
\\&=\chi(A_0)\left((\til{R}(0,z)-\til{R}(\e,z))\chi(A_0)U(\e)+\til{R}(0,z)\chi(A_0)(U(0)-U(\e))\right)\chi(\Omega^\inc)\alpha^\inc
\\&\qquad-\chi(A_0)Q(\e,z)\chi(A_0)U(\e)\chi(\Omega^\inc)\alpha^\inc,
\end{align*}
from the orthogonality of $\chi(A_0)U(\e)\chi(\Omega^\inc)\alpha^\inc$ to the eigenspace associated with each eigenvalue of $U(\e)|_{A_0}$ of modulus one (Proposition~\ref{prop:gen-ef}).
The first term is estimated as $\ord(\e)$ by using the estimates proven in Proposition~\ref{prop:res-esti} and Lemma~\ref{lem:proj-estimate}.
The second term is rewritten as
\ben
-\chi(A_0)Q(\e,z)\chi(A_0)U(\e)\chi(\Omega^\inc)\alpha^\inc
=\sum_{\lambda\in\ope{Res}(U(0))\cap\mathbb{S}^1}\frac1{z-\lambda_\e(\lambda)}\chi(A_0)P_{\lambda_\e(\lambda),\e}\chi(A_0)U(\e)\chi(\Omega^\inc)\alpha^\inc.
\een
As shown in \eqref{eq:proj-exp}, one has $P_{\lambda_\e(\lambda),\e}f=(f,\chi(A_0)\pphi_{\lambda,\e}^\circledast)\pphi_{\lambda,\e}$.
Then for each $\lambda\in\ope{Res}(U(0))\cap\mathbb{S}^1$ with $|\lambda_\e|<1$ for small $\e>0$, one obtains
\begin{align*}
\chi(A_0)P_{\lambda_\e(\lambda),\e}\chi(A_0)U(\e)\chi(\Omega^\inc)^*\alpha^\inc
&=(\alpha^\inc,\chi(\Omega^\inc)U(\e)^*\chi(A_0)\pphi_{\lambda,\e}^\circledast)\chi(A_0)\pphi_{\lambda,\e}
\\&=\lambda_\e(\alpha^\inc,\chi(\Omega^\inc)\pphi_{\lambda,\e}^\circledast)\chi(A_0)\pphi_{\lambda,\e}.
\end{align*}
As a conclusion, we have obtained
\be\label{eq:disc-inside}
\chi(A_0)\pphi_{\alpha^\inc,\e,z}-\chi(A_0)\pphi_{\alpha^\inc,0,z}
=\sum_{\lambda\in\ope{Res}(U(0))\cap\mathbb{S}^1}\frac{\lambda_\e(\lambda)(\alpha^\inc,\chi(\Omega^\inc)\pphi_{\lambda,\e}^\circledast)}{z-\lambda_\e(\lambda)}\chi(A_0)\pphi_{\lambda,\e}+\ord(\e).
\ee

Let us estimate each term of the right-hand-side of \eqref{eq:disc-inside}.
The Cauchy-Schwarz inequality with \eqref{eq:esti-omega-in} implies
\be\label{eq:non-res-esti-comf}
\left|(\alpha^\inc,\chi(\Omega^\inc)\pphi_{\lambda,\e}^\circledast)\right|
\le\|\alpha^\inc\|\|\chi(\Omega^\inc)\pphi_{\lambda,\e}^\circledast\|
=\sqrt{|\lambda_\e|^{-2}-1}\,\|\chi(A_0)\pphi_{\lambda,\e}^\circledast\|=\ord(\sqrt{\e}).
\ee
Then for any fixed $z\in\mathbb{S}^1\setminus\ope{Res}(U(0))$, one obtains
\ben
\left\|\chi(A_0)\pphi_{\alpha^\inc,\e,z}-\chi(A_0)\pphi_{\alpha^\inc,0,z}\right\|=\ord(\sqrt{\e}).
\een
This with the definition of the comfortability implies
\ben
\sqrt{\mc{E}(U(\e),\alpha^\inc,z)}=\left\|\chi(A_0)\pphi_{\alpha^\inc,\e,z}\right\|
\le \left\|\chi(A_0)\pphi_{\alpha^\inc,0,z}\right\|+\ord(\sqrt{\e}),
%+\left\|\chi(A_0)\pphi_{\alpha^\inc,\e,z}-\chi(A_0)\pphi_{\alpha^\inc,0,z}\right\|.
\een
and
\begin{align*}
\left|\sqrt{\mc{E}(U(\e),\alpha^\inc,z)}-\sqrt{\mc{E}(U(0),\alpha^\inc,z)}\right|
&=\left|\left\|\chi(A_0)\pphi_{\alpha^\inc,\e,z}\right\|-\left\|\chi(A_0)\pphi_{\alpha^\inc,0,z}\right\|\right|
\\&\le\left\|\chi(A_0)\pphi_{\alpha^\inc,\e,z}-\chi(A_0)\pphi_{\alpha^\inc,0,z}\right\|=\ord(\sqrt{\e}).
\end{align*}
By combining the above two estimates, we obtain
\begin{align*}
&\left|\mc{E}(U(\e),\alpha^\inc,z)-\mc{E}(U(0),\alpha^\inc,z)\right|
\\&=\left(\sqrt{\mc{E}(U(\e),\alpha^\inc,z)}+\sqrt{\mc{E}(U(0),\alpha^\inc,z)}\right)
\left|\sqrt{\mc{E}(U(\e),\alpha^\inc,z)}-\sqrt{\mc{E}(U(0),\alpha^\inc,z)}\right|
=\ord(\sqrt{\e}).
\end{align*}
This ends the proof of \eqref{eq:non-res-comf}.

Fix $\lambda\in\ope{Res}(U(0))\cap\mathbb{S}^1$ such that $|\lambda_\e(\lambda)|<1$ for positive small $\e>0$.
Then for $\til{\alpha}^\inc=\|\chi(\Omega^\inc)\pphi_{\lambda,\e}^\circledast\|^{-1}\chi(\Omega^\inc)\pphi_{\lambda,\e}^\circledast$, one has
\ben
(\til{\alpha}^\inc,\chi(\Omega^\inc)\pphi_{\lambda,\e})=\|\chi(\Omega^\inc)\pphi_{\lambda,\e}^\circledast\|.
\een
This with \eqref{eq:esti-omega-in} shows the estimate
\ben
\left\|\frac{\lambda_\e(\lambda)(\til{\alpha}^\inc,\chi(\Omega^\inc)\pphi_{\lambda,\e})}{\til{z}-\lambda_\e(\lambda)}\chi(A_0)\pphi_{\lambda,\e}\right\|
=\left|\lambda_\e^2\frac{1+|\lambda_\e|}{1-|\lambda_\e|}\right|^{1/2}\left\|\chi(A_0)\pphi_{\lambda,\e}^\circledast\right\|\left\|\chi(A_0)\pphi_{\lambda,\e}\right\|,
\een
where $\til{z}=\lambda_\e/|\lambda_\e|\in\mathbb{S}^1$.
By combining this identity with \eqref{eq:disc-inside} and \eqref{eq:non-res-esti-comf}, we obtain
\ben
\left\|\chi(A_0)\pphi_{\alpha^\inc,\e,z}-\chi(A_0)\pphi_{\alpha^\inc,0,z}\right\|
=\left|\lambda_\e^2\frac{1+|\lambda_\e|}{1-|\lambda_\e|}\right|^{1/2}\left\|\chi(A_0)\pphi_{\lambda,\e}^\circledast\right\|\left\|\chi(A_0)\pphi_{\lambda,\e}\right\|+\ord(\sqrt{\e}).
\een
The result follows from the estimate
\ben
\sqrt{\mc{E}(U(\e),\til{\alpha}^\inc,\til{z})}\ge\left|\left\|\chi(A_0)\pphi_{\alpha^\inc,\e,z}-\chi(A_0)\pphi_{\alpha^\inc,0,z}\right\|
-\sqrt{\mc{E}(U(0),\til{\alpha}^\inc,\til{z})}\right|,
\een
with the uniform estimate \eqref{eq:non-pert-comf} with respect to $\alpha^\inc$ with $\|\alpha^\inc\|=1$ and $z\in\mathbb{S}^1$ of $\mc{E}(U(0),\alpha^\inc,z)$.

\section*{Acknowledgements}
The author is supported by Grant-in-Aid for JSPS Fellows (Grant No. JP22KJ2364).
%The author is grateful to Kouichi Taira for fruitful discussions and to Hisashi Morioka, Etsuo Segawa, and Ryuta Ishikawa for motivating works.
The author would like to express sincere gratitude to Kouichi Taira for many fruitful discussions and to Hisashi Morioka, Etsuo Segawa, and Ryuta Ishikawa for inspiring work that motivated this research.

\appendix
\section{Computations for quantum walks on the line}\label{app:on-the-line}
We here show the computation for the results written in Section~\ref{sec:line}.
Recall (e.g., \cite[Section 5]{HMS}) that the identity $(\til{U}-z)\pphi=0$ holds if and only if 
\ben
\begin{pmatrix}(1\ 0)\pphi(x)\\(0\ 1)\pphi(x+1)\end{pmatrix}
=\mc{T}_{z}(x)\begin{pmatrix}(1\ 0)\pphi(x-1)\\(0\ 1)\pphi(x)\end{pmatrix}
\een
holds for every $x\in\Z$, where the matrix $\mc{T}_{z}(x)$ called \textit{transfer matrix} is defined by
\ben
\mc{T}_z(x)=\frac1{C(x)_{11}}\begin{pmatrix}z&-C(x)_{12}\\C(x)_{21}&z^{-1}\det C(x)\end{pmatrix}.
\een
Here, $C(x)_{jk}$ stands for the $(j,k)$-entry of $C(x)$.
Then the generalized eigenfunction $\pphi_z$ satisfying \eqref{eq:gen-eigen-tildeU} is computed by
\ben
\begin{pmatrix}(1\ 0)\pphi(x)\\(0\ 1)\pphi(x+1)\end{pmatrix}
=\mc{T}_{z}(x)\mc{T}_z(x-1)\cdots \mc{T}_z(0)\begin{pmatrix}(1\ 0)\pphi(-1)\\(0\ 1)\pphi(0)\end{pmatrix}
=\frac1z \mc{T}_{z}(x)\mc{T}_z(x-1)\cdots \mc{T}_z(0)\begin{pmatrix}1\\0\end{pmatrix},
\een
for $x\ge0$.
Put
\ben
\mc{T}_z(x,0):=\frac1z \mc{T}_{z}(x)\mc{T}_z(x-1)\cdots \mc{T}_z(0).
\een

In the double barrier problem, that is, under the condition $C(x)=I_2$ $(x\in\Z\setminus\{0,x_0\})$, one obtains 
\begin{align*}
&zC(0)_{11}C(x_0)_{11}\mc{T}_z(x_0,0)
\\&=
\begin{pmatrix}
z(z^{x_0}-C(x_0)_{12}C(0)_{21}z^{-x_0})&-C(0)_{12}z^{x_0}-C(x_0)_{12}(\det C(0))z^{-x_0}\\
C(x_0)_{21}z^{x_0}+C(0)_{21}(\det C(x_0))z^{-x_0}&(\det C(x_0)C(0))z^{-x_0-1}-C(0)_{12}C(x_0)_{21}z^{x_0-1}
\end{pmatrix},
\end{align*}
by a straightforward computation.
The transmission and reflection probabilities $T(z)$ and $R(z)=1-T(z)$ are given by
\ben
T(z)=\left|\frac1{\mathtt{a}(z)}\right|^2,\qquad
R(z)=\left|\frac{\mathtt{b}(z)}{\mathtt{a}(z)}\right|^2,
\een
where
\ben
\begin{pmatrix}\mathtt{a}(z)\\\mathtt{b}(z)\end{pmatrix}=
\begin{pmatrix}(1\ 0)\pphi(x_0)\\(0\ 1)\pphi(x_0+1)\end{pmatrix}
=\mc{T}(x_0,0)\begin{pmatrix}1\\0\end{pmatrix}.
\een
This with $z\in\mathbb{S}^1$ proves \eqref{eq:transm-DB}.
An equivalent condition for $z\in\C\setminus\{0\}$ to be a resonance is $\mathtt{a}(z)=0$.
This gives \eqref{eq:QC-DB}.

Let us prove \eqref{eq:arg-identity}.
It suffices to show 
\ben
-\frac{\det C(x_0)}{C(x_0)_{12}C(x_0)_{21}}=1-\frac{C(x_0)_{11}C(x_0)_{22}}{C(x_0)_{12}C(x_0)_{21}}>0
\een
We deduce
\ben
-\frac{C(x_0)_{11}C(x_0)_{22}}{C(x_0)_{12}C(x_0)_{21}}=-\frac{C(x_0)_{11}C(x_0)_{22}\overline{C(x_0)_{12}C(x_0)_{21}}}{|C(x_0)_{12}C(x_0)_{21}|^2}
=\left|\frac{C(x_0)_{11}C(x_0)_{21}}{C(x_0)_{12}C(x_0)_{21}}\right|^2\ge0
\een
from the mutual orthogonality of row vectors of the unitary matrix $C(x_0)$: $C(x_0)_{11}\overline{C(x_0)_{21}}=-C(x_0)_{12}\overline{C(x_0)_{22}}$.

In the case of triple barrier problem, we have
\begin{align*}
\mc{T}(x_1,0)=\frac1{zC(0)_{11}C(x_0)_{11}C(x_1)_{11}}\begin{pmatrix}\mathtt{a}(z)&*\\\mathtt{b}(z)&*\end{pmatrix},
\end{align*}
with $\mathtt{a}(z)$ and $\mathtt{b}(z)$ used in \eqref{eq:transm-TB}:
\begin{align*}
\mathtt{a}(z)
&=z^{x_1+1}-C(x_0)_{21}C(x_1)_{12}z^{2x_0-x_1+1}
\\&\qquad -C(0)_{21}C(x_0)_{12}z^{x_1-2x_0+1}-C(0)_{21}C(x_1)_{12}\det(C(x_0))z^{-x_1+1}
\\
\mathtt{b}(z)
&=
C(x_1)_{21}z^{x_1}+C(x_0)_{21}(\det C(x_1))z^{2x_0-x_1}-C(0)_{21}C(x_0)_{12}C(x_1)_{21}z^{x_1-2x_0}
\\&\qquad+C(0)_{21}(\det C(x_0)C(x_1))z^{-x_1}.
\end{align*}

\section{Elementary computations}
\label{app:elementary}
We prove the identity \eqref{eq:elementary-pw}:
\ben
\sum_{k=0}^{N-1}\frac{\mu^{kp}}{z-c\mu^k}
=\frac{Nc^{N-p}z^{p-1}}{z^N-c^N},
\qquad(c>0,\ p=1,2,\ldots,N)
\een
where $\mu=\exp(2i\pi/N)$.

\begin{proof}
Note that one has $\prod_{k=0}^{N-1}(z-c\mu^k)=z^N-c^N$, and consequently
\ben
\sum_{k=0}^{N-1}\frac{\mu^{kp}}{z-c\mu^k}
=\frac1{z^N-c^N}\sum_{k=0}^{N-1}\mu^{kp}\prod_{\substack{j\neq k\\0\le j\le N}}(z-c\mu^j).
\een
The product of this formula is expanded as
\ben
\prod_{\substack{j\neq k\\0\le j\le N}}(z-c\mu^j)
=\sum_{s=0}^{N-1}z^{N-s-1}(-c\mu^k)^s\left(\prod_{1\le l_1<l_2<\cdots<l_s\le N-1}\mu^{\sum_{q=1}^sl_q}\right).
\een
Note that the last factor is independent of $k$, and one obtains
\ben
\sum_{k=0}^{N-1}\frac{\mu^{kp}}{z-c\mu^k}
=\frac1{z^N-c^N}
\sum_{s=0}^{N-1}(-c)^sz^{N-s-1}\left(\prod_{1\le l_1<l_2<\cdots<l_s\le N-1}\mu^{\sum_{q=1}^sl_q}\right)
\sum_{k=0}^{N-1}\mu^{k(s+p)}.
\een
Recall that one has $\sum_{k=0}^{N-1}\mu^{k(s+p)}=N\til{\dl}_{N,s+p}$ since $\mu^{k(s+p)}$ is a root of the equation $z^N-1=(z-1)(\sum_{k=0}^{N-1}z^k)=0$.
This ends the proof. 
\end{proof}

\bibliographystyle{plain}
\bibliography{QW}

\end{document}